\def\eqref#1{equation~\ref{#1}}
\def\1{\bm{1}}
\DeclareMathAlphabet{\mathsfit}{\encodingdefault}{\sfdefault}{m}{sl}
\SetMathAlphabet{\mathsfit}{bold}{\encodingdefault}{\sfdefault}{bx}{n}
\DeclareMathOperator*{\argmax}{arg\,max}
\DeclareMathOperator*{\argmin}{arg\,min}
\newtheorem{definition}{Definition}
\newtheorem{proposition}{Proposition}
\newcommand{\fig}[1]{Fig.~\ref{#1}}
\newcommand{\eq}[1]{Eq.~(\ref{#1})}
\newcommand{\tb}[1]{Tab.~\ref{#1}}
\newcommand{\ap}[1]{Appendix~\ref{#1}}
\newcommand{\prop}[1]{Proposition~\ref{#1}}
\newcommand{\alg}[1]{Algo.~\ref{#1}}
\newcommand{\defi}[1]{Definition~\ref{#1}}
\DeclareMathOperator{\rl}{RL}
\DeclareMathOperator{\irl}{IRL} 
\newcommand*{\dif}{\mathop{}\!\mathrm{d}}
\newcommand{\bbE}{\ensuremath{\mathbb{E}}} 
\newcommand{\bbR}{\ensuremath{\mathbb{R}}} 
\newcommand{\caA}{\ensuremath{\mathcal{A}}} 
\newcommand{\caS}{\ensuremath{\mathcal{S}}} 
\newcommand{\iter}[2]{{#1}^{(#2)}}
\newcommand{\pii}{{\pi^{(i)}}}
\newcommand{\piitheta}{\pi_{\theta}^{(i)}}
\newcommand{\piiE}{{\pi_E^{(i)}}}
\newcommand{\pini}{{\pi^{(-i)}}}
\newcommand{\piniE}{{\pi_E^{(-i)}}}
\newcommand{\rewardi}{\ensuremath{r^{(i)}(s, a^{(i)}, a^{(-i)}})} 
\title{Multi-Agent Interactions Modeling with Correlated Policies}
\author{Minghuan Liu$^{1}$, Ming Zhou$^1$, Weinan Zhang $^{1}$, Yuzheng Zhuang$^{2}$, Jun Wang$^{2}$, Wulong Liu$^{2}$,  Yong Yu$^1$ \\
\large{$^1$ Shanghai Jiaotong University, $^2$ Huawei Noah's Ark Lab}\\
\{minghuanliu, mingak, wnzhang, yyu\}@sjtu.edu.cn, \{zhuangyuzheng, w.j, liuwulong\}@huawei.com}
\begin{document}
\maketitle

\begin{abstract}
In multi-agent systems, complex interacting behaviors arise due to the high correlations among agents. However, previous work on modeling multi-agent interactions from demonstrations is primarily constrained by assuming the independence among policies and their reward structures. 
In this paper, we cast the multi-agent interactions modeling problem into a multi-agent imitation learning framework with explicit modeling of correlated policies by approximating opponents’ policies, which can recover agents' policies that can regenerate similar interactions. Consequently, we develop a Decentralized Adversarial Imitation Learning algorithm with Correlated policies (CoDAIL), which allows for decentralized training and execution. Various experiments demonstrate that CoDAIL can better regenerate complex interactions close to the demonstrators and outperforms state-of-the-art multi-agent imitation learning methods. Our code is available at \url{https://github.com/apexrl/CoDAIL}.
\end{abstract}


\section{Introduction}

Modeling complex interactions among intelligent agents from the real world is essential for understanding and creating intelligent multi-agent behaviors, which is typically formulated as a multi-agent learning (MAL) problem in multi-agent systems. When the system dynamics are agnostic and non-stationary due to the adaptive agents with implicit goals, multi-agent reinforcement learning (MARL) is the most commonly used technique for MAL. MARL has recently drawn much attention and achieved impressive progress on various non-trivial tasks, such as multi-player strategy games~\citep{openai2018openaifive,jaderberg2018human}, traffic light control~\citep{chu2019multi}, taxi-order dispatching~\citep{li2019efficient} etc.

A central challenge in MARL is to specify a good learning goal, as the agents' rewards are correlated and thus cannot be maximized independently~\citep{bu2008comprehensive}. Without explicit access to the reward signals, imitation learning could be the most intuitive solution for learning good policies directly from demonstrations. Conventional solutions such as behavior cloning (BC)~\citep{pomerleau1991efficient} learn the policy in a supervised manner by requiring numerous data while suffering from compounding error~\citep{ross2010efficient,ross2011reduction}. Inverse reinforcement learning (IRL)~\citep{ng2000algorithms,russell1998learning} alleviates these shortcomings by recovering a reward function but is always expensive to obtain the optimal policy due to the forward reinforcement learning procedure in an inner loop. Generative adversarial imitation learning (GAIL)~\citep{ho2016generative} leaves a better candidate for its model-free structure without compounding error, which is highly effective and scalable. However, real-world multi-agent interactions could be much challenging to imitate because of the strong correlations among adaptive agents' policies and rewards. Consider if a football coach wants to win the league, he must make targeted tactics against various opponents, in addition to the situation of his team. Moreover, the multi-agent environment tends to give rise to more severe compounding errors with more expensive running costs. 

Motivated by these challenges, we investigate the problem of modeling complicated multi-agent interactions from a pile of off-line demonstrations and recover their on-line policies, which can regenerate analogous multi-agent behaviors. Prior studies for multi-agent imitation learning typically limit the complexity in demonstrated interactions by assuming isolated reward structures~\citep{barrett2017making,le2017coordinated,lin2014multi,waugh2013computational} and independence in per-agent policies that overlook the high correlations among agents~\citep{song2018multi,yu2019multi}. In this paper, we cast the multi-agent interactions modeling problem into a multi-agent imitation learning framework with correlated policies by approximating opponents’ policies, in order to reach inaccessible opponents' actions due to concurrently execution of actions among agents when making decisions. Consequently, with approximated opponents model, we develop a Decentralized Adversarial Imitation Learning algorithm with Correlated policies (CoDAIL) suitable for learning correlated policies under our proposed framework, which allows for decentralized training and execution. We prove that our framework treats the demonstrator interactions as one of $\epsilon$-Nash Equilibrium ($\epsilon$-NE) solutions under the recovered reward. 

In experiments, we conduct multi-dimensional comparisons for both the reward gap between learned agents and demonstrators, along with the distribution divergence between demonstrations and regenerated interacted trajectories from learned policies. Furthermore, the results reveal that CoDAIL can better recover correlated multi-agent policy interactions than other state-of-the-art multi-agent imitation learning methods in several multi-agent scenarios. We further illustrate the distributions of regenerated interactions, which indicates that CoDAIL yields the closest interaction behaviors to the demonstrators.

\section{Preliminaries}
\label{pre}
\subsection{Markov Game and $\epsilon$-Nash Equilibrium}
Markov game (MG), or stochastic game~\citep{littman1994markov}, can be regarded as an extension of Markov Decision Process (MDP). Formally, we define an MG with $N$ agents as a tuple $\langle N, \caS, \iter{\caA}{1},\dots,\iter{\caA}{N}, P, \iter{r}{1}, \dots ,\iter{r}{N}, \rho_0, \gamma \rangle$, where $\caS$ is the set of states, $\iter{\caA}{i}$ represents the action space of agent $i$, where $i \in \{1,2,\dots,N\}$, $P: \caS \times \iter{\caA}{1} \times \iter{\caA}{2}\times \cdots \times \iter{\caA}{N} \times \caS \rightarrow [0,1]$ is the state transition probability distribution, $\rho_0: \caS\rightarrow[0,1]$ is the distribution of the initial state $s^0$, and $\gamma\in [0,1]$ is the discounted factor. Each agent $i$ holds its policy $\pii(\iter{a}{i}|s): \caS \times \iter{\caA}{i} \rightarrow [0, 1]$ to make decisions and receive rewards defined as $r^{(i)}: \caS \times \iter{\caA}{1} \times \iter{\caA}{2} \times \cdots \times \iter{\caA}{N} \rightarrow \mathbb{R}$. 
We use $-i$ to represent the set of agents except $i$, and variables without superscript $i$ to denote the concatenation of all variables for all agents (e.g., $\pi$ represents the joint policy and $a$ denotes actions of all agents).
For an arbitrary function $f: \langle s, a \rangle \rightarrow \mathbb{R}$, there is a fact that $\bbE_\pi[f(s,a)]=\bbE_{s\sim P, a\sim \pi}[f(s,a)]\triangleq \bbE \left [\sum_{t=0}^\infty \gamma^t f(s_t,a_t) \right ]$, where $s^{0}\sim \rho_0$, $a_t\sim\pi$, $s_{t+1}\sim P(s_{t+1}|a_t,s_t)$. The objective of agent $i$ is to maximize its own total expected return $\iter{R}{i} \triangleq \bbE_\pi[\iter{r}{i}(s,a)] = \bbE \left [\sum_{t=0}^\infty \gamma^t \iter{r}{i}(s_t, a_t) \right ]$.

In Markov games, however, the reward function for each agent depends on the joint agent actions. Such a fact implies that one's optimal policy must also depend on others' policies. For the solution to the Markov games, $\epsilon$-Nash equilibrium ($\epsilon$-NE) is a commonly used concept that extends Nash equilibrium (NE)~\citep{nash1951non}.

\begin{definition}\label{def:epsilon-ne}
An $\epsilon$-NE is a strategy profile $(\pi_{*}^{(i)}, \pi_{*}^{(-i)})$ such that $\exists \epsilon>0$: 
\begin{equation}\label{eq:value}
    \begin{aligned}
    \vspace{-20pt}
    \iter{v}{i}(s, \pi_{*}^{(i)}, \pi_{*}^{(-i)})
    \geq \iter{v}{i}(s, \pii, \pi_{*}^{(-i)}) - \epsilon, \forall \pii \in \iter{\Pi}{i}~,
    \end{aligned}
\end{equation}
where $\iter{v}{i}(s, \pii, \pini)= \bbE_{\pii, \pini, s_0=s} \left [\iter{r}{i}(s_t, a_t^{(i)}, a_t^{(-i)})\right ]$ is the value function of agent $i$ under state $s$, and $\iter{\Pi}{i}$ is the set of policies available to agent $i$. 
\end{definition}

$\epsilon$-NE is weaker than NE, which can be seen as sub-optimal NE. Every NE is equivalent to an $\epsilon$-NE where $\epsilon=0$.

\subsection{Generative Adversarial Imitation Learning}
Imitation learning aims to learn the policy directly from expert demonstrations without any access to the reward signals. In single-agent settings, such demonstrations come from behavior trajectories sampled with the expert policy, denoted as $\tau_E=\{(s_t, \iter{a}{i}_t) \}_{t=0}^{\infty}$. However, in multi-agent settings, demonstrations are often interrelated trajectories, that is, which are sampled from the interactions of policies among all agents,  denoted as $\Omega_E=\{(s_t, \iter{a}{1}_t, ..., \iter{a}{N}_t)\}_{t=0}^{\infty}$. For simplicity, we will use the term \emph{interactions} directly as the concept of interrelated trajectories, and we refer to trajectories for a single agent.

Typically, behavior cloning (BC) and inverse reinforcement learning (IRL) are two main approaches for imitation learning. Although IRL theoretically alleviates compounding error and outperforms to BC, it is less efficient since it requires resolving an RL problem inside the learning loop. Recent proposed work aims to learn the policy without estimating the reward function directly, notably, GAIL~\citep{ho2016generative}, which takes advantage of Generative Adversarial Networks (GAN~\citep{goodfellow2014generative}), showing that IRL is the dual problem of occupancy measure matching. 
GAIL regards the environment as a black-box, which is non-differentiable but can be leveraged through Monte-Carlo estimation of policy gradients. Formally, its objective can be expressed as
\begin{equation}
    \begin{aligned}
        \min_{\pi}\max_{D} \bbE_{\pi_E}\left [\log{D(s,a)}\right ] +  
        \bbE_{\pi} \left [\log{(1-D(s,a))}\right ] -\lambda H(\pi)~,
    \end{aligned}
\end{equation}
where $D$ is a discriminator that identifies the expert trajectories with agents' sampled from policy $\pi$, which tries to maximize its evaluation from $D$; $H$ is the causal entropy for the policy and $\lambda$ is the hyperparameter.

\subsection{Correlated Policy}
In multi-agent learning tasks, each agent $i$ makes decisions independently while the resulting reward $\iter{r}{i}(s_t, a_t^{(i)}, a_t^{(-i)})$ depends on others’ actions, which makes its cumulative return subjected to the joint policy $\pi$. One common joint policy modeling method is to decouple the $\pi$ with assuming conditional independence of actions from different agents~\citep{albrecht2018autonomous}:
\begin{equation}\label{eq:non-corr-policy}
    \begin{aligned}
    \pi(\iter{a}{i}, \iter{a}{-i} | s) \approx \pii(\iter{a}{i} | s) \pini(\iter{a}{-i} | s)~.
    \end{aligned}
\end{equation}

However, such a non-correlated factorization on the joint policy is a vulnerable simplification which ignores the influence of opponents~\citep{wen2019probabilistic}. And the learning process of agent $i$ lacks stability since the environment dynamics depends on not only the current state but also the joint actions of all agents~\citep{tian2019regularized}. To solve this, recent work has taken opponents into consideration by decoupling the joint policy as a correlated policy conditioned on state $s$ and $\iter{a}{-i}$ as

\begin{equation}\label{eq:corr-policy}
    \begin{aligned}
    \pi(\iter{a}{i}, \iter{a}{-i} | s) = \pii(\iter{a}{i} | s, \iter{a}{-i}) \pini(\iter{a}{-i} | s)~,
    \end{aligned}
\end{equation}
where $\pii(\iter{a}{i} | s, \iter{a}{-i})$ is the conditional policy, with which agent $i$ regards all potential actions from its opponent policies $\pini(\iter{a}{-i} | s)$, and makes decisions through the marginal policy $\pii(\iter{a}{i} | s) = \int_{\iter{a}{-i}} \pii(\iter{a}{i} | s, \iter{a}{-i}) \pini(\iter{a}{-i} | s) \dif{\iter{a}{-i}} = \bbE_{\iter{a}{-i}} \pii(\iter{a}{i} | s, \iter{a}{-i})$.

\section{Methodology}
\label{method}

\subsection{Generalize Correlated Policies to Multi-Agent Imitation Learning}

In multi-agent settings, for agent $i$ with policy $\iter{\pi}{i}$, it seeks to maximize its cumulative reward against demonstrator opponents who equip with demonstrated policies $\iter{\pi}{-i}_E$ via reinforcement learning:

\begin{equation}\label{eq:rl}
    \begin{aligned}
    \vspace{-20pt}
        \iter{\rl}{i}(\iter{r}{i}) = \argmax_{\pii} \lambda H(\pii) + \bbE_{\pii, \piniE}[\rewardi]~,
    \end{aligned}
\end{equation}
where $H(\pii)$ is the $\gamma$-discounted entropy~\citep{bloem2014infinite,haarnoja2017reinforcement} of policy $\pii$ and $\lambda$ is the hyperparameter. By coupling with \eq{eq:rl}, we define an IRL procedure to find a reward function $\iter{r}{i}$ such that the demonstrated joint policy outperforms all other policies, with the regularizer $\psi: \bbR^{\caS\times\iter{\caA}{1}\times \cdots\times\iter{\caA}{N}} \rightarrow{\overline{\bbR}}$:
\begin{equation}\label{eq:irl}
    \begin{aligned}
    \vspace{-20pt}
    \irl_\psi^{(i)}(\piiE) = &\argmax_{\iter{r}{i}} -\psi(\iter{r}{i})
    - \max_{\pii} (\lambda H(\pii) + \bbE_{\pii, \piniE}[\rewardi])\\
    &+\bbE_{\pi_E} [\rewardi]~.
    \end{aligned}
\end{equation}

It is worth noting that we cannot obtain the demonstrated policies from the demonstrations directly. To solve this problem, we first introduce the occupancy measure, namely, the unnormalized distribution of $\langle s, a\rangle$ pairs correspond to the agent interactions navigated by joint policy $\pi$:
\begin{equation}\label{eq:measure}
    \begin{aligned}
    \vspace{-20pt}
    \rho_{\pi}(s, a) = \pi(a|s)\sum_{t=0}^{\infty}\gamma^t P(s_t=s|\pi)~.
    \end{aligned}
\end{equation}

With the definition in \eq{eq:measure}, we can further formulate $\rho_\pi$ from agent $i$'s perspective as
\begin{equation}\label{eq:rho_types}
    \begin{aligned}
    \vspace{-20pt}
    \rho_{\pi}(s, \iter{a}{i}, \iter{a}{-i}) &= \pi(\iter{a}{i}, \iter{a}{-i}|s)\sum_{t=0}^{\infty}\gamma^t P(s_t=s|\pii, \pini) \\
    &= \rho_{\pii, \pini}(s, \iter{a}{i}, \iter{a}{-i})\\
    &=
    \begin{cases}
    \underbrace{\pii(\iter{a}{i}| s)\pini(\iter{a}{-i}| s)}_{\text{non-correlated form}}\sum_{t=0}^{\infty} \gamma^t P(s_t=s|\pii, \pini) \\
    ~\\
    \underbrace{\pii(\iter{a}{i}| s, \iter{a}{-i})\pini(\iter{a}{-i}| s)}_{\text{correlated form}}\sum_{t=0}^{\infty} \gamma^t P(s_t=s|\pii, \pini)
    \end{cases}~,
    \end{aligned}
\end{equation}
where $\iter{a}{i}\sim \pii$ and $\iter{a}{-i}\sim \pini$. Furthermore, with the support of \eq{eq:rho_types}, we have
\begin{equation}\label{eq:epi}
    \begin{aligned}
    \vspace{-20pt}
        \bbE_{\pii, \pini}[\boldsymbol{\cdot}]
        &= \bbE_{s\sim P, \iter{a}{i}\sim\pii}[ \bbE_{\iter{a}{-i}\sim\pini}[\boldsymbol{\cdot}]] \\
        &=\sum_{s,\iter{a}{i},\iter{a}{-i}}\rho_{\pii, \pini}(s,\iter{a}{i},\iter{a}{-i})[\ \boldsymbol{\cdot} \ ]~.
    \end{aligned}
\end{equation}

In analogy to the definition of occupancy measure of that in a single-agent environment, we follow the derivation from~\cite{ho2016generative} and state the conclusion directly\footnote{Note that \cite{ho2016generative} proved the conclusion under the goal to minimize the cost instead of maximizing the reward of an agent.}.

\begin{proposition}\label{prop:dual}
    The IRL regarding demonstrator opponents is a dual form of a occupancy measure matching problem with regularizer $\psi$, and the induced optimal policy is the primal optimum, specifically, the policy learned by RL on the reward recovered by IRL can be characterize by the following equation:
    \begin{equation}
        \begin{aligned}
        \vspace{-20pt}
            \iter{\rl}{i} \circ \iter{\irl}{i} = \argmin_{\pii}-\lambda H(\pii)+\psi^{*}(\rho_{\pii, \pi^{(-i)}_E} - \rho_{\pi_E})~.
        \end{aligned}
    \end{equation}
\end{proposition}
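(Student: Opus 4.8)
The plan is to reduce the statement to the single-agent occupancy-measure duality of \citep{ho2016generative} by exploiting that the opponents are frozen at the demonstrated policies $\piniE$. Once $\piniE$ is fixed, agent $i$ faces an ordinary single-agent control problem whose transition kernel has absorbed $\piniE$, so the occupancy measure $\rho_{\pii,\piniE}$ is a function of $\pii$ alone. First I would invoke the analogue of the policy/occupancy-measure bijection: the set of valid $\rho_{\pii,\piniE}$ is the convex polytope cut out by the Bellman flow constraints (now with $\piniE$ baked into $P$), $\pii$ is recovered from $\rho$ by normalization, and the $\gamma$-discounted causal entropy $H(\pii)$ is a strictly concave function $\bar H(\rho)$ of the occupancy measure. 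This is what lets me treat $\rho$, rather than $\pii$, as the optimization variable and turns every expectation into a linear functional of $\rho$ via \eq{eq:epi}.

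Next I would rewrite both sides as the two orders of a single saddle problem. Using \eq{eq:epi} to write $\bbE_{\pii,\piniE}[\rewardi]=\rho_{\pii,\piniE}^\top \iter{r}{i}$ and $\bbE_{\pi_E}[\rewardi]=\rho_{\pi_E}^\top \iter{r}{i}$, I define
\[
\bar L(\pii,\iter{r}{i}) = -\lambda H(\pii) - \psi(\iter{r}{i}) - (\rho_{\pii,\piniE}-\rho_{\pi_E})^\top \iter{r}{i}~.
\]
Rewriting the inner $\max_{\pii}$ of \eq{eq:irl} as $-\min_{\pii}$ shows that $\iter{\irl}{i}(\piiE)=\argmax_{\iter{r}{i}}\min_{\pii}\bar L(\pii,\iter{r}{i})$, i.e. the recovered reward is the $r$-part of a max--min of $\bar L$. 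On the other side, expanding the conjugate as $\psi^{*}(\rho_{\pii,\piniE}-\rho_{\pi_E})=\max_{\iter{r}{i}}[(\rho_{\pii,\piniE}-\rho_{\pi_E})^\top \iter{r}{i}-\psi(\iter{r}{i})]$ shows the right-hand objective equals $\max_{\iter{r}{i}}\bar L(\pii,\iter{r}{i})$ up to sign bookkeeping, so the claimed minimizer is the $\pii$-part of the min--max of the \emph{same} $\bar L$. Here I would also reconcile the cost-versus-reward convention flagged in the footnote, which only controls whether the conjugate argument appears as $\rho_{\pii,\piniE}-\rho_{\pi_E}$ or its negative.

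The crux is then the interchange $\max_{\iter{r}{i}}\min_{\pii}\bar L=\min_{\pii}\max_{\iter{r}{i}}\bar L$ together with the existence of a saddle point $(\tilde\pi^{(i)},\tilde r^{(i)})$. Re-expressed in $\rho$, the function $\bar L$ is convex in $\rho$ (since $-\bar H$ is convex and the coupling term is linear) and concave in $\iter{r}{i}$ (since $\psi$ is convex), and $\rho$ ranges over a convex compact polytope, so I would apply a minimax theorem (Sion's suffices) to justify the swap and produce the saddle. At the saddle, holding $\tilde r^{(i)}$ fixed the $\pii$-minimization of $\bar L$ drops the terms independent of $\pii$ and collapses exactly to $\argmax_{\pii}\lambda H(\pii)+\bbE_{\pii,\piniE}[\tilde r^{(i)}]=\iter{\rl}{i}(\tilde r^{(i)})$; hence $\tilde\pi^{(i)}\in\iter{\rl}{i}\circ\iter{\irl}{i}(\piiE)$. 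Simultaneously $\tilde\pi^{(i)}$ minimizes $\max_{\iter{r}{i}}\bar L=-\lambda H(\pii)+\psi^{*}(\cdot)$, which is the right-hand side, yielding the asserted equality of solution sets.

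The main obstacle I anticipate is not the algebra but discharging the hypotheses that make the minimax exchange and the conjugate duality valid: I must verify that the occupancy-measure set for fixed $\piniE$ is genuinely convex and compact, that $H$ really is concave and regular enough as a function of $\rho$ rather than of $\pii$, and that $\psi$ is a closed proper convex regularizer so that $\psi^{**}=\psi$ and the $\max$ defining the conjugate is attained. A secondary subtlety is that $\argmin$ and $\argmax$ are set-valued, so the conclusion is an identity of solution sets (or requires strict concavity of $\bar H$ to single out a unique $\tilde\pi^{(i)}$); I would state the result at that level of care rather than claim a unique minimizer.
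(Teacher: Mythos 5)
Your argument is correct and is essentially the proof the paper relies on: the paper states Proposition~1 without proof, deferring to the single-agent duality of \cite{ho2016generative}, and your reduction --- freezing $\piniE$ to obtain a single-agent control problem, passing to the occupancy-measure polytope, and exchanging $\min$ and $\max$ of the Lagrangian $\bar L$ via convex--concave minimax duality --- is exactly that derivation transplanted to the fixed-opponent setting. Your explicit handling of the cost-versus-reward sign in the conjugate's argument and of the set-valued $\argmin$/$\argmax$ supplies precisely the care that the paper relegates to its footnote.
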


With setting the regularizer $\psi=\psi_{GA}$ similar to \cite{ho2016generative}, we can obtain a GAIL-like imitation algorithm to learn $\piiE$ from $\pi_E$ given demonstrator counterparts $\piniE$ by introducing the adversarial training procedures of GANs which lead to a saddle point $(\pii, \iter{D}{i})$:
\begin{equation}\label{eq:CoDAIL}
    \begin{aligned}
    \vspace{-20pt}
        \min_{\pii}\max_{\iter{D}{i}} -\lambda H(\pii) + \bbE_{\pi_E}\left [\log{\iter{D}{i}(s,\iter{a}{i}, \iter{a}{-i})}\right ] 
        +\bbE_{\pii, \piniE} \left [\log{(1-\iter{D}{i}(s, \iter{a}{i}, \iter{a}{-i}))} \right ]
        ~,
    \end{aligned}
\end{equation}
\noindent 
where $\iter{D}{i}$ denotes the discriminator for agent $i$, which plays a role of surrogate cost function and guides the policy learning. 

However, such an algorithm is not practical, since we are unable to access the policies of demonstrator opponents $\piniE$ because the demonstrated policies are always given through sets of interactions data. To alleviate this deficiency, it is necessary to deal with accessible counterparts. Thereby we propose \prop{prop:imp-samp}.

\begin{proposition}\label{prop:imp-samp}
    Let $\mu$ be an arbitrary function such that $\mu$ holds a similar form as $\pini$,
    then 
    $\bbE_{\pii, \pini} [\boldsymbol{\cdot}] = 
    \bbE_{\pii, \mu} \Big[\frac{\rho_{\pii, \pini}(s, \iter{a}{i}, \iter{a}{-i})}{\rho_{\pii, \mu}(s, \iter{a}{i}, \iter{a}{-i})}\ \boldsymbol{\cdot} \ \Big]$.
\end{proposition}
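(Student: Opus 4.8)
The plan is to recognize \prop{prop:imp-samp} as a standard change-of-measure (importance sampling) identity, carried out over the occupancy measure rather than over single-step action distributions. The essential tool is \eq{eq:epi}, which lets me rewrite any expectation of the form $\bbE_{\pii,\pini}[\boldsymbol{\cdot}]$ as a plain sum against the joint occupancy measure $\rho_{\pii,\pini}$. So the first step is to expand the left-hand side as $\bbE_{\pii, \pini}[\boldsymbol{\cdot}] = \sum_{s,\iter{a}{i},\iter{a}{-i}}\rho_{\pii, \pini}(s,\iter{a}{i},\iter{a}{-i})\,[\boldsymbol{\cdot}]$.

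Next I would perform the elementary trick of multiplying and dividing each summand by $\rho_{\pii,\mu}(s,\iter{a}{i},\iter{a}{-i})$, giving $\sum_{s,\iter{a}{i},\iter{a}{-i}} \rho_{\pii,\mu}(s,\iter{a}{i},\iter{a}{-i})\,\frac{\rho_{\pii,\pini}(s,\iter{a}{i},\iter{a}{-i})}{\rho_{\pii,\mu}(s,\iter{a}{i},\iter{a}{-i})}\,[\boldsymbol{\cdot}]$. Reading this sum backwards through \eq{eq:epi}---now applied with $\mu$ in the role of $\pini$---identifies it exactly as $\bbE_{\pii,\mu}\big[\frac{\rho_{\pii,\pini}}{\rho_{\pii,\mu}}\,\boldsymbol{\cdot}\big]$, which is the claimed right-hand side. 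This completes the computation; no appeal to the specific (correlated or non-correlated) factorization of $\rho$ from \eq{eq:rho_types} is needed, since the argument only uses that both sides are sums against an occupancy measure.

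The point that deserves care, and which I expect to be the only real obstacle, is the well-definedness of the ratio: the division by $\rho_{\pii,\mu}$ requires absolute continuity, i.e. $\rho_{\pii,\mu}(s,\iter{a}{i},\iter{a}{-i})>0$ wherever $\rho_{\pii,\pini}(s,\iter{a}{i},\iter{a}{-i})>0$. This is exactly what the hypothesis that $\mu$ ``holds a similar form as $\pini$'' is meant to guarantee---$\mu(\iter{a}{-i}\mid s)$ must be a genuine conditional distribution over the opponents' actions whose support covers that of $\pini$---and I would state this support condition explicitly rather than leave it implicit. A secondary subtlety worth flagging is that the ratio reweights more than just the opponents' action probabilities: because the state-visitation weights $\sum_{t}\gamma^{t}P(s_t=s\mid\cdot)$ baked into each $\rho$ themselves depend on the joint policy through the transition dynamics, the ratio simultaneously corrects for the change in the induced state distribution when $\pini$ is replaced by $\mu$. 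Working with the full occupancy measures in both numerator and denominator, rather than only the per-step conditionals, is precisely what makes the identity exact.
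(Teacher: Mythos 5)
Your argument is exactly the paper's intended proof: the paper disposes of this proposition in one line by ``substituting $\pini$ with $\mu$ in \eq{eq:epi} by importance sampling,'' which is precisely your multiply-and-divide computation on the occupancy-measure sum. Your additional remarks on the support (absolute continuity) condition hidden in the phrase ``similar form,'' and on the fact that the ratio of full occupancy measures also corrects the induced state distribution, are correct and more careful than the paper's own statement.
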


\begin{proof}
    Substituting $\pini$ with $\mu$ in \eq{eq:epi} by importance sampling.
\end{proof}

\prop{prop:imp-samp} raises an important point that a term of importance weight can quantify the demonstrator opponents. By replacing $\piniE$ with $\pini$, \eq{eq:CoDAIL} is equivalent with 
\begin{equation}\label{eq:MA-GAIL}
\begin{small}
    \begin{aligned}
    \vspace{-20pt}
        \min_{\pii}\max_{\iter{D}{i}} -\lambda H(\pii) + \bbE_{\pi_E}\left [ \log{\iter{D}{i}(s,\iter{a}{i}, \iter{a}{-i})}\right ]
        +\bbE_{\pii, \pini} \left [\alpha \log{(1-\iter{D}{i}(s,\iter{a}{i}, \iter{a}{-i}))}\right ]~,
    \end{aligned}
\end{small}
\end{equation}
\noindent where $\alpha = \frac{\rho_{\pii, \pi^{(-i)}_E}(s, \iter{a}{i}, \iter{a}{-i})}{\rho_{\pii, \pini}(s, \iter{a}{i}, \iter{a}{-i})}$ is the importance sampling weight. 
In practice, it is challenging to estimate the densities and the learning methods might suffer from large variance. Thus, we fix $\alpha=1$ in our implementation, and as the experimental results have shown, it has no significant influences on performance. Besides, a similar approach can be found in \cite{kostrikov2018discriminator}.

So far, we've built a multi-agent imitation learning framework, which can be easily generalized to correlated or non-correlated policy settings. No prior has to be considered in advance since the discriminator is able to learn the implicit goal for each agent.


\subsection{Learn with the Opponents Model}
With the objective shown in \eq{eq:CoDAIL}, demonstrated interactions can be imitated by updating discriminators to offer surrogate rewards and learning their policies alternately. Formally, the update of discriminator for each agent $i$ can be expressed as:
\begin{equation}\label{eq:corr-gradient-discriminator}
    \begin{aligned}
    \vspace{-20pt}
     \nabla_{\omega}J_D(\omega) =& \bbE_{s \sim P, \iter{a}{-i}\sim\pini}\left [\int_{\iter{a}{i}} \piitheta(\iter{a}{i} | s, \iter{a}{-i}) \nabla_{\omega}\log{(1-D_{\omega}^{(i)}(s,\iter{a}{i},\iter{a}{-i}))} \dif{\iter{a}{i}}\right ] \\
     &+\bbE_{(s,\iter{a}{i},\iter{a}{-i}) \sim \Omega_E}\left [\nabla_{\omega} \log{D_{\omega}^{(i)}(s,\iter{a}{i},\iter{a}{-i})}\right ]~,
    \end{aligned}
\end{equation}
and the update of policy is:
\begin{equation}\label{eq:corr-gradient-policy}
    \begin{aligned}
    \vspace{-20pt}
    \nabla_{\theta}J_{\pi}(\theta) = \bbE_{s \sim P, \iter{a}{-i}\sim\pini}\left [\nabla_{\iter{\theta}{i}}\int_{\iter{a}{i}} \piitheta(\iter{a}{i} | s, \iter{a}{-i}) \iter{A}{i}(s, \iter{a}{i}, \iter{a}{-i})\dif{\iter{a}{i}}\right ]-\lambda \nabla_{\iter{\theta}{i}}H(\piitheta)~,
    \end{aligned}
\end{equation}

where discriminator $\iter{D}{i}$ is parametrized by $\omega$, and the policy $\iter{\pi}{i}$ is parametrized by $\theta$.
It is worth noting that the agent $i$ considers opponents' action $\iter{a}{-i}$ while updating its policy and discriminator, with integrating all its possible decisions to find the optimal response. 
However, it is unrealistic to have the access to opponent joint policy $\pi(\iter{a}{-i} | s)$ for agent $i$.
Thus, it is possible to estimate opponents' actions via approximating $\pini(\iter{a}{-i} | s)$ using opponent modeling. To that end, we construct a function $\iter{\sigma}{i}(\iter{a}{-i} | s): \caS \times \iter{\caA}{1} \times \cdots \times \iter{\caA}{i-1} \times \iter{\caA}{i+1} \times \cdots \times \iter{\caA}{N} \rightarrow {[0, 1]}^{N-1}$, as the approximation of opponents for each agent $i$. Then we rewrite \eq{eq:corr-gradient-discriminator} and \eq{eq:corr-gradient-policy} as:
\begin{equation}\label{eq:corr-gradient-discriminator-approx}
    \begin{aligned}
    \vspace{-20pt}
    \nabla_{\omega}J_D(\omega) &\approx \bbE_{s \sim P, \iter{\hat{a}}{-i}\sim\iter{\sigma}{i}, \iter{a}{i}\sim\piitheta}\left [ \nabla_{\iter{\omega}{i}}\log(1-D_{\omega}^{(i)}(s,\iter{a}{i},\iter{\hat{a}}{-i})) \right ]\\
    &+\bbE_{(s,\iter{a}{i},\iter{a}{-i}) \sim \Omega_E}\left[\nabla_{\omega} \log{D_{\omega}^{(i)}(s,\iter{a}{i},\iter{a}{-i})}\right ]
    \end{aligned}
\end{equation}
and
\begin{equation}\label{eq:corr-gradient-policy-approx}
    \begin{aligned}
    \vspace{-20pt}
    \nabla_{\theta}J_{\pi}(\theta) \approx \bbE_{s \sim P, \iter{\hat{a}}{-i}\sim\iter{\sigma}{i}, \iter{a}{i}\sim\piitheta}\left [\nabla_{\iter{\theta}{i}}\log{\piitheta(\iter{a}{i} | s, \iter{\hat{a}}{-i})} \iter{A}{i}(s, \iter{a}{i}, \iter{\hat{a}}{-i}) \right ]
    -\lambda \nabla_{\iter{\theta}{i}}H(\piitheta)~
    \end{aligned}
\end{equation}
respectively. Therefore, each agent $i$ must infer the opponents model $\iter{\sigma}{i}$ to approximate the unobservable policies $\pini$, which can be achieved via supervised learning. Specifically, we learn in discrete action space by minimizing a cross-entropy (CE) loss, and a mean-square-error (MSE) loss in continuous action space:
\begin{equation}\label{eq:oppo-loss}
    \begin{aligned}
    \vspace{-20pt}
            L = 
            \begin{cases}
            \frac{1}{2}\bbE_{s\sim p}\left [ \left \| \iter{\sigma}{i}(\iter{a}{-i} | s) - \pini(\iter{a}{-i} | s) \right \|^2\right ],~&\text{continuous action space} \\ 
            \bbE_{s\sim p}\left [\pini(\iter{a}{-i} | s)\log{\iter{\sigma}{i}(\iter{a}{-i} | s)}\right ],~&\text{discrete action space}.
            \end{cases}
    \end{aligned}
\end{equation}

With opponents modeling, agents are able to be trained in a fully decentralized manner. We name our algorithm as Decentralized Adversarial Imitation Learning with Correlated policies (Correlated DAIL, a.k.a. CoDAIL) and present the training procedure in Appendix \alg{alg:codail}, which can be easily scaled to a distributed algorithm. As a comparison, we also present a non-correlated DAIL algorithm with non-correlated policy assumption in Appendix \alg{alg:ncdail}.

\subsection{Theoretical Analysis}
In this section, we prove that the reinforcement learning objective against demonstrator counterparts shown in the last section is essentially equivalent to reaching an $\epsilon$-NE.

Since we fix the policies of agents $-i$ as $\piniE$, the RL procedure mentioned in \eq{eq:rl} can be regarded as a single-agent RL problem. Similarly, with a fixed $\piniE$, the IRL process of \eq{eq:irl} is cast to a single-agent IRL problem, which recovers an optimal reward function $r_*^{(i)}$ which achieves the best performance following the joint action $\pi_E$. Thus we have
\begin{equation}\label{eq:rlpiE}
    \begin{aligned}
    \vspace{-20pt}
        \iter{\rl}{i}(r_*^{(i)}) &= \argmax_{\pii} \lambda H(\pii) + \bbE_{\pii, \piniE}[\rewardi] \\
        &= \piiE~.
    \end{aligned}
\end{equation}

We can also rewrite \eq{eq:rlpiE} as
\begin{equation}
    \begin{aligned}
    \vspace{-20pt}
        \lambda H(\piiE) + \bbE_{\piiE, \piniE}[\rewardi] \geq 
        \lambda H(\pii) + \bbE_{\pii, \piniE}[\rewardi]
    \end{aligned}
\end{equation}

for all $\pii\in\iter{\Pi}{i}$, which is equivalent to
\begin{align}
\vspace{-20pt}
    &\bbE_{a_t^{(i)}\sim\piiE, a_t^{(-i)}\sim\piniE, s_0=s} \left [ \sum_{t=0}^\infty \gamma^t r_*^{(i)}(s_t, a_t^{(i)}, a_t^{(-i)})\right ]  \geq \\
    &\bbE_{a_t^{(i)}\sim\pii, a_t^{(-i)}\sim\piniE, s_0=s} \left [ \sum_{t=0}^\infty \gamma^t r_*^{(i)}(s_t, a_t^{(i)}, a_t^{(-i)})\right ] 
    + \lambda(H(\pii) - H(\piiE)), \forall \pii\in\iter{\Pi}{i}~. \nonumber
\end{align}

Given the value function defined in \eq{eq:value} for each agent $i$, for $H(\pii)-H(\piiE) < 0$, $\forall \pii \in \iter{\Pi}{i}$, we have
\begin{equation}
    \begin{aligned}
    \vspace{-20pt}
    \iter{v}{i}(s, \piiE, \piniE) & \geq \iter{v}{i}(s, \pii, \piniE)
     - \lambda (H(\piiE)-H(\pii))~.
    \end{aligned}
\end{equation}

For $H(\pii)-H(\piiE) \geq 0$, $\forall \pii \in \iter{\Pi}{i}$ we have
\begin{equation}
    \begin{aligned}
    \vspace{-20pt}
    \iter{v}{i}(s, \piiE, \piniE) 
    & \geq \iter{v}{i}(s, \pii, \piniE) + \lambda (H(\pii)-H(\piiE))\\
    & \geq \iter{v}{i}(s, \pii, \piniE) - \lambda (H(\pii)-H(\piiE))~.
    \end{aligned}
\end{equation}

Let $\epsilon=\lambda \max \left \{ \big| H(\pii)-H(\piiE) \big|, \forall \pii \in \iter{\Pi}{i} \right \}$, then we finally obtain
\begin{equation}
    \begin{aligned}
    \vspace{-20pt}
    \iter{v}{i}(s, \piiE, \piniE) 
    \geq \iter{v}{i}(s, \pii, \piniE) - \epsilon, \forall \pii \in \iter{\Pi}{i}~,
    \end{aligned}
\end{equation}

which is exactly the $\epsilon$-NE defined in \defi{def:epsilon-ne}. We can always prove that $\epsilon$ is bounded in small values such that the $\epsilon$-NE solution concept is meaningful. Generally, random policies that keep vast entropy are not always considered as sub-optimal solutions or demonstrated policies $\piiE$ in most reinforcement learning environments. As we do not require those random policies, we can remove them from the candidate policy set $\iter{\Pi}{i}$, which indicates that $H(\pii)$ is bounded in small values, so as $\epsilon$. Empirically, we adopt a small $\lambda$, and attain the demonstrator policy $\pi_E$ with an efficient learning algorithm to become a close-to-optimal solution.

Thus, we conclude that the objective of our CoDAIL assumes that demonstrated policies institute an $\epsilon$-NE solution concept (but not necessarily unique) that can be controlled the hyperparameter $\lambda$ under some specific reward function, from which the agent learns a policy. It is worth noting that \cite{yu2019multi} claimed that NE is incompatible with maximum entropy inverse reinforcement learning (MaxEnt IRL) because NE assumes that the agent never takes sub-optimal actions. Nevertheless, we prove that given demonstrator opponents, the multi-agent MaxEnt IRL defined in \eq{eq:irl} is equivalent to finding an $\epsilon$-NE.

\section{Related Work}
Albeit non-correlated policy learning guided by a centralized critic has shown excellent properties in couple of methods, including MADDPG~\citep{lowe2017multi}, COMA~\citep{foerster2018counterfactual}, MA Soft-Q~\citep{wei2018multiagent}, it lacks in modeling complex interactions because its decisions making relies on the independent policy assumption which only considers private observations while ignores the impact of opponent behaviors. To behave more rational, agents must take other agents into consideration, which leads to the studies of opponent modeling~\citep{albrecht2018autonomous} where an agent models how its opponents behave based on the interaction history when making decisions~\citep{claus1998dynamics,greenwald2003correlated,wen2019probabilistic,tian2019regularized}.

For multi-agent imitation learning, however, prior works fail to learn from complicated demonstrations, and many of them are bounded with particular reward assumptions. For instance, \cite{bhattacharyya2018multi} proposed Parameter Sharing Generative Adversarial Imitation Learning (PS-GAIL) that adopts parameter sharing trick to extend GAIL to handle multi-agent problems directly, but it does not utilize the properties of Markov games with strong constraints on the action space and the reward function. Besides, there are many works built-in Markov games that are restricted under tabular representation and known dynamics but with specific prior of reward structures, as fully cooperative games~\citep{barrett2017making,le2017coordinated,vsovsic2016inverse,bogert2014multi}, two-player zero-sum games \citep{lin2014multi}, two-player general-sum games~\citep{lin2018multi}, and linear combinations of specific features~\citep{reddy2012inverse,waugh2013computational}.

Recently, some researchers take advantage of GAIL to solve Markov games. Inspired by a specific choice of Lagrange multipliers for a constraint optimization problem~\citep{yu2019multi}, \cite{song2018multi} derived a performance gap for multi-agent from NE. It proposed multi-agent GAIL (MA-GAIL), where they formulated the reward function for each agent using private actions and observations. As an improvement, \cite{yu2019multi} presented a multi-agent adversarial inverse reinforcement learning (MA-AIRL) based on logistic stochastic best response equilibrium and MaxEnt IRL. However, both of them are inadequate to model agent interactions with correlated policies with independent discriminators. By contrast, our approach can generalize correlated policies to model the interactions from demonstrations and employ a fully decentralized training procedure without to get access to know the specific opponent policies.

Except for the way of modeling multi-agent interactions as recovering agents' policies from demonstrations, which can regenerate similar interacted data, some other works consider different effects of interactions. \cite{grover2018learning} proposed to learn a policy representation function of the agents based on their interactions and sets of generalization tasks using the learned policy embeddings. They regarded interactions as the episodes that contain only $k$ (in the paper they used $2$ agents), which constructs an agent-interaction graph. Different from us, they focused on the potential relationships among agents to help characterize agent behaviors. Besides, \cite{kuhnt2016understanding} and \cite{gindele2015learning} proposed to use the Dynamic Bayesian Model that describes physical relationships among vehicles and driving behaviors to model interaction-dependent behaviors in autonomous driving scenario.

Correlated policy structures that can help agents consider the influence of other agents usually need opponents modeling \citep{albrecht2018autonomous} to infer others' actions. Opponent modeling has a rich history in MAL \citep{billings1998opponent,ganzfried2011game}, and lots of researches have recently worked out various useful approaches for different settings in deep MARL, e.g., DRON \citep{he2016opponent} and ROMMEO \citep{tian2019regularized}. In this paper, we focus on imitation learning with correlated policies, and we choose a natural and straightforward idea of opponent modeling that learning opponents' policies in the way of supervised learning with historical trajectories. Opponent models are used both in the training and the execution stages.

\section{Experiments}
\subsection{Experimental Settings}
\paragraph{Environment Description}
We test our method on the Particle World Environments~\citep{lowe2017multi}, which is a popular benchmark for evaluating multi-agent algorithms, including several cooperative and competitive tasks. Specifically, we consider two cooperative scenarios and two competitive ones as follows: 1) Cooperative-communication, with 2 agents and 3 landmarks, where an unmovable speaker knowing the goal, cooperates with a listener to reach a particular landmarks who achieves the goal only through the message from the speaker; 2) Cooperative-navigation, with 3 agents and 3 landmarks, where agents must cooperate via physical actions and it requires each agent to reach one landmark while avoiding collisions; 3) Keep-away, with 1 agent, 1 adversary and 1 landmark, where the agent has to get close to the landmark, while the adversary is rewarded by pushing away the agent from the landmark without knowing the target; 4) Predator-prey, with 1 prey agent with 3 adversary predators, where the slower predactor agents must cooperate to chase the prey agent that moves faster and try to run away from the adversaries.

\paragraph{Experimental Details}
We aim to compare the quality of interactions modeling in different aspects. To obtain the interacted demonstrations sampled from correlated policies, we train the demonstrator agent via a MARL learning algorithm with opponents modeling to regard others' policies into one's decision making, since the ground-truth reward in those simulated environments is accessible. Specifically, we modify the multi-agent version ACKTR \citep{wu2017scalable,song2018multi}, an efficient model-free policy gradient algorithm, by keeping an auxiliary opponents model and a conditioned policy for each agent, which can transform the original centralized on-policy learning algorithm to be decentralized. 
Note that we do not necessarily need experts that can do well in our designated environments. Instead, any demonstrator will be treated as it is from an $\epsilon$-NE strategy concept under some unknown reward functions, which will be recovered by the discriminator. 
In our training procedure, we first obtain demonstrator policies induced by the ground-truth rewards and then generate demonstrations, i.e., the interactions data for imitation training. Then we train the agents through the surrogate rewards from discriminators. We compare CoDAIL with MA-AIRL, MA-GAIL, non-correlated DAIL (NC-DAIL) (the only difference between MA-GAIL and NC-DAIL is whether the reward function depends on joint actions or individual action) and a random agent. We do not apply any prior to the reward structure for all tasks to let the discriminator learn the implicit goals. All training procedures are pre-trained via behavior cloning to reduce the sample complexity, and we use 200 episodes of demonstrations, each with a maximum of 50 timesteps.

\begin{table}[t!]
\vspace{-30pt}
\centering
\caption{Average reward gaps between demonstrators and learned agents in 2 cooperative tasks. Means and standard deviations are taken across different random seeds.
}
\vspace{-5pt}
\renewcommand{\arraystretch}{1.2}
\begin{tabular}{c|c|c}
\hline
  Algorithm  & {Coop.-Comm.} & {Coop.-Navi.}\\
  \hline
    Demonstrators & 0 $\pm$ 0 & 0 $\pm$ 0 \\\hline
    MA-AIRL & 0.780 $\pm$ 0.917 & 6.696 $\pm$ 3.646 \\
    MA-GAIL & 0.638 $\pm$ 0.624 & 7.596 $\pm$ 3.088 \\
    NC-DAIL & 0.692 $\pm$ 0.597 & 6.912 $\pm$ 3.971 \\
    CoDAIL & \textbf{0.632 $\pm$ 0.685} & \textbf{6.249 $\pm$ 2.779} \\\hline
    Random & 186.001 $\pm$ 16.710 & 322.1124 $\pm$ 15.358 \\\hline
\end{tabular}
\label{tb:cooperative-task}
\end{table} 

\begin{table}[t!]
\vspace{-5pt}
\centering
\caption{Average reward gaps between demonstrators and learned agents in 2 competitive tasks, where `agent+' and `agent-' represent 2 teams of agents and `total' is their sum. Means and standard deviations are taken across different random seeds.
}
\vspace{-5pt}
\scalebox{0.73}{
\renewcommand{\arraystretch}{1.5}
\begin{tabular}{c|ccc|ccc}
\hline
  \multirow{2}*{Algorithm}  & \multicolumn{3}{c|}{Keep-away} & \multicolumn{3}{c}{Pred.-Prey}\\\cline{2-7}
  & Total & Agent+ & Agent- & Total & Agent+ & Agent-\\\hline
    Demonstrators & 0 $\pm$ 0 & 0 $\pm$ 0 & 0 $\pm$ 0 & 0 $\pm$ 0 & 0 $\pm$ 0 & 0 $\pm$ 0\\\hline
    MA-AIRL & 12.273 $\pm$ 1.817 & 4.149 $\pm$ 1.912 & 8.998 $\pm$ 4.345 & 279.535 $\pm$ 77.903 & 35.100 $\pm$ 1.891 & 174.235 $\pm$ 73.168\\
    MA-GAIL & 1.963 $\pm$ 1.689 & 1.104 $\pm$ 1.212 & 1.303 $\pm$ 0.798  & 15.788 $\pm$ 10.887 & 4.800 $\pm$ 2.718 & 8.826 $\pm$ 3.810\\
    NC-DAIL & 1.805 $\pm$ 1.695 & 1.193 $\pm$ 0.883 & 1.539 $\pm$ 1.188& 27.611 $\pm$ 14.645 & 8.260 $\pm$ 7.087 & 6.975 $\pm$ 5.130\\
    CoDAIL & \textbf{0.269 $\pm$ 0.078} & \textbf{0.064 $\pm$ 0.041} & \textbf{0.219 $\pm$ 0.084} & \textbf{10.456 $\pm$ 6.762} & \textbf{4.500 $\pm$ 3.273} & \textbf{4.359 $\pm$ 2.734} \\\hline
    Random & 28.272 $\pm$ 2.968 & 25.183 $\pm$ 2.150 & 53.455 $\pm$ 2.409 & 100.736 $\pm$ 6.870 & 37.980 $\pm$ 2.396 & 13.204 $\pm$ 8.444\\
\hline
\end{tabular}
}
\label{tb:competitive-task}
\end{table}

\begin{table}[t!]\label{tb:divergence}
\vspace{-5pt}
\centering
\caption{KL divergence of learned agents position distribution and demonstrators position distribution from an individual perspective in different scenarios. `Total' is the KL divergence for state-action pairs of all agents, and `Per' is the averaged KL divergence of each agent. Experiments are conducted under the same random seed. Note that unmovable agents are not recorded since they never move from the start point, and there is only one movable agent in Cooperative-communication.
}
\vspace{-5pt}
\scalebox{0.85}{
\renewcommand{\arraystretch}{1.5}
\begin{tabular}{c|c|cc|cc|cc}
\hline
  \multirow{2}*{Algorithm} & \multicolumn{1}{c|}{Coop.-Comm.} &
  \multicolumn{2}{c|}{Coop.-Navi.}  &
  \multicolumn{2}{c|}{Keep-away} & \multicolumn{2}{c}{Pred.-Prey}\\\cline{2-8}
  & Total/Per & Total & Per & Total & Per & Total & Per \\\hline
    Demonstrators  & 0 & 0 & 0 & 0 & 0 & 0 & 0\\\hline
    MA-AIRL & 35.525 & 18.071 & 47.241 & 69.146 & 98.248 & 71.568 & 118.511\\
    MA-GAIL & 34.681 & 15.034 & 45.550 & 51.721 & 69.820 & 9.998 & 27.116\\
    NC-DAIL & 38.002 & 16.202 & 46.040 & 46.563 & 61.780 & 16.698 & 33.307\\
    CoDAIL & \textbf{6.427} & \textbf{9.033} & \textbf{23.100} & \textbf{3.113} & \textbf{5.735} & \textbf{8.621} & \textbf{22.600} \\\hline
    Random & 217.456 & 174.892 & 221.209 & 191.344 & 234.829 & 37.555 & 82.361\\
\hline
\end{tabular}
}
\label{tab:divergence}

\end{table}

\subsection{Reward Gap}

\tb{tb:cooperative-task} and \tb{tb:competitive-task} show the averaged absolute differences of reward for learned agents compared to the demonstrators in cooperative and competitive tasks, respectively. The learned interactions are considered superior if there are smaller reward gaps. Since cooperative tasks are reward-sharing, we show only a group reward for each task in \tb{tb:cooperative-task}. Compared to the baselines, CoDAIL achieves smaller gaps in both cooperative and competitive tasks, which suggests that our algorithm has a robust imitation learning capability of modeling the demonstrated interactions. It is also worth noting that CoDAIL achieves higher performance gaps in competitive tasks than cooperative ones, for which we think that conflict goals motivate more complicated interactions than a shared goal. Besides, MA-GAIL and NC-DAIL are about the same, indicating that less important is the surrogate reward structure on these multi-agent scenarios. To our surprise, MA-AIRL does not perform well in some environments, and even fails in Predator-prey. We list the raw obtained rewards in \ap{sec:raw-result}, and we provide more hyperparameter sensitivity results in \ap{sec:hyper-result}.


\begin{figure*}[t!]

\centering
\subfigure[Demonstrators (KL = 0)]{
\begin{minipage}[b]{0.3\linewidth}
\includegraphics[width=1\linewidth]{./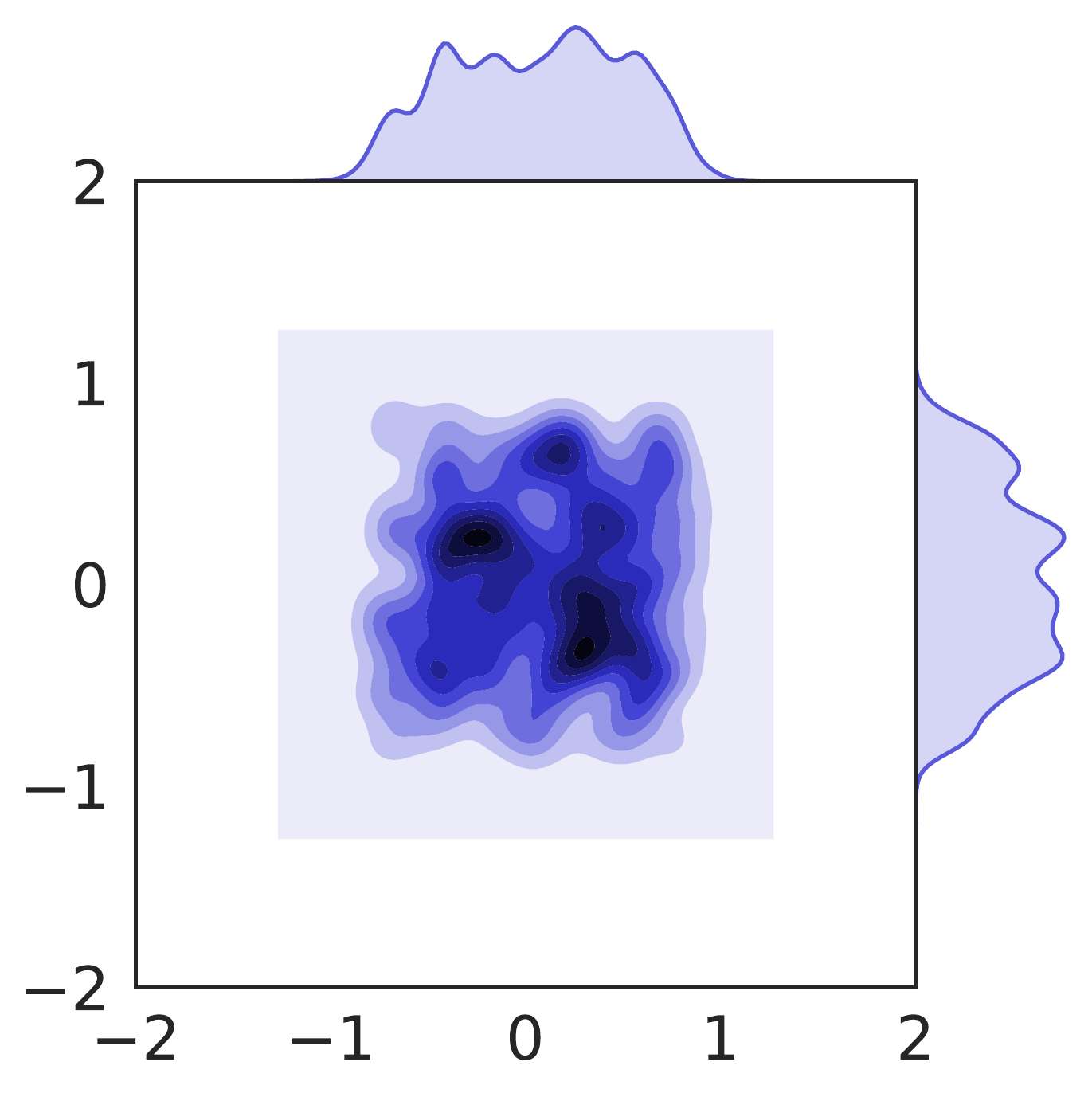}
\includegraphics[width=0.45\linewidth]{./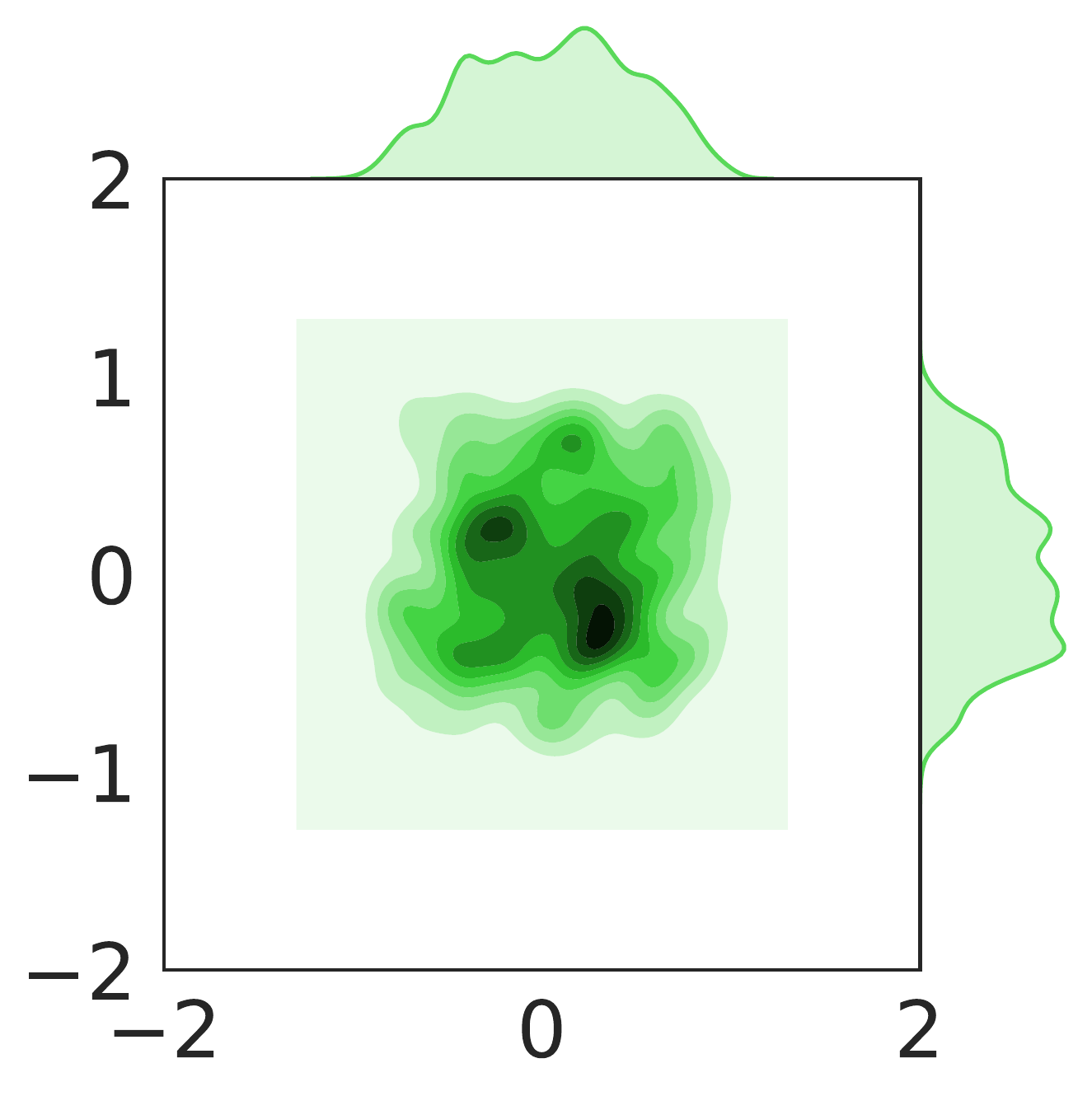}
\includegraphics[width=0.45\linewidth]{./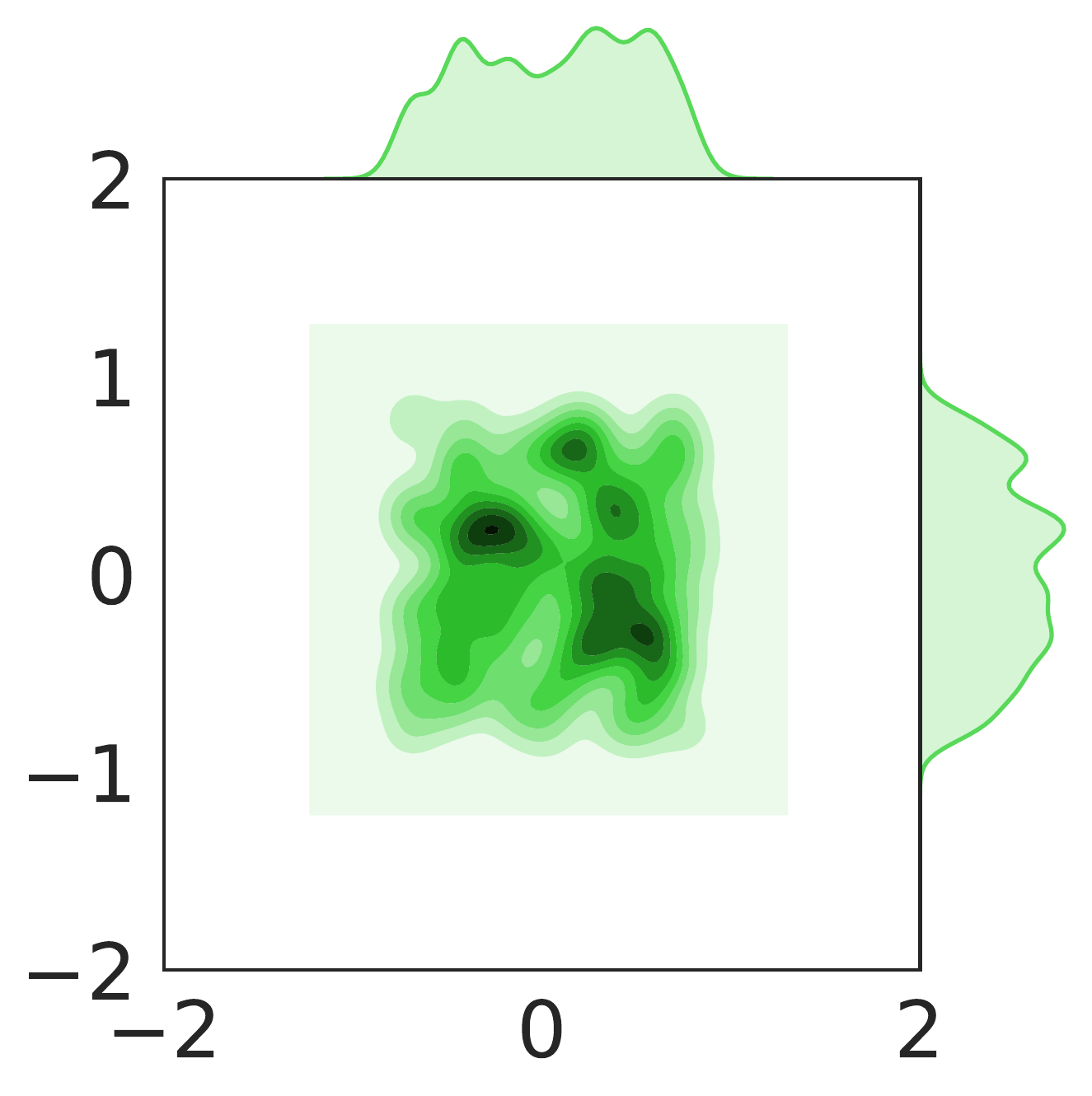}
\end{minipage}}
\subfigure[MA-GAIL (KL = 51.721)]{
\begin{minipage}[b]{0.3\linewidth}
\includegraphics[width=1\linewidth]{./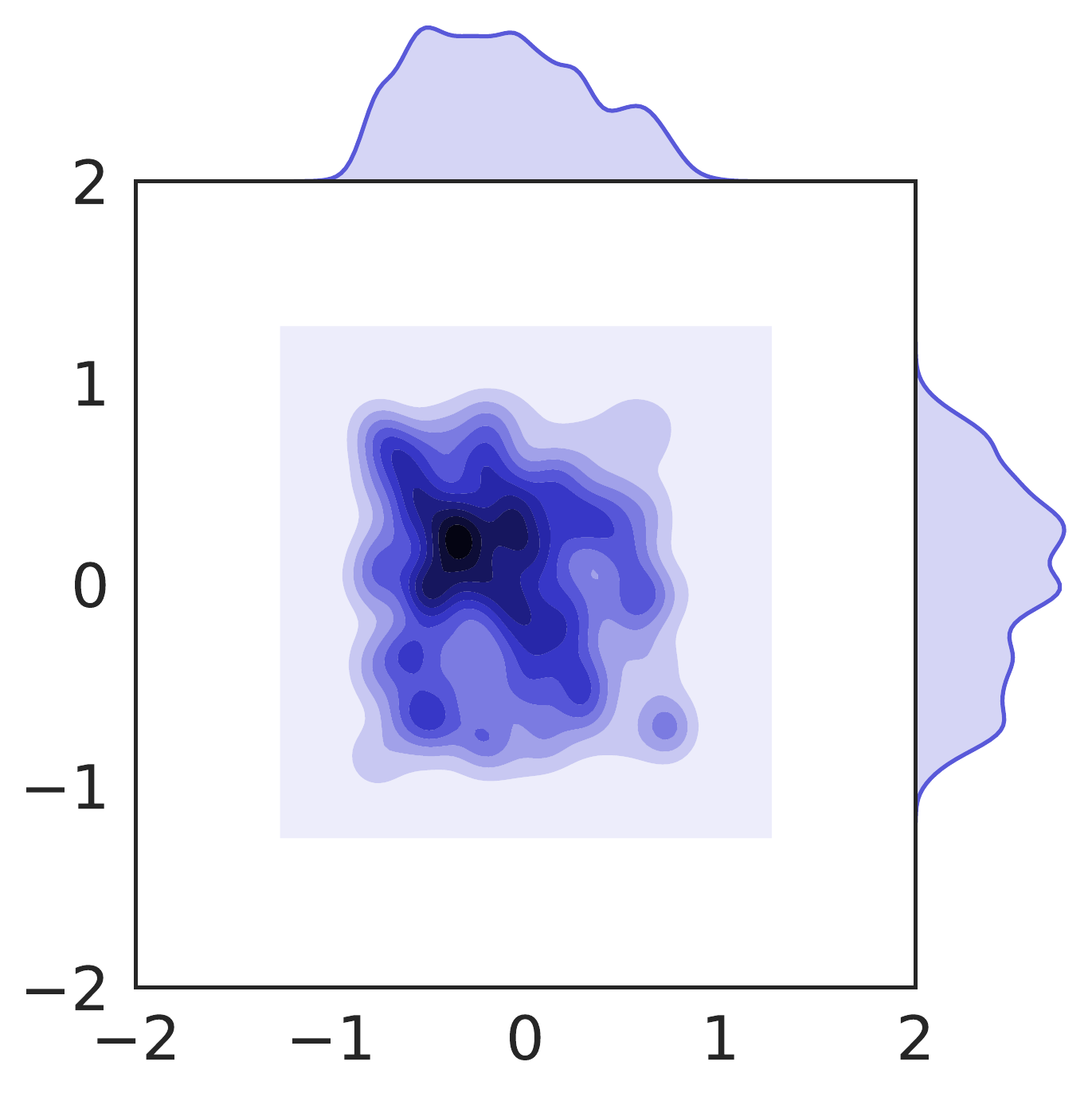}
\includegraphics[width=0.45\linewidth]{./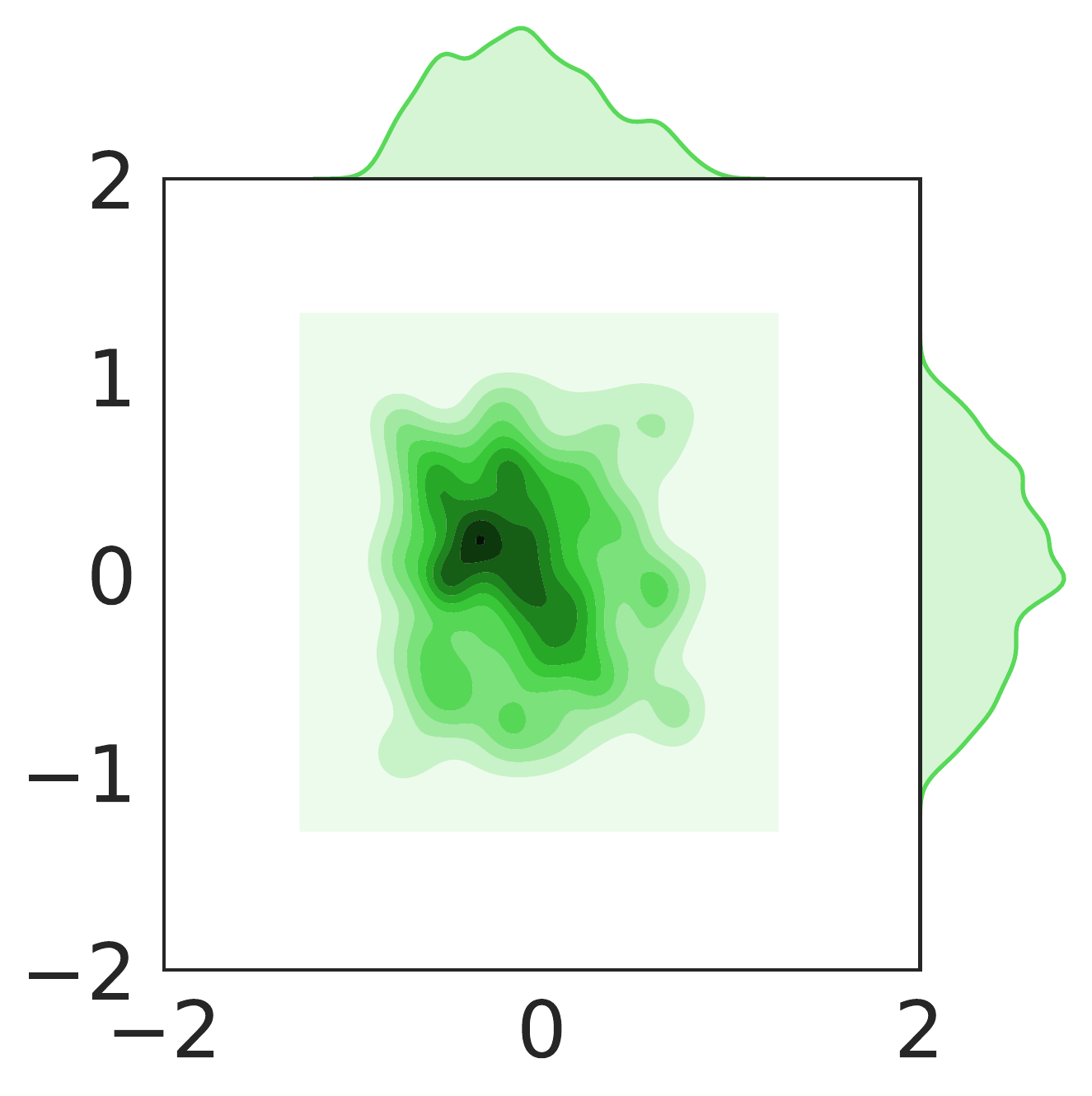}
\includegraphics[width=0.45\linewidth]{./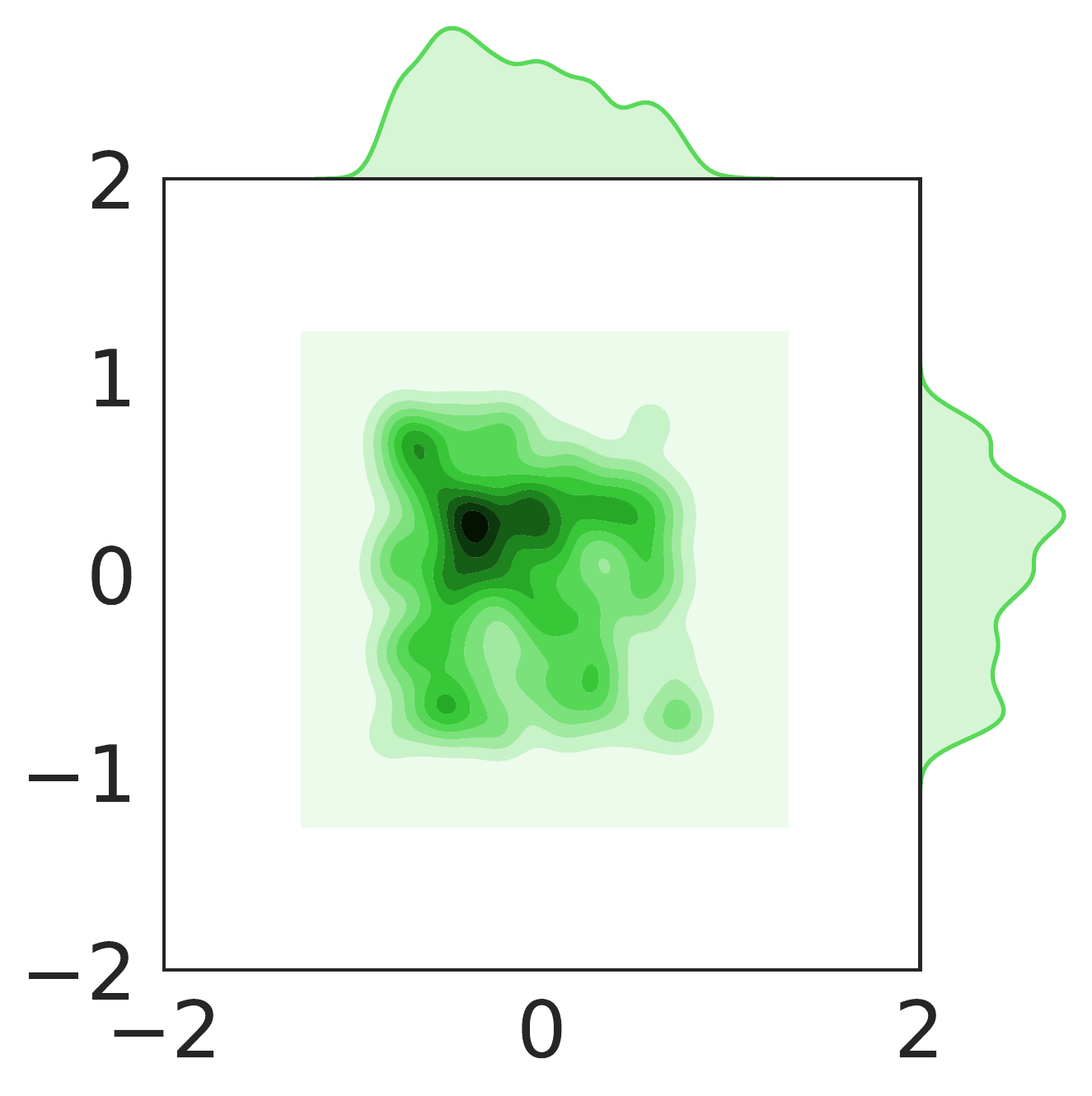}
\end{minipage}}
\subfigure[MA-AIRL (KL = 69.146)]{
\begin{minipage}[b]{0.3\linewidth}
\includegraphics[width=1\linewidth]{./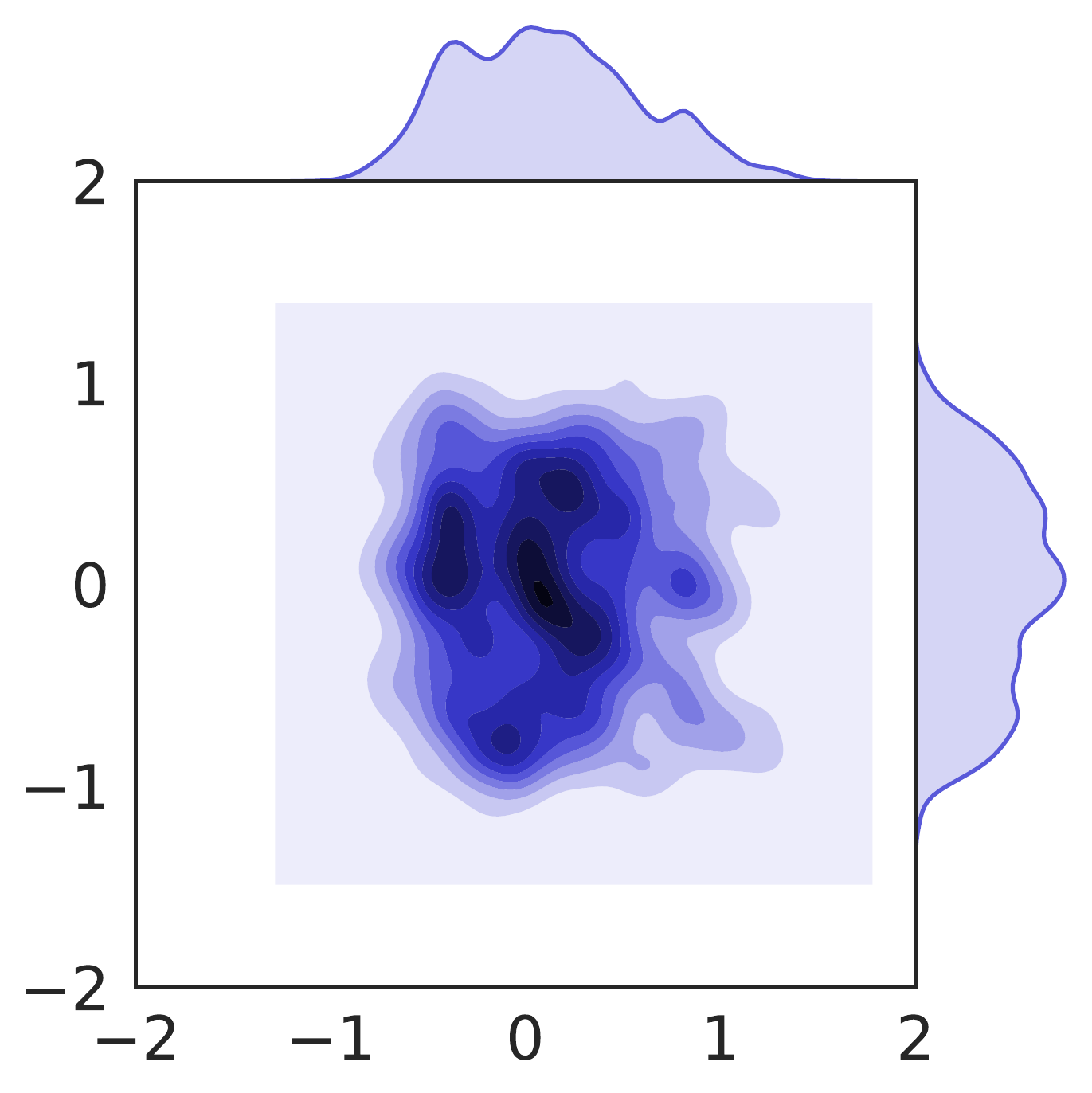}
\includegraphics[width=0.45\linewidth]{./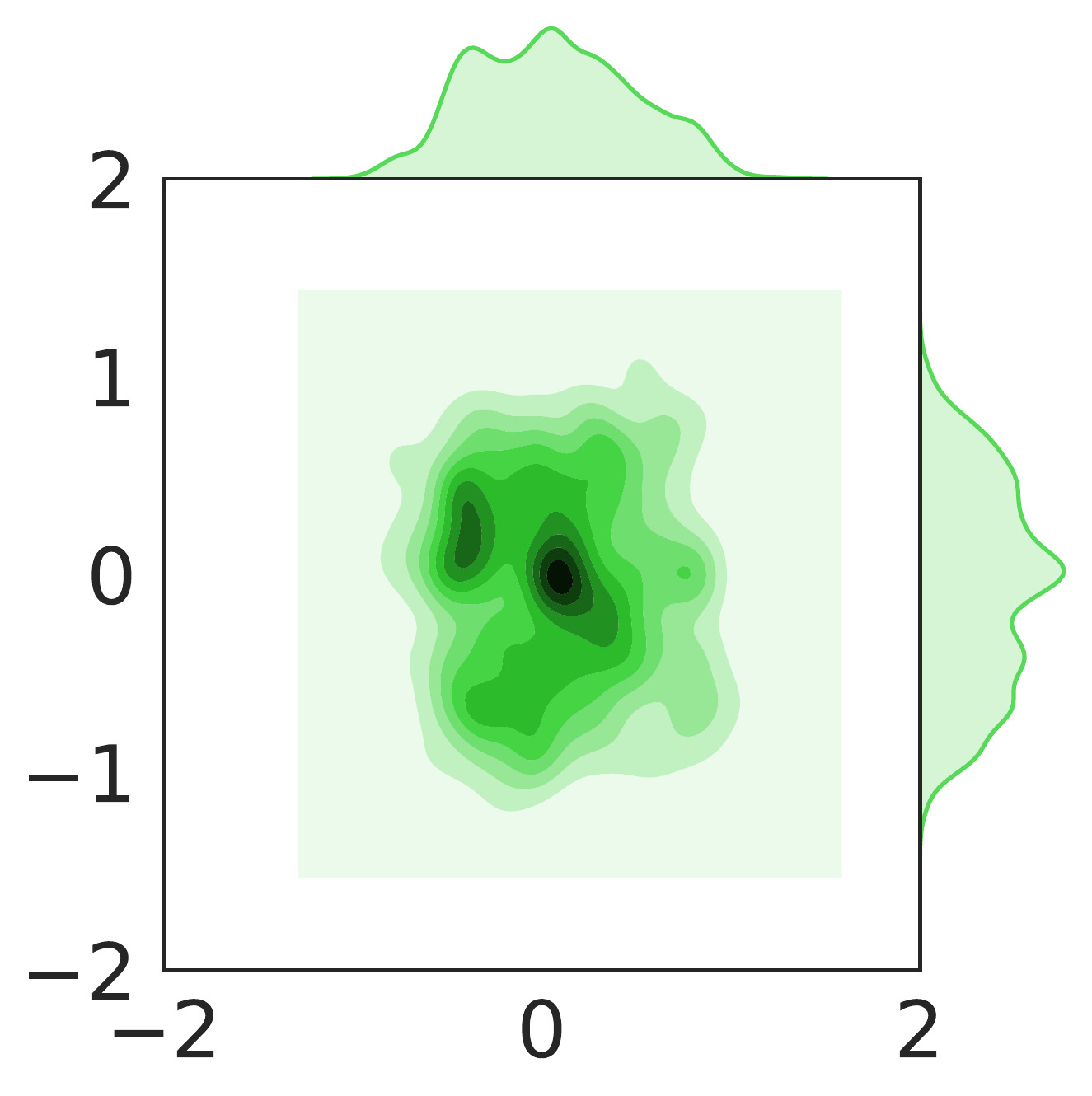}
\includegraphics[width=0.45\linewidth]{./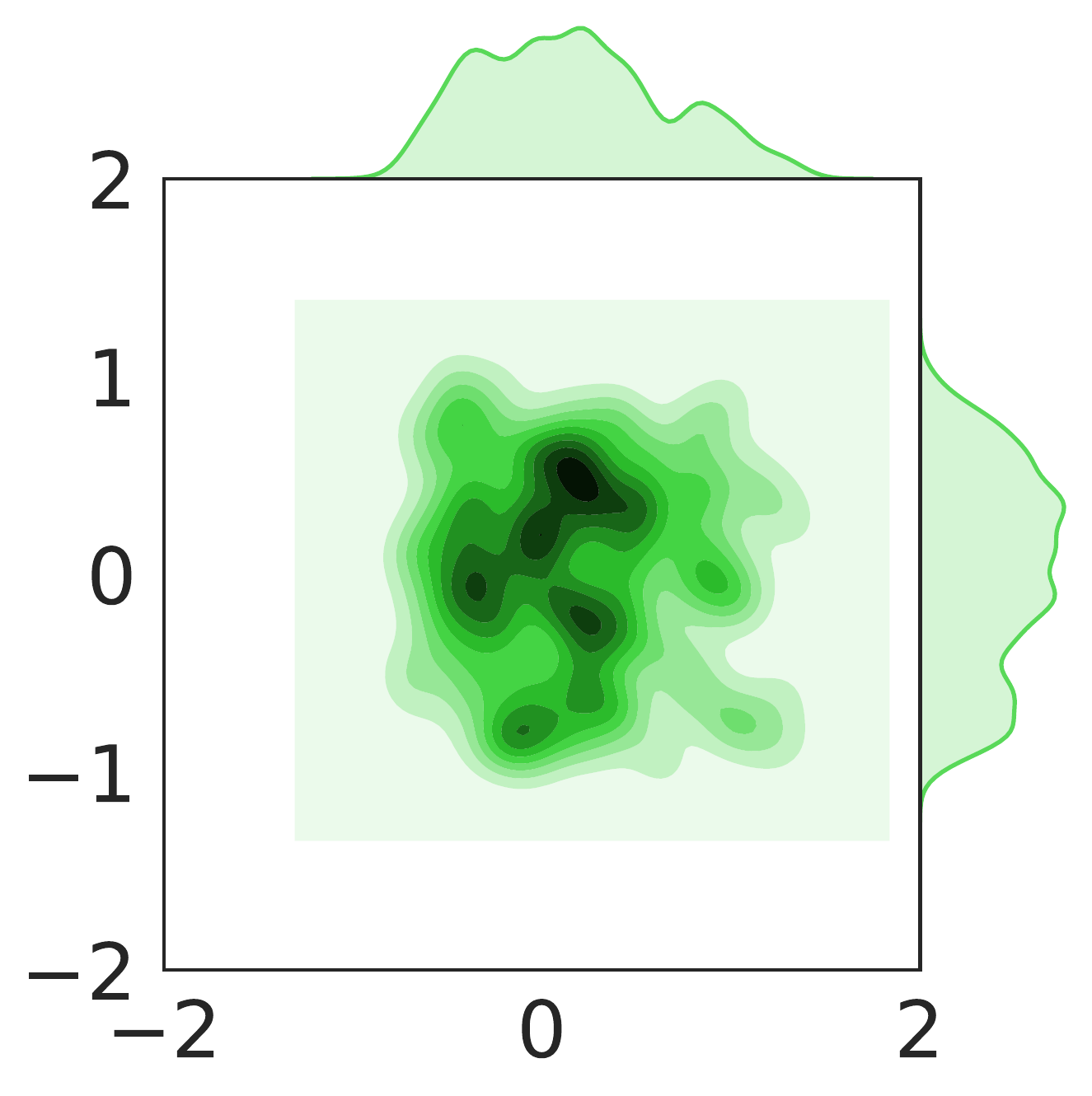}
\end{minipage}}
\subfigure[CoDAIL (KL = 3.113)]{
\begin{minipage}[b]{0.3\linewidth}
\includegraphics[width=1\linewidth]{./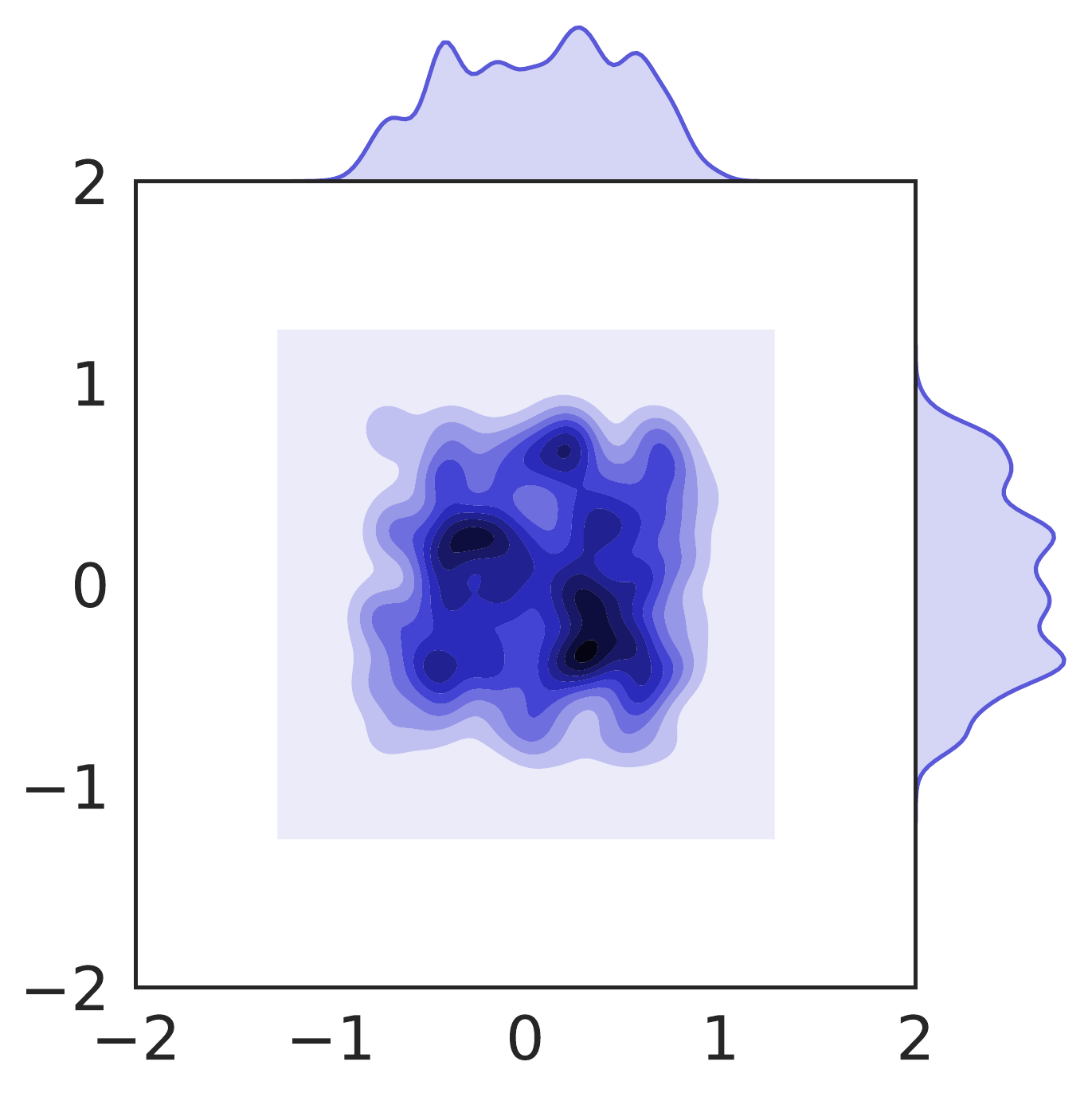}
\includegraphics[width=0.45\linewidth]{./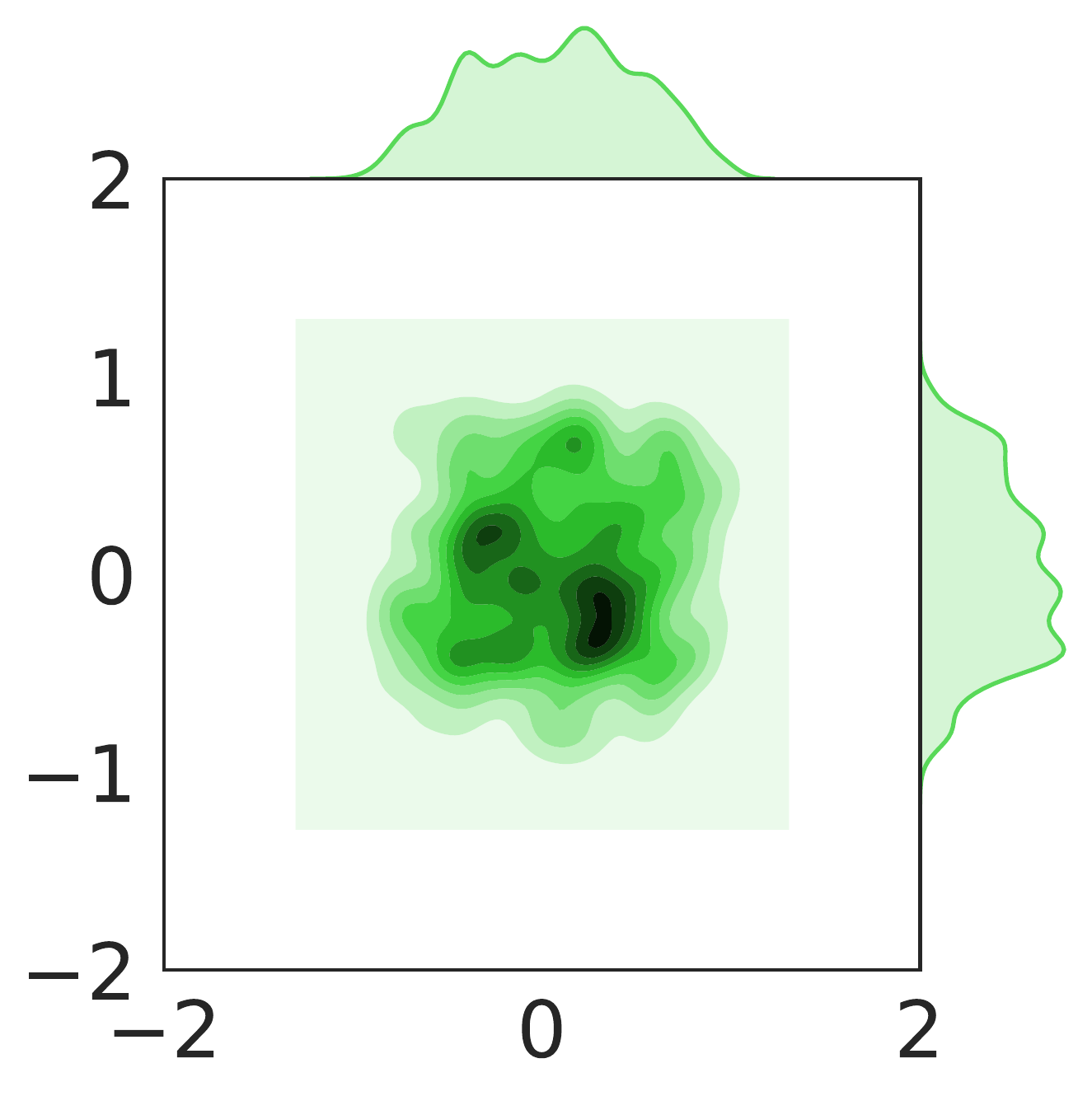}
\includegraphics[width=0.45\linewidth]{./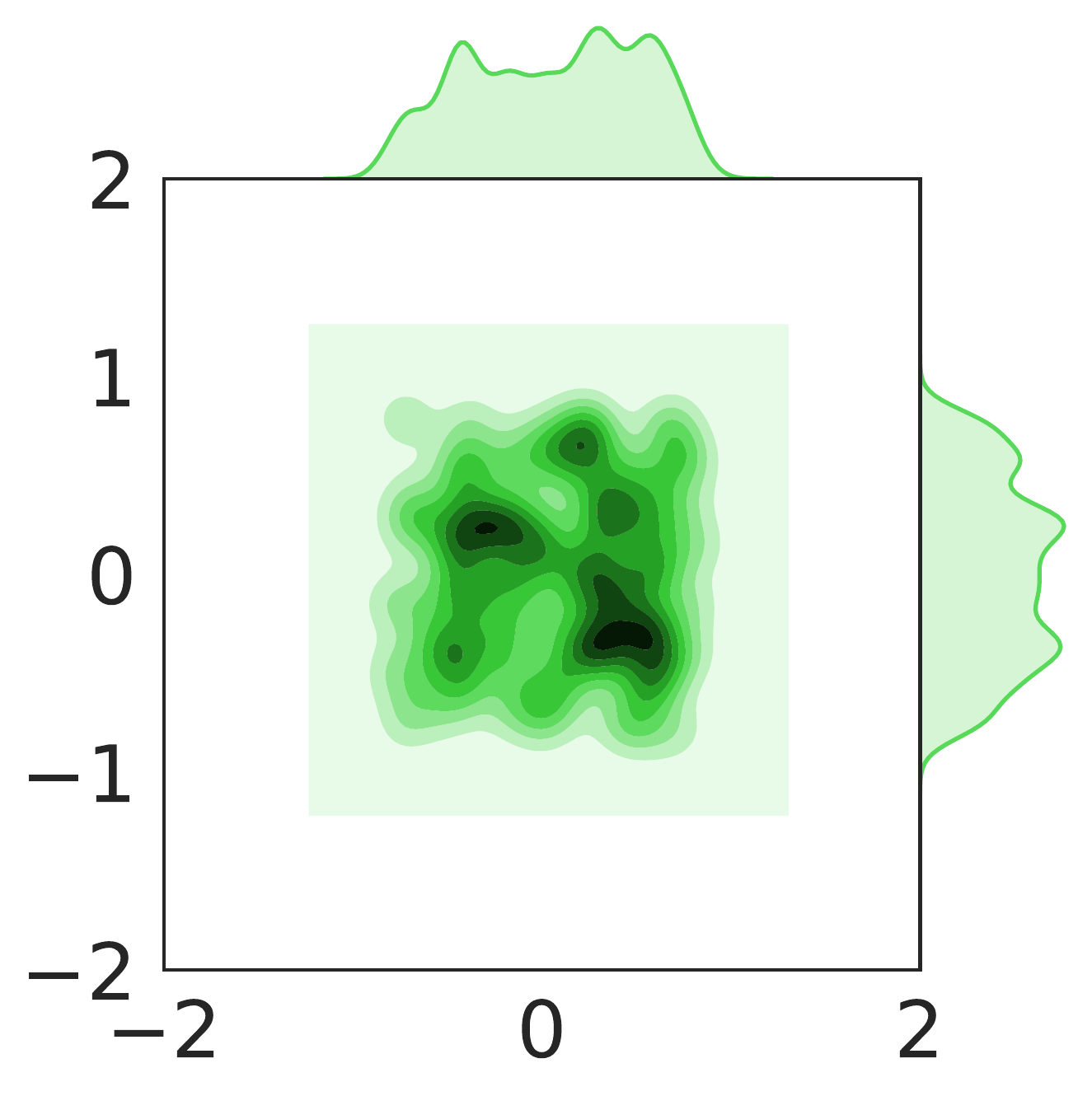}
\end{minipage}}
\subfigure[NC-DAIL (KL = 46.563)]{
\begin{minipage}[b]{0.3\linewidth}
\includegraphics[width=1\linewidth]{./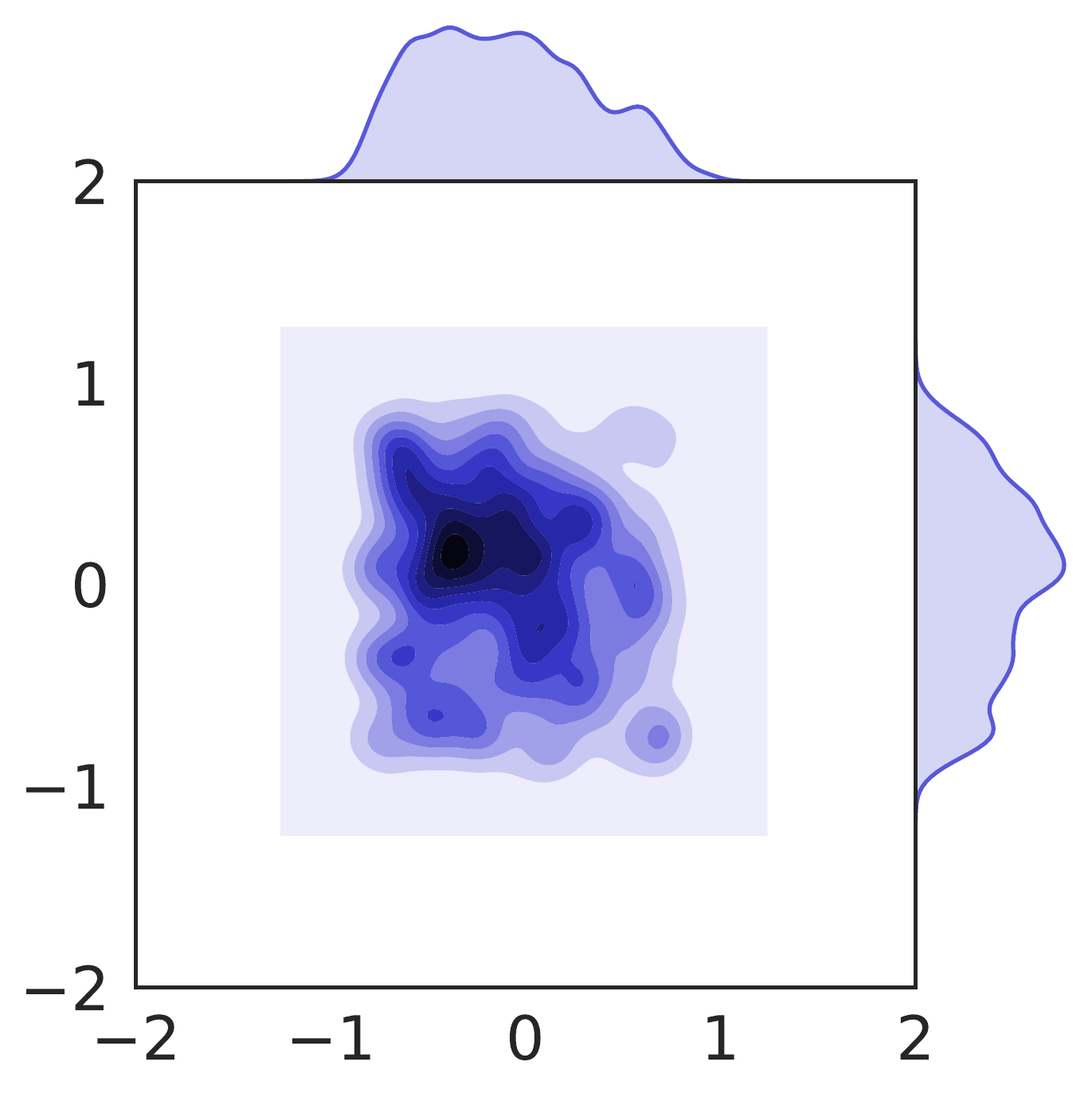}
\includegraphics[width=0.45\linewidth]{./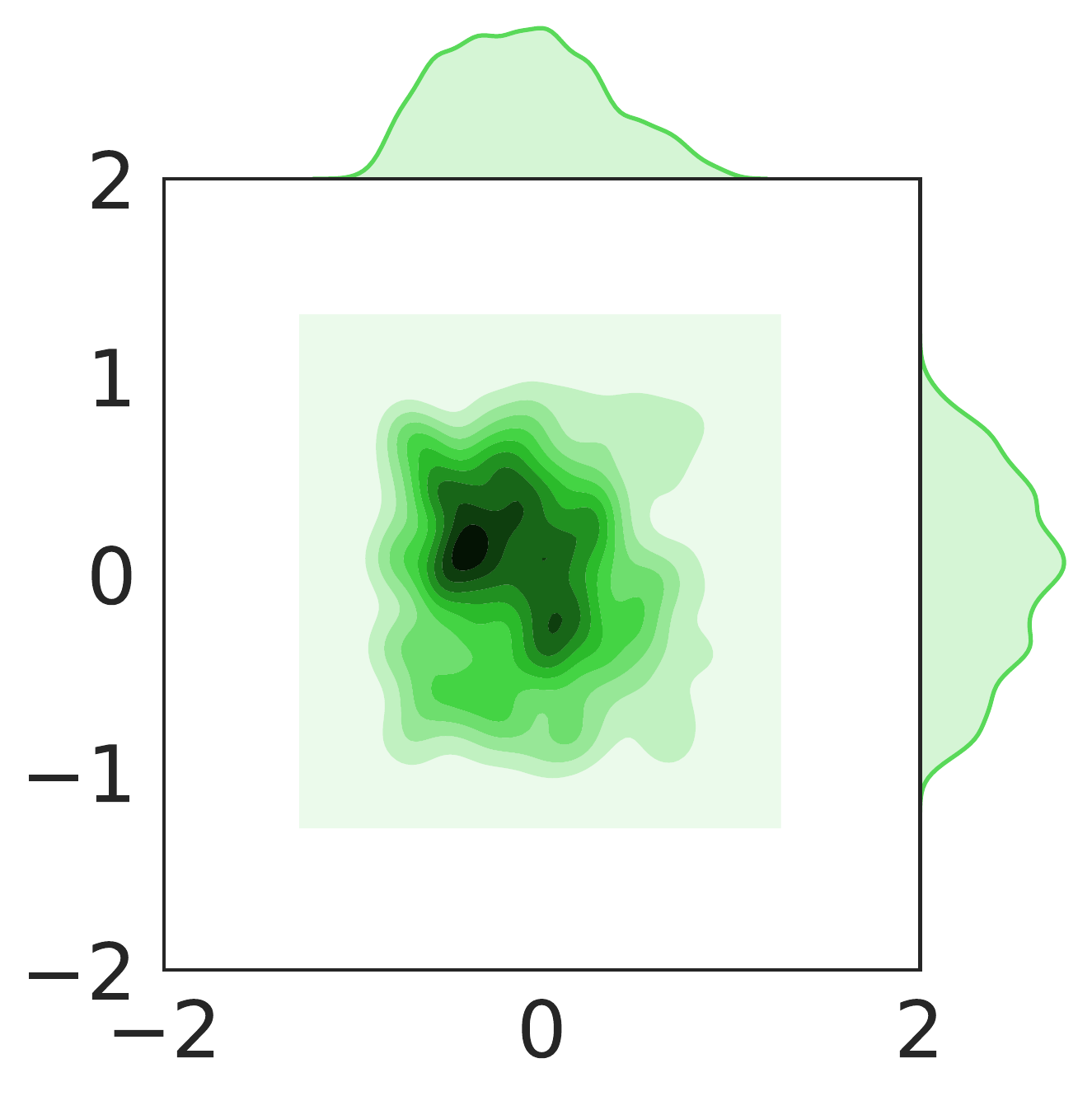}
\includegraphics[width=0.45\linewidth]{./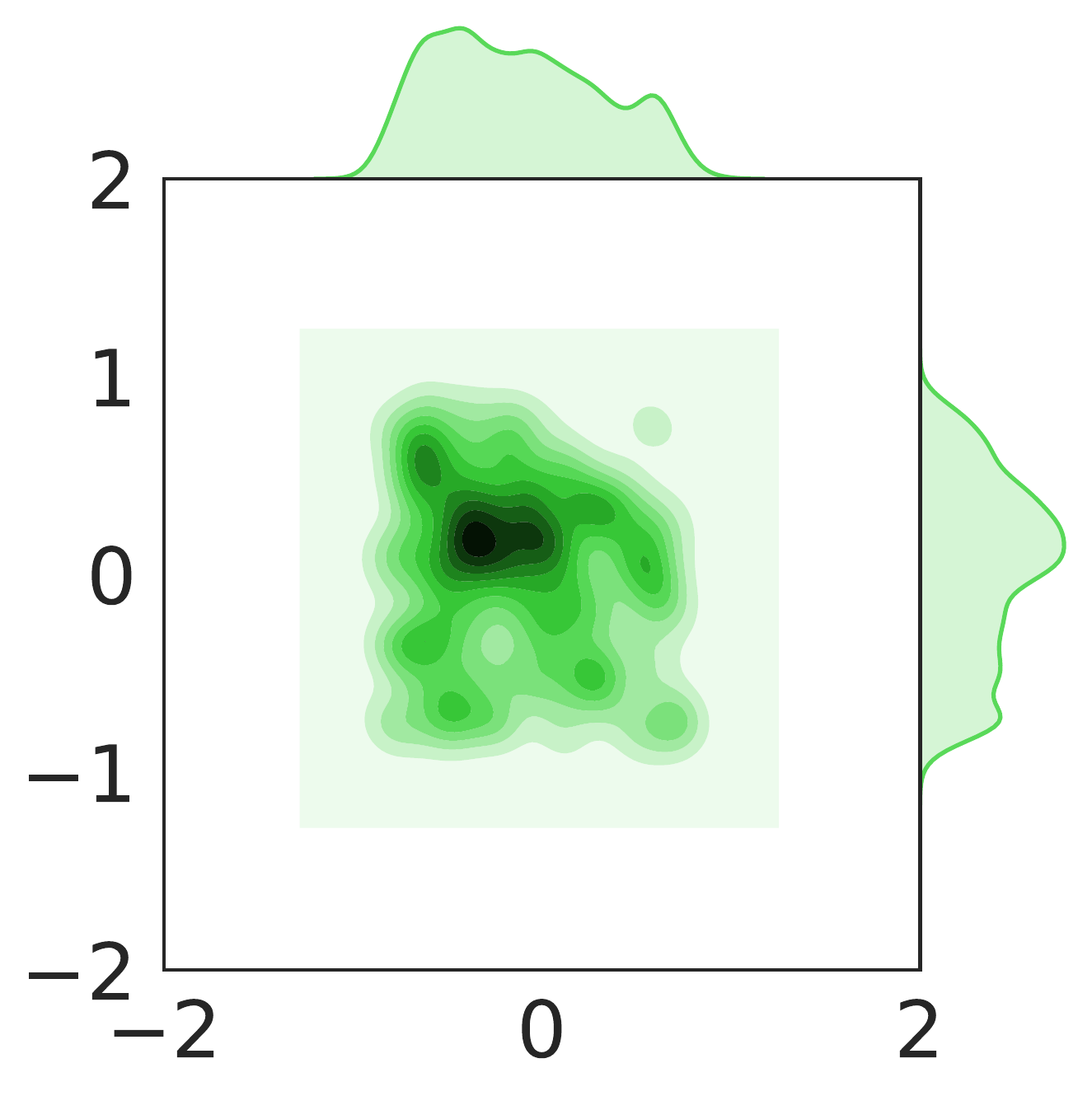}
\end{minipage}}
\subfigure[Random (KL = 191.344)]{
\begin{minipage}[b]{0.3\linewidth}
\includegraphics[width=1\linewidth]{./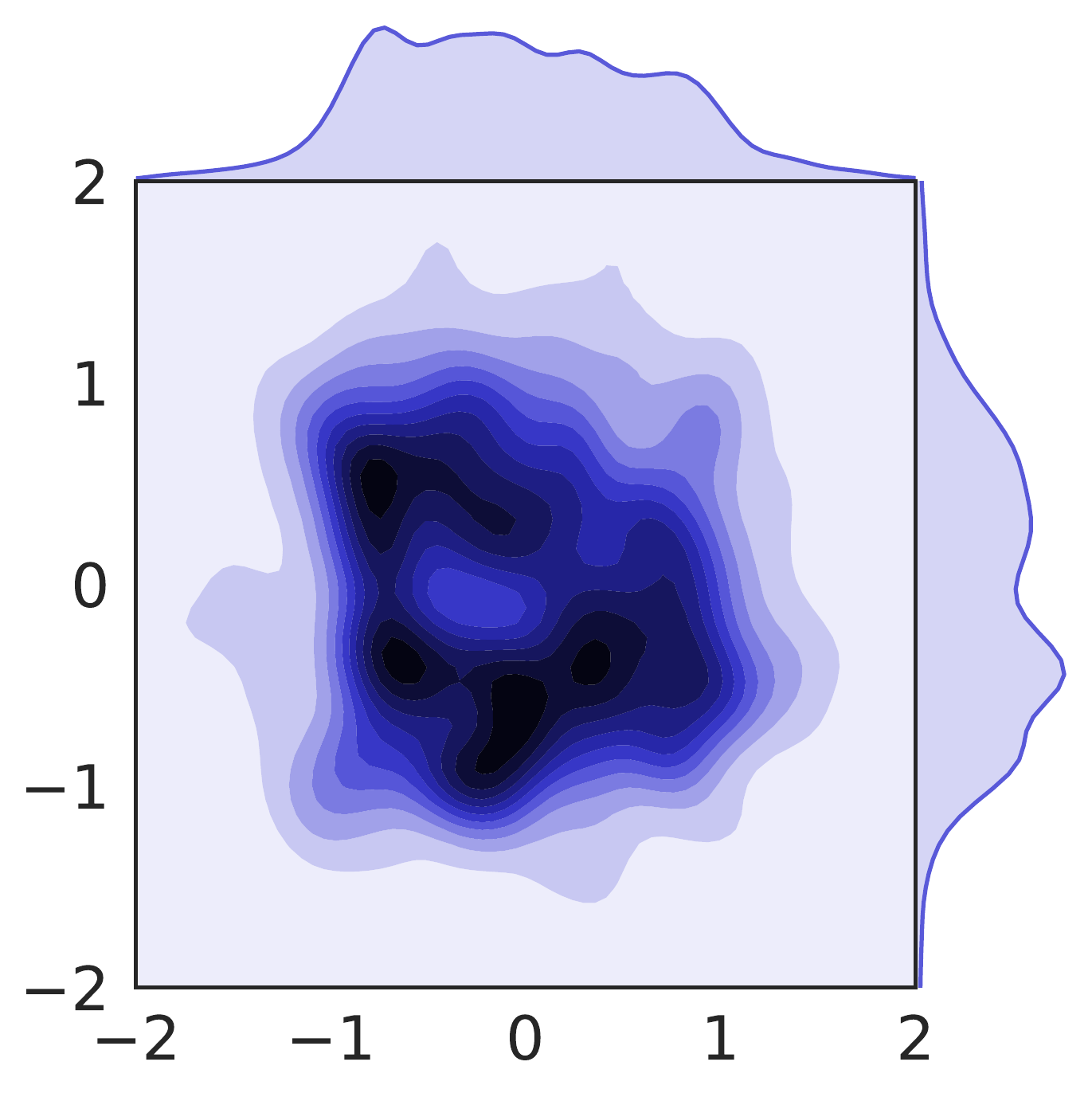}
\includegraphics[width=0.45\linewidth]{./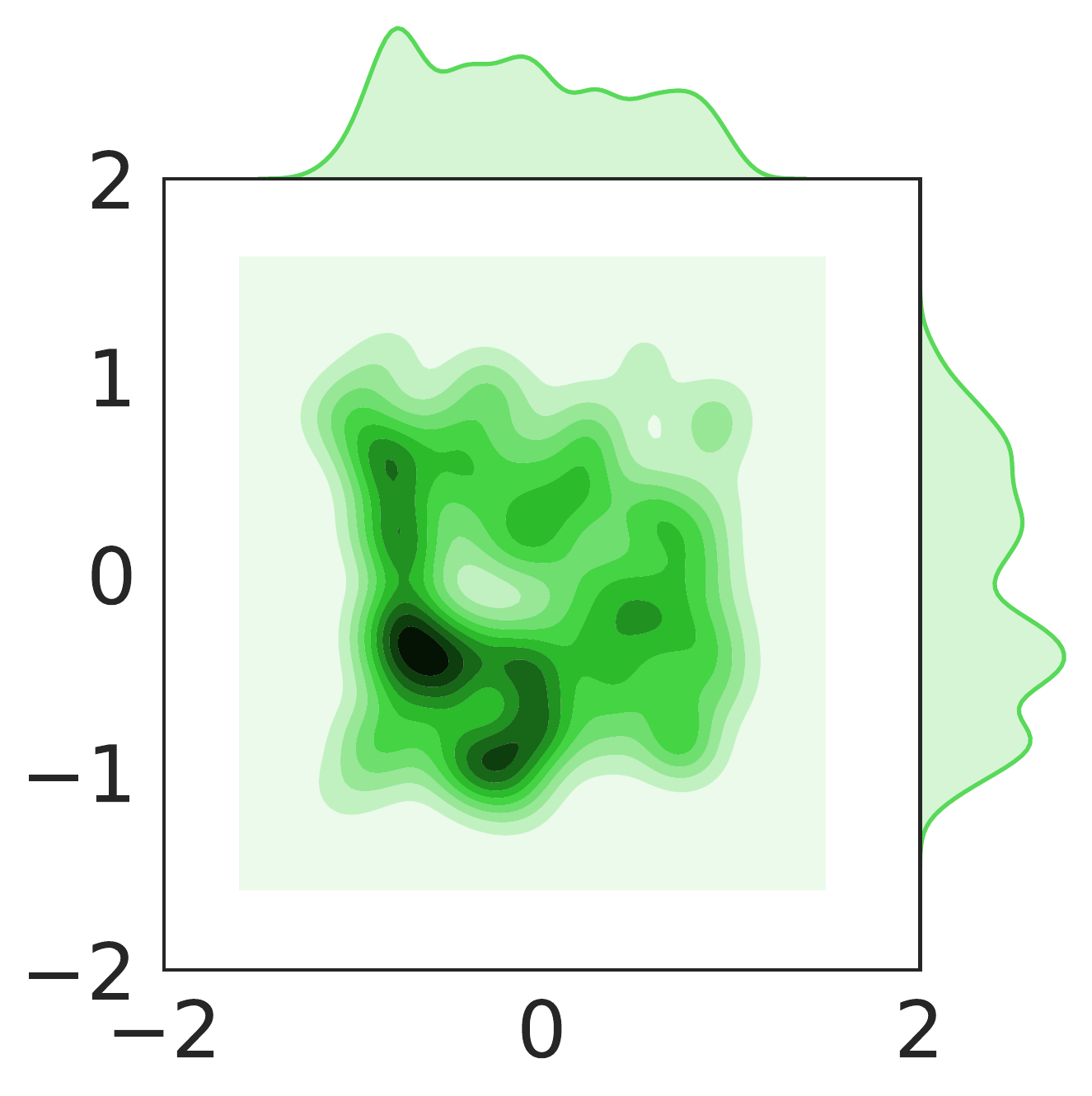}
\includegraphics[width=0.45\linewidth]{./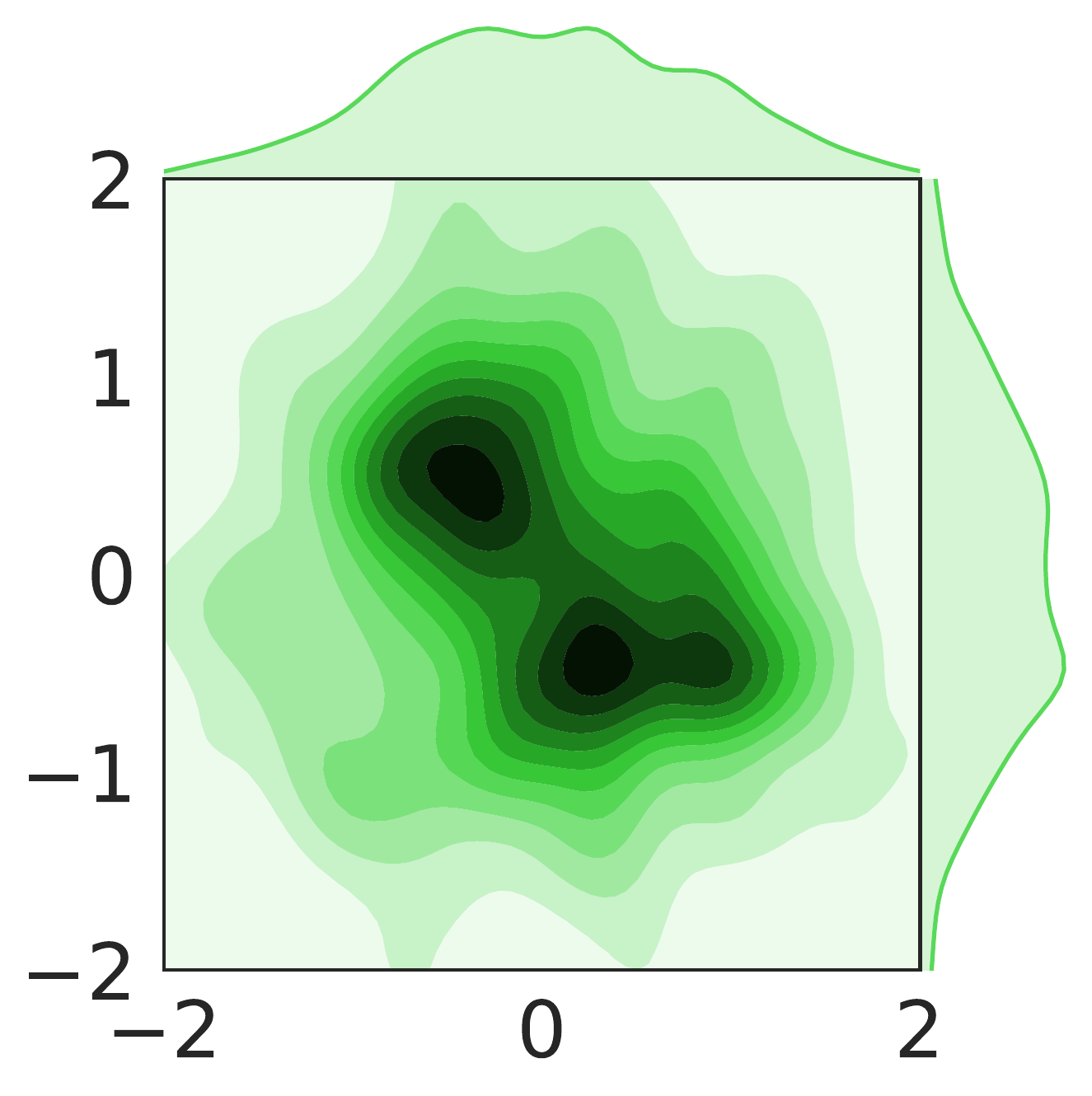}
\end{minipage}}
\vspace{-7pt}
\caption{The density and marginal distribution of agent positions, in 100 repeated episodes with different initialized states, generated from different learned policies upon Keep-away. The top row of each sub-figure is drawn from state-action pairs of all agents. Meanwhile, the bottom row explains for each individual (KL means the KL divergence between generated interactions shown in the top row and the demonstrators).}
\vspace{-15pt}
\label{fig:inter-dist-ka}
\end{figure*}

\subsection{Divergence over Interactions}
Since we aim to recover the interactions of agents generated by the learned policies, it is proper to evaluate the relevance between distributions of regenerated interactions and demonstration data. 
Specifically, we collect positions of agents over hundreds of state-action tuples, which can be regarded as the low-dimension projection of the state-action interactions. We start each episode from a different initial state but the same for each algorithm in one episode. We run all the experiments under the same random seed, and collect positions of each agent in the total 100 episodes, each with a maximum of 50 timesteps.

We first estimate the distribution of position $(x,y)$ via Kernel Density Estimation (KDE)~\citep{rosenblatt1956remarks} with Gaussian kernel to compute the Kullback-Leibler (KL) divergence between the generated interactions with the demonstrated ones, shown in \tb{tab:divergence}. It is evident that in terms of the KL divergence between regenerated interactions with demonstrator interactions, CoDAIL generates the interaction data that obtains the minimum gap with the demonstration interaction, and highly outperforms other baseline methods. Besides, MA-GAIL and NC-DAIL reflect about-the-same performance to model complex interactions, while MA-AIRL behaves the worst, even worse than random agents on Predator-prey.

\subsection{Visualizations of Interactions}
To further understand the interactions generated by learned policies compared with the demonstrators, we visualize the interactions for demonstrator policies and all learned ones. We plot the density distribution of positions, $(x,y)$ and marginal distributions of $x$-position and $y$-position. We illustrate the results conducted on Keep-away in \fig{fig:inter-dist-ka}, other scenarios can be found in the \ap{sec:vis-dist-others}. Higher frequency positions in collected data are colored darker in the plane, and higher the value with respect to its marginal distributions.

As shown in \fig{fig:inter-dist-ka}, the interaction densities of demonstrators and CoDAIL agents are highly similar (and with the smallest KL divergence), which tend to walk in the right-down side. In contrast, other learned agents fail to recover the demonstrator interactions. It is worth noting that even different policies can interact to earn similar rewards, but still keep vast differences among their generated interactions. Furthermore, such a result reminds us that the real reward is not the best metric to evaluate the quality of modeling the demonstrated interactions or imitation learning~\citep{li2017infogail}.

\section{Conclusion}
In this paper, we focus on modeling complex multi-agent interactions via imitation learning on demonstration data. We develop a decentralized adversarial imitation learning algorithm with correlated policies (CoDAIL) with approximated opponents modeling. CoDAIL allows for decentralized training and execution and is more capable of modeling correlated interactions from demonstrations shown by multi-dimensional comparisons against other state-of-the-art multi-agent imitation learning methods on several experiment scenarios. In the future, we will consider covering more imitation learning tasks and modeling the latent variables of policies for diverse multi-agent imitation learning.

\section*{Acknowledgement}
We sincerely thank Yaodong Yang for helpful discussion. The corresponding author Weinan Zhang is supported by NSFC (61702327, 61772333, 61632017). The author Minghuan Liu is supported by Wu Wen Jun Honorary Doctoral Scholarship, AI Institute, Shanghai Jiao Tong University.

\clearpage
\bibliography{iclr2020_conference}
\bibliographystyle{iclr2020_conference}

\clearpage
\appendix
\appendixpage

\section{Algorithm Outlines}
\subsection{CoDAIL Algorithm}
\alg{alg:codail} demonstrates the outline for our CoDAIL algorithm with non-correlated policy structure defined in \eq{eq:corr-policy}, where we approximate the opponents model $\iter{\sigma}{i}(\iter{a}{-i} | s)$, improve the discriminator $\iter{D}{i}$ and the policy $\pii$ iteratively.
\begin{algorithm}[]
    \caption{CoDAIL Algorithm}
    \label{alg:codail}
    \begin{algorithmic}[1]
       \STATE {\bfseries Input:} Expert interactive demonstrations $\Omega_E \sim \pi_E$, $N$ policy parameters $\iter{\theta}{1}, ..., \iter{\theta}{N}$, $N$ value parameters $\iter{\phi}{1}, ..., \iter{\phi}{N}$, $N$ opponents models parameters $\iter{\psi}{1}, ..., \iter{\psi}{N}$ and $N$ discriminator parameters $\iter{\omega}{1}, ..., \iter{\omega}{N}$;
       \FOR{$k = 0, 1, 2, \dotsc$}
            \STATE Sample interactions among $N$ agents $\Omega_k \sim \pi$, where $\pi$ is in correlated policy structure and each agent $i$ utilizes its opponents model $\iter{\sigma}{i}$ to help making decisions.
       \FOR{agent $i = 1, 2, \dotsc, N$}
                \STATE Use state-action pairs $(s, \iter{a}{-i}) \in \Omega_k$ to update $\iter{\psi}{i}$ to minimize the objective as shown in \eq{eq:oppo-loss}.
                \STATE For every state-action pair $(s, \iter{a}{i}) \in \Omega_k$, sample estimated opponent policies from opponents model: $\iter{\hat{a}}{-i} \sim \iter{\sigma}{i}(\iter{a}{i}|s)$, and update $\iter{\omega}{i}$ with the gradient as shown in \eq{eq:corr-gradient-discriminator-approx}.
                \STATE Compute advantage estimation $\iter{A}{i}$ for each tuple $(s,\iter{a}{i}, ,\iter{\hat{a}}{-i})$ with surrogate reward function $\iter{r}{i}(s,\iter{a}{i}, \iter{\hat{a}}{-i}) = \log(D^{(i)}_{\iter{\omega}{i}}(s,\iter{a}{i},\iter{\hat{a}}{-i}))-\log(1-D^{(i)}_{\iter{\omega}{i}}(s,\iter{a}{i},\iter{\hat{a}}{-i}))$
                    \begin{align}
                        \iter{A}{i}(s_t,a_{t}^{(i)},\hat{a}_{t}^{(-i)}) &= \sum_{k=0}^{T-1}(\gamma^t \iter{r}{i}(s_{t+k},a_{t+k}^{(i)}, \hat{a}_{t+k}^{(-i)})) + \gamma^{T}V_{\iter{\phi}{i}}^{(i)}(s, a_{T-1}^{(-i)}) \\
                        &- V_{\iter{\phi}{i}}^{(i)}(s, a_{t-1}^{(-i)})
                    \end{align}
                \STATE Update $\iter{\phi}{i}$ to minimize the objective:
                \begin{align}
                L(\iter{\phi}{i}) = \left \| \sum_{t=0}^{T}\gamma^t \iter{r}{i}(s,a_t^{(i)}, a_t^{(-i)})  - \iter{\hat{V}}{i}(s_t,a_{t-1}^{(-i)}) \right \|^2
                \end{align}
                \STATE Update $\iter{\theta}{i}$ following the gradient shown in \eq{eq:corr-gradient-policy-approx}:
                \begin{align}
                    \bbE_{s \sim P, \iter{\hat{a}}{-i}\sim\iter{\sigma}{i}, \iter{a}{i}\sim\piitheta}\left [\nabla_{\iter{\theta}{i}}\log{\piitheta(\iter{a}{i} | s, \iter{\hat{a}}{-i})} \iter{A}{i}(s, \iter{a}{i}, \iter{\hat{a}}{-i}) \right ]-\lambda \nabla_{\iter{\theta}{i}}H(\piitheta)
                \end{align}
           \ENDFOR
       \ENDFOR
    \end{algorithmic}
\end{algorithm}

\clearpage
\subsubsection{NC-DAIL Algorithm}
We outline the step by step NC-DAIL algorithm with non-correlated decomposition of joint policy defined in \eq{eq:non-corr-policy} in \alg{alg:ncdail}.
\begin{algorithm}[]
    \caption{NC-DAIL Algorithm}
    \label{alg:ncdail}
    \begin{algorithmic}[1]
       \STATE {\bfseries Input:} Expert interactive demonstrations $\Omega_E \sim \pi_E$, $N$ policy parameters $\iter{\theta}{1}, ..., \iter{\theta}{N}$, $N$ value parameters $\iter{\phi}{1}, ..., \iter{\phi}{N}$ and $N$ discriminator parameters $\iter{\omega}{1}, ..., \iter{\omega}{N}$;
       \FOR{$k = 0, 1, 2, \dotsc$}
            \STATE Sample interactions between $N$ agents $\Omega_k \sim \pi$, where $\pi$ is in non-correlated policy structure.
       \FOR{agent $i = 1, 2, \dotsc, N$}
                \STATE Use $(s,\iter{a}{i},\iter{a}{-i}) \in \Omega_k$to update $\iter{\omega}{i}$with the gradient:
                    \begin{align}
                        \hat\bbE_{\Omega_k}\left [\nabla_{\omega}\log(D^{(i)}_{\iter{\omega}{i}}(s,\iter{a}{i},\iter{a}{-i}))\right ] + \hat\bbE_{\Omega_E}[\nabla_{\iter{\omega}{i}}  \log(D^{(i)}_{\iter{\omega}{i}}(s,\iter{a}{i},\iter{a}{-i}))]~.
                    \end{align}
               \STATE Compute advantage estimation $\iter{A}{i}$ for $(s,\iter{a}{i},\iter{a}{-i}) \in \Omega_k$ with surrogate reward function $\iter{r}{i}(s,\iter{a}{i}, \iter{a}{-i}) = \log(D^{(i)}_{\iter{\omega}{i}}(s,\iter{a}{i},\iter{a}{-i}))-\log(1-D^{(i)}_{\iter{\omega}{i}}(s,\iter{a}{i},\iter{a}{-i}))$
                    \begin{align}
                        \iter{A}{i}(s_t,a_{t}^{(i)},a_{t}^{(-i)}) &= \sum_{k=0}^{T-1}(\gamma^t \iter{r}{i}(s_{t+k},a_{t+k}^{(i)}, a_{t+k}^{(-i)})) + \gamma^{T}V_{\iter{\phi}{i}}^{(i)}(s, a_{T-1}^{(-i)}) \\
                        &- V_{\iter{\phi}{i}}^{(i)}(s, a_{t-1}^{(-i)})
                    \end{align}
                \STATE Update $\iter{\phi}{i}$ to minimize the objective:
                \begin{align}
                L(\iter{\phi}{i}) = \left \| \sum_{t=0}^{T}\gamma^t \iter{r}{i}(s,a_t^{(i)}, a_t^{(-i)})  - \iter{\hat{V}}{i}(s_t,a_{t-1}^{(-i)}) \right \|^2
                \end{align}
                \STATE Update $\iter{\theta}{i}$ by taking a gradient step with:
                \begin{align}
                    \hat\bbE_{\Omega_k}\left [ \nabla_{\iter{\theta}{i}}\log \pii(\iter{a}{i},s)\iter{A}{i}(s,a) \right] - \lambda \nabla_{\iter{\theta}{i}}H(\piitheta)~.
                \end{align}
           \ENDFOR
       \ENDFOR
    \end{algorithmic}
\end{algorithm}

\section{Model Architectures}

During our experiments, we use two layer MLPs with 128 cells in each layer, for policy networks, value networks, discriminator networks and opponents model networks on all scenarios. Specifically for opponents models, we utilize a multi-head-structure network, where each head predicts each opponent's action separately, and we get the overall opponents joint action $\iter{a}{-i}$ by concatenating all actions. The batch size is set to 1000. The policy is trained using K-FAC optimizer~\citep{martens2015optimizing} with learning rate of 0.1 and with a small $\lambda$ of 0.05. All other parameters for K-FAC optimizer are the same in \citep{wu2017scalable}.
We train each algorithm for 55000 epochs with 5 random seeds to gain its average performance on all environments.

\clearpage
\section{Raw Results}\label{sec:raw-result}
We list the raw obtained rewards of all algorithms in each scenarios.

\begin{table}[H]
\centering
\caption{Raw average total rewards in 2 comparative tasks. Means and standard deviations are taken across different random seeds.
}
\begin{tabular}{c|c|c}
\hline
  Algorithm  & {Coop.-Comm.} & {Coop.-Navi.}\\\hline
    Demonstrators & -24.560 $\pm$ 1.213 & -178.597 $\pm$ 6.383 \\\hline
    MA-AIRL & -25.366 $\pm$ 1.492 & -172.733 $\pm$ 5.595 \\
    MA-GAIL & -25.081 $\pm$ 1.421 & -172.169 $\pm$ 4.105 \\
    NC-DAIL & -25.177 $\pm$ 1.371 & -171.685 $\pm$ 4.591 \\
    CoDAIL & -25.107 $\pm$ 1.486 & -183.846 $\pm$ 5.728 \\\hline
    Random & -247.606 $\pm$ 17.842 & -1139.569 $\pm$ 19.192 \\\hline
\end{tabular}
\label{tb:raw-cooperative-task}
\end{table} 

\begin{table}[H]
\centering
\caption{Raw average rewards of each agent in 2 competitive tasks, where agent+ and agent- represent 2 teams of agents and total is their sum. Means and standard deviations are taken across different random seeds.
}
\scalebox{1}{
\begin{tabular}{c|ccc}
\hline
  \multirow{2}*{Algorithm}  & \multicolumn{3}{c}{Keep-away}\\\cline{2-4}
   & Total & Agent+ & Agent-\\\hline
    Demonstrators & -18.815 $\pm$ 0.909 & -12.092 $\pm$ 0.617 & -6.723 $\pm$ 0.430\\\hline
    MA-AIRL & -31.088 $\pm$ 2.371 & -15.367 $\pm$ 3.732 & -15.721 $\pm$ 4.448\\
    MA-GAIL & -20.778 $\pm$ 0.994 & -12.818 $\pm$ 1.105 & -7.959 $\pm$ 0.796\\
    NC-DAIL & -20.619 $\pm$ 0.957 & -12.357 $\pm$ 1.424 & -8.262 $\pm$ 1.310\\
    CoDAIL & -19.084 $\pm$ 0.882 & -12.142 $\pm$ 0.578 & -6.942 $\pm$ 0.433 \\\hline
    Random & -47.086 $\pm$ 2.485 & 13.091 $\pm$ 2.032 & -60.177 $\pm$ 2.225\\
\hline
\end{tabular}
}
~\\
\scalebox{1}{
\begin{tabular}{c|ccc}
\hline
  \multirow{2}*{Algorithm}  & \multicolumn{3}{c}{Pred.-Prey}\\\cline{2-4}
   & Total & Agent+ & Agent-\\\hline
    Demonstrators & 65.202 $\pm$ 18.661 & 44.820 $\pm$ 4.663 & -69.258 $\pm$ 5.361\\\hline
    MA-AIRL & -210.546 $\pm$ 80.333 & 8.040 $\pm$ 3.626 & -234.666 $\pm$ 71.165\\
    MA-GAIL & 65.202 $\pm$ 18.661 & 44.820 $\pm$ 4.663 & -69.258 $\pm$ 5.361\\
    NC-DAIL & 59.553 $\pm$ 30.684 & 42.320 $\pm$ 10.323 & -67.407 $\pm$ 3.700\\
    CoDAIL & 79.445 $\pm$ 5.913 & 47.480 $\pm$ 4.067 & -61.909 $\pm$ 6.367 \\\hline
    Random & -31.747 $\pm$ 7.865 & 5.160 $\pm$ 1.170 & -47.227 $\pm$ 7.830\\
\hline
\end{tabular}
}
\label{tb:raw-competitive-task}
\end{table}

\clearpage
\section{Hyperparameter Sensitivity}\label{sec:hyper-result}

\begin{table}[H]
\centering
\caption{Results of different training frequency (1:4, 1:2, 1:1, 2:1, 4:1) of $D$ and $G$ on Communication-navigation. Means and standard deviations are taken across different random seeds.
}
\begin{tabular}{c|c}
\hline
  Training Frequency & Total Reward Difference\\\hline
    1:4 & 2541.144 $\pm$ 487.711 \\
    1:2 & 12.004 $\pm$ 5.496 \\ 
    1:1 & 6.249 $\pm$ 2.779 \\
    2:1 & 1136.255 $\pm$ 1502.604 \\
    4:1 & 2948.878 $\pm$ 1114.528 \\
\hline
\end{tabular}
\label{tb:hyper_dg}
\end{table} 

\begin{figure}[H]
\vspace{-10pt}
\centering
\includegraphics[width=0.6
\linewidth]{./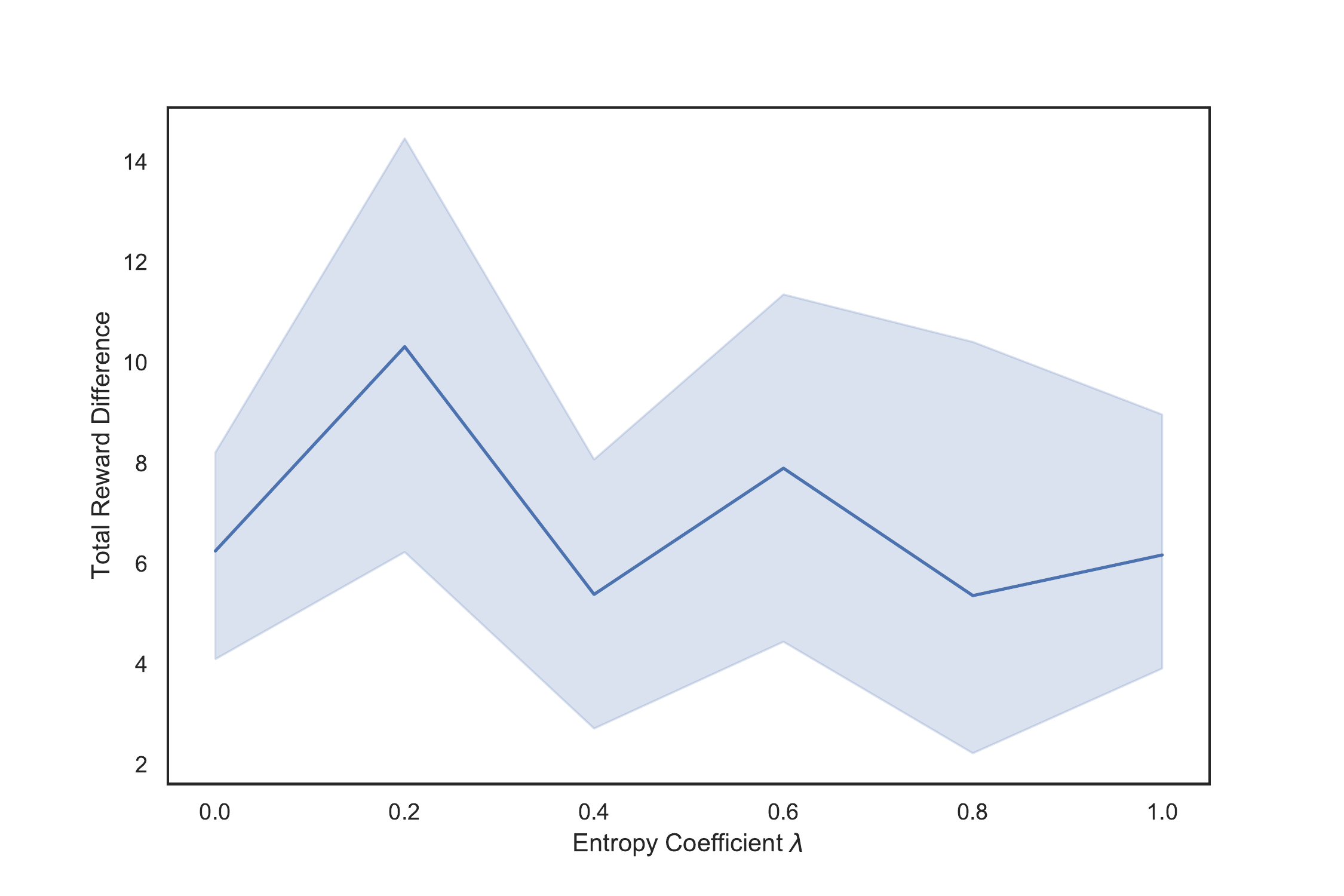}
\vspace{-10pt}
\caption{Results of different entropy coefficient $\lambda$.}
\label{fig:hyper_ent}
\end{figure}

We evaluate how the stability of our algorithm when the hyperparameters change during our experiments on Communication-navigation. \tb{tb:hyper_dg} shows the total reward difference between learned agents and demonstrators when we modify the training frequency of $D$ and $G$ (i.e., the policy), which indicates that the frequencies of $D$ and $G$ are more stable when $D$ is trained slower than $G$, and the result reaches a relative better performance when the frequency is 1:2 or 1:1. \fig{fig:hyper_ent} illustrates that the choice of $\lambda$ has little effect on the total performance. The reason may be derived from the discrete action space in this environment, where  the policy entropy changes gently.

\clearpage

\section{Interaction Visualizations upon Other Scenarios}\label{sec:vis-dist-others}
We show the density of interactions for different methods along with demonstrator policies conducted upon Cooperative-communication in \fig{fig:vis-dist-cc}.
\begin{figure}[H]
\centering
\subfigure[Demonstrators (KL = 0)]{
\begin{minipage}[b]{0.3\linewidth}
\includegraphics[width=1\linewidth]{./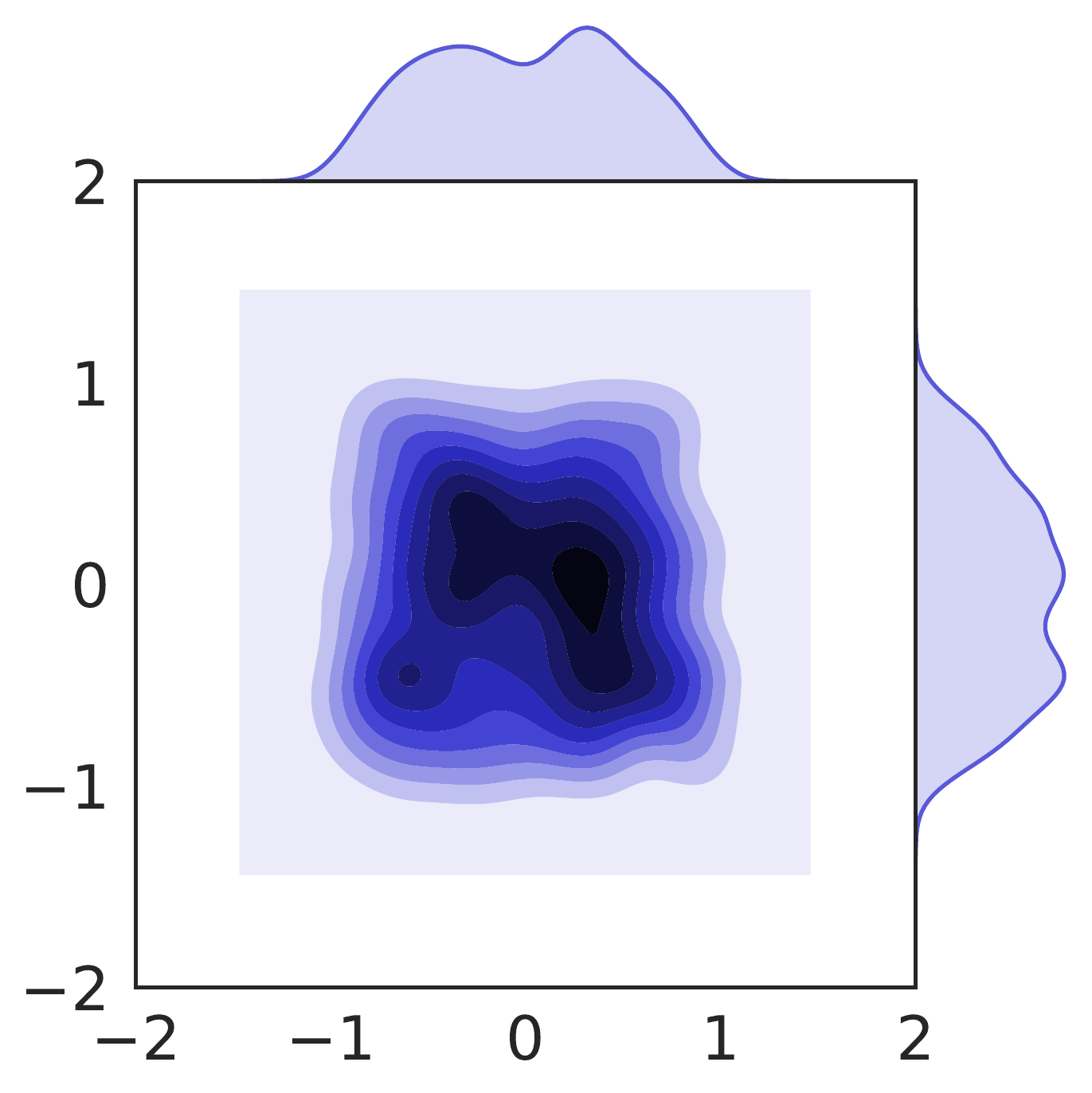}
\end{minipage}}
\subfigure[MA-GAIL (KL = 34.681)]{
\begin{minipage}[b]{0.3\linewidth}
\includegraphics[width=1\linewidth]{./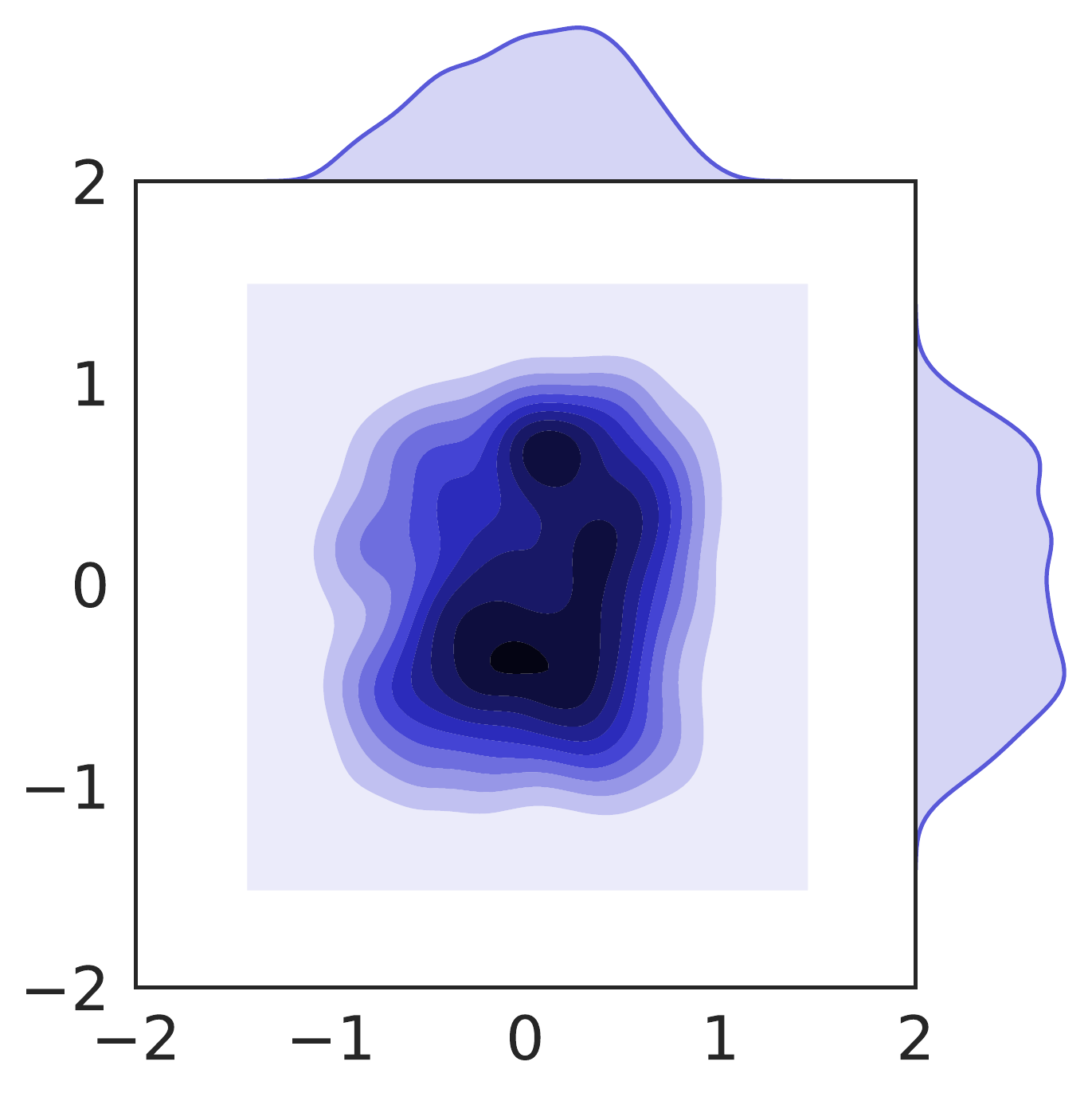}
\end{minipage}}
\subfigure[MA-AIRL (KL = 35.525)]{
\begin{minipage}[b]{0.3\linewidth}
\includegraphics[width=1\linewidth]{./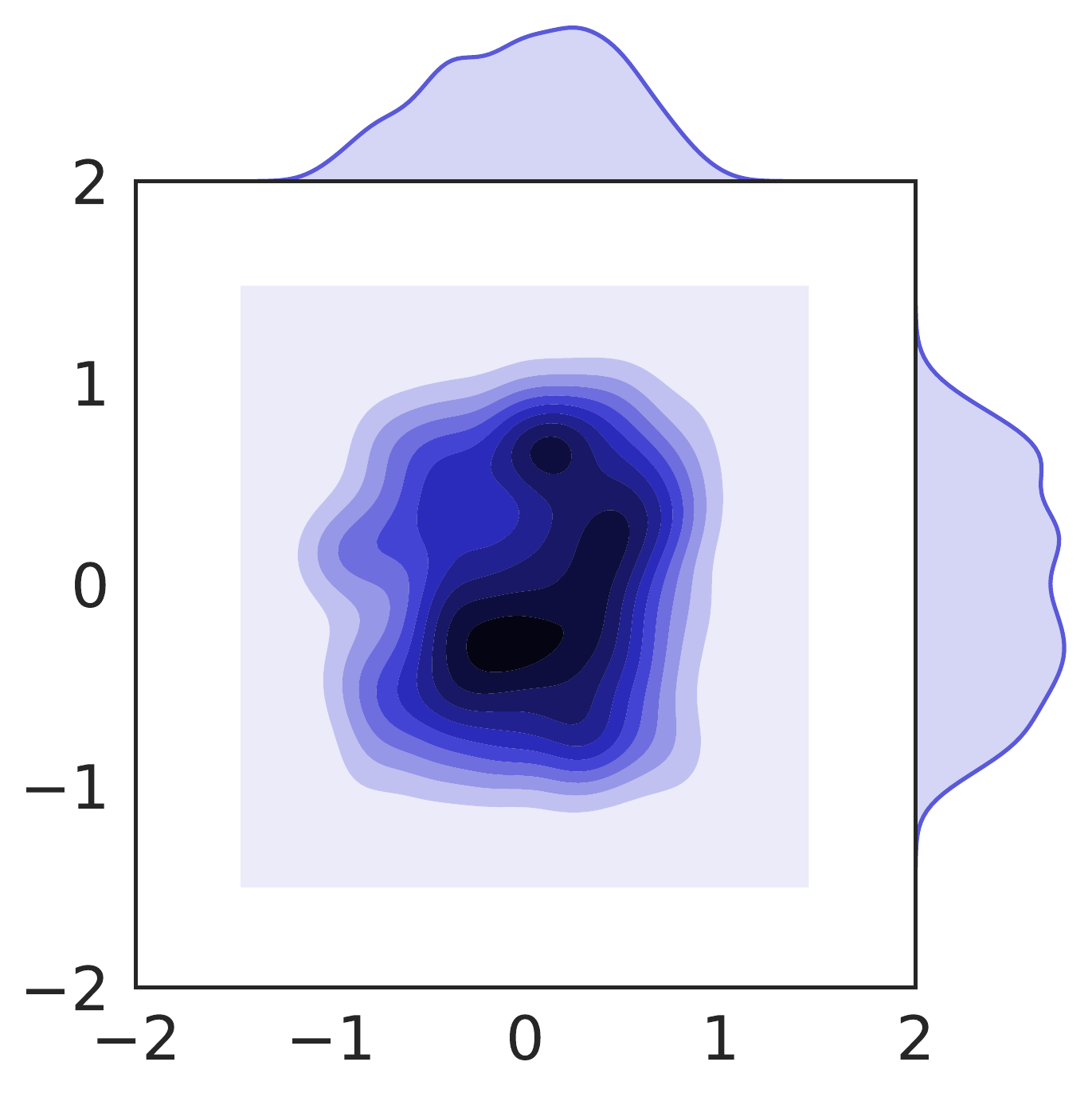}
\end{minipage}}
\subfigure[CoDAIL (KL = 6.427)]{
\begin{minipage}[b]{0.3\linewidth}
\includegraphics[width=1\linewidth]{./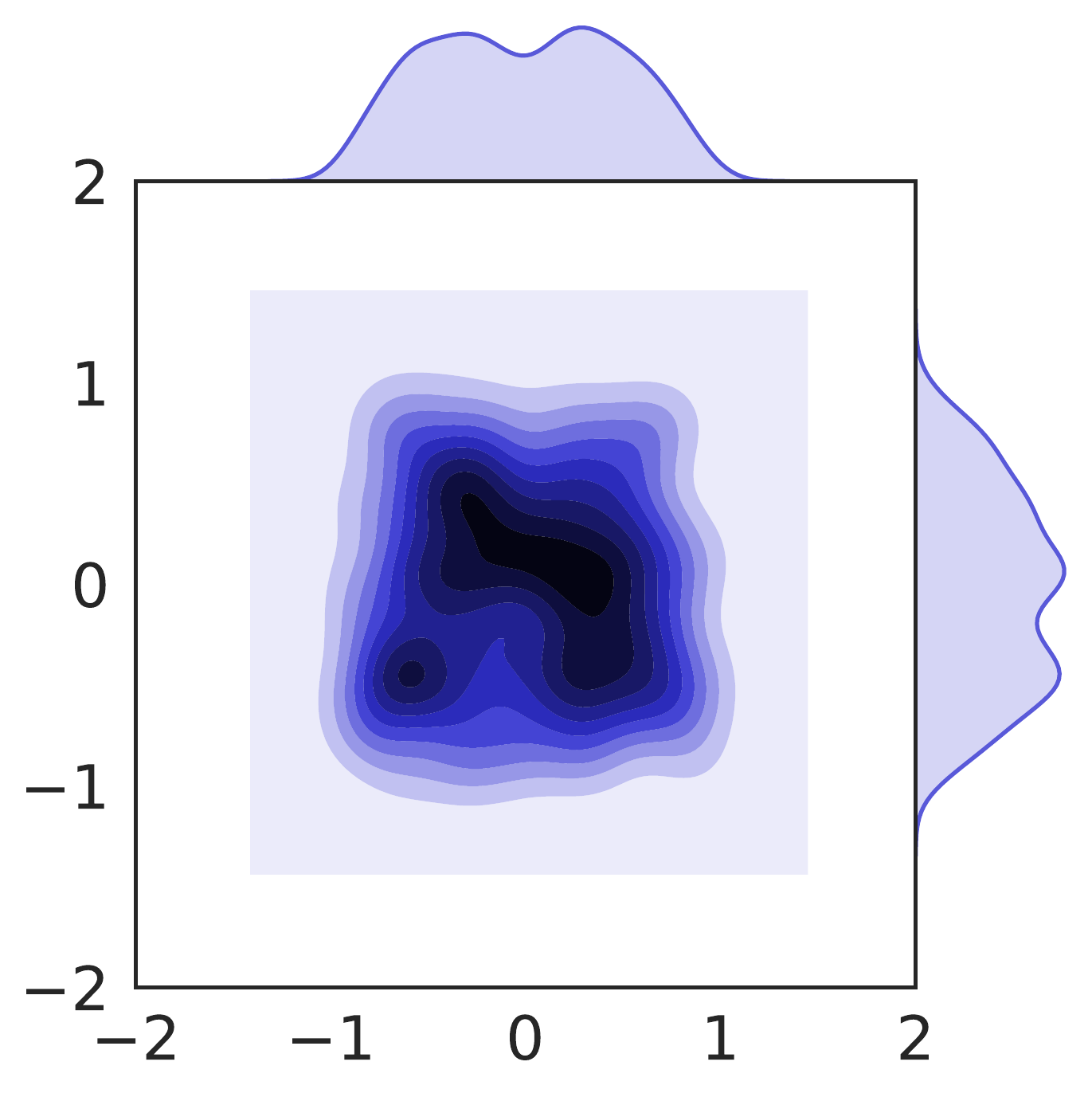}
\end{minipage}}
\subfigure[NC-DAIL (KL = 38.002)]{
\begin{minipage}[b]{0.3\linewidth}
\includegraphics[width=1\linewidth]{./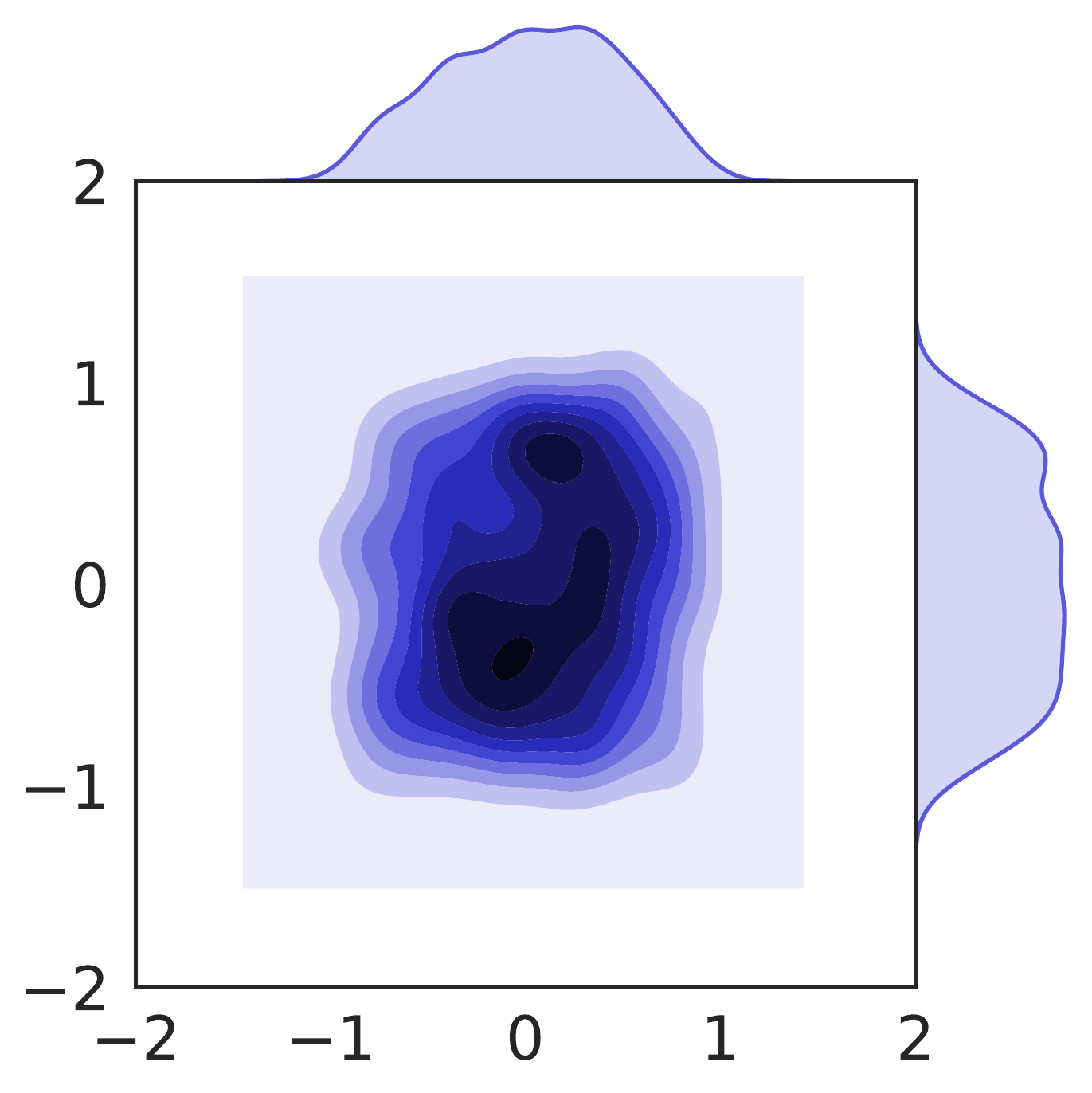}
\end{minipage}}
\subfigure[Random (KL = 217.456)]{
\begin{minipage}[b]{0.3\linewidth}
\includegraphics[width=1\linewidth]{./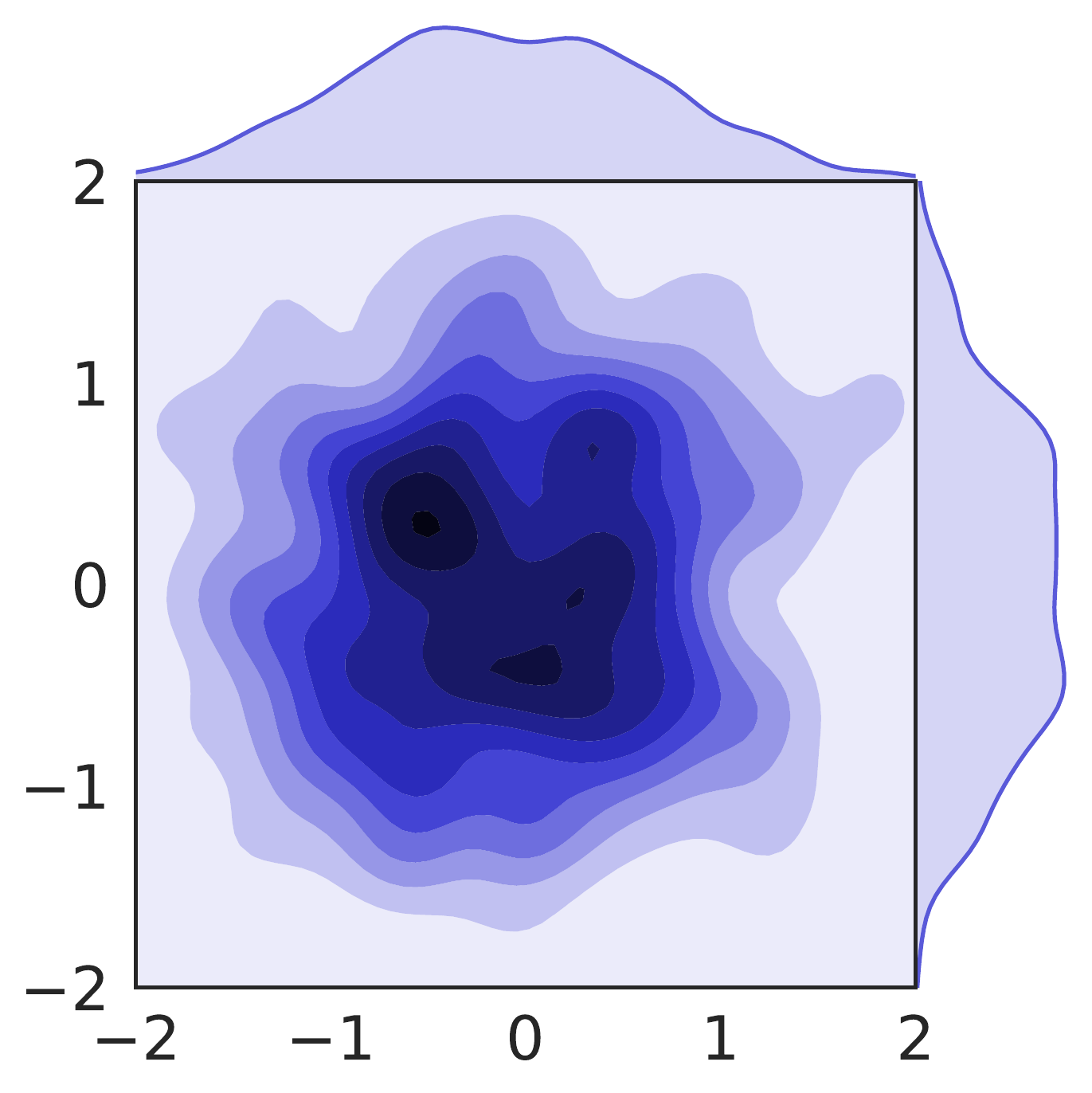}
\end{minipage}}
\caption{The density and marginal distribution of agents’ positions, (x, y), in 100 repeated episodes with different initialized states, generated from different learned policies upon Cooperative-communication. Experiments are done under the same random seed, and we only consider one movable agent. KL is the KL divergence between generated interactions (top figure) with the demonstrators.}
\label{fig:vis-dist-cc}
\end{figure}

\clearpage
We show the density of interactions for different methods along with demonstrator policies conducted upon Cooperative-navigation in \fig{fig:vis-dist-cn}.
\begin{figure}[H]
\centering
\subfigure[Demonstrators (KL = 0)]{
\begin{minipage}[b]{0.3\linewidth}
\includegraphics[width=1\linewidth]{./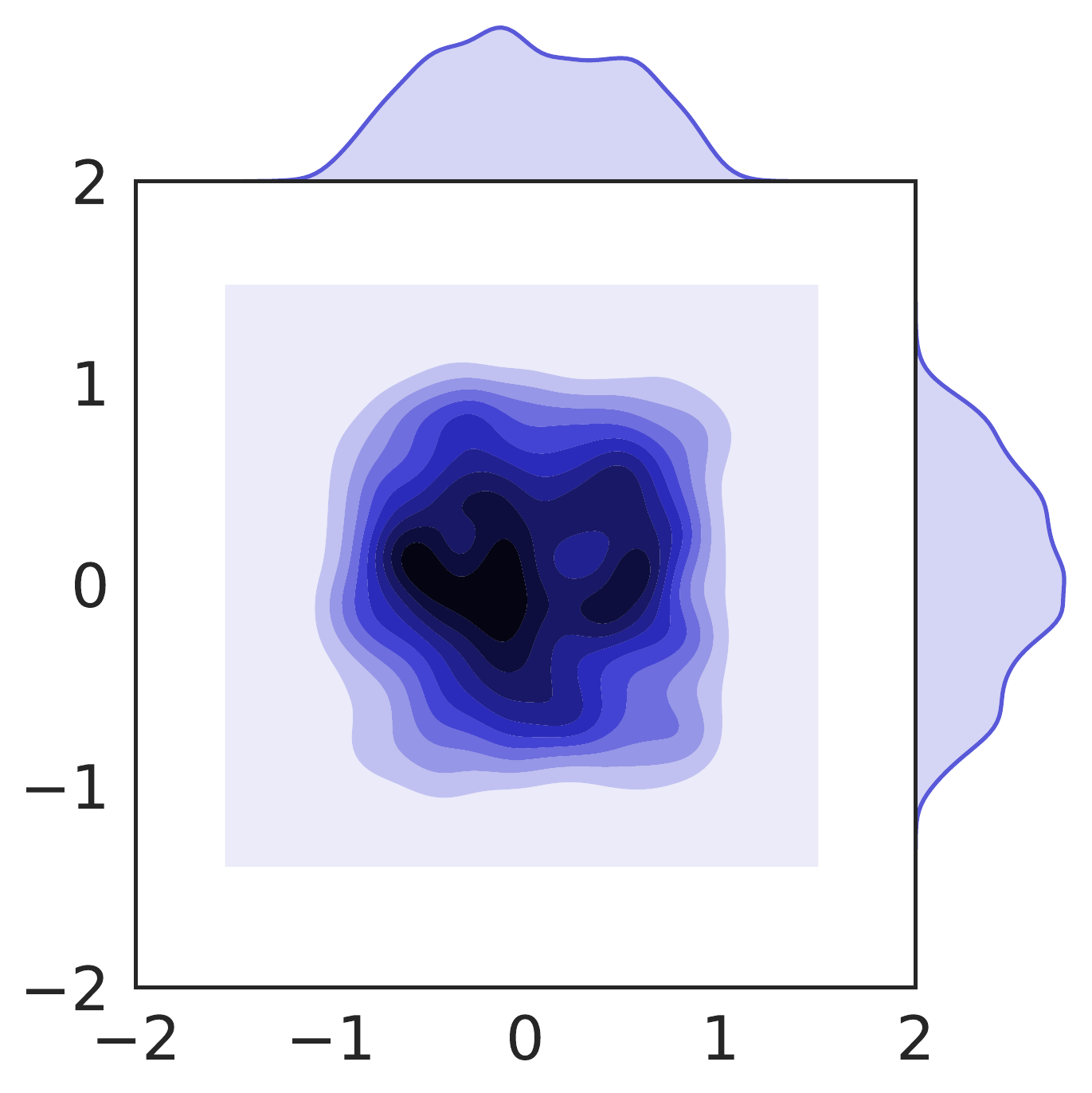}
\includegraphics[width=0.45\linewidth]{./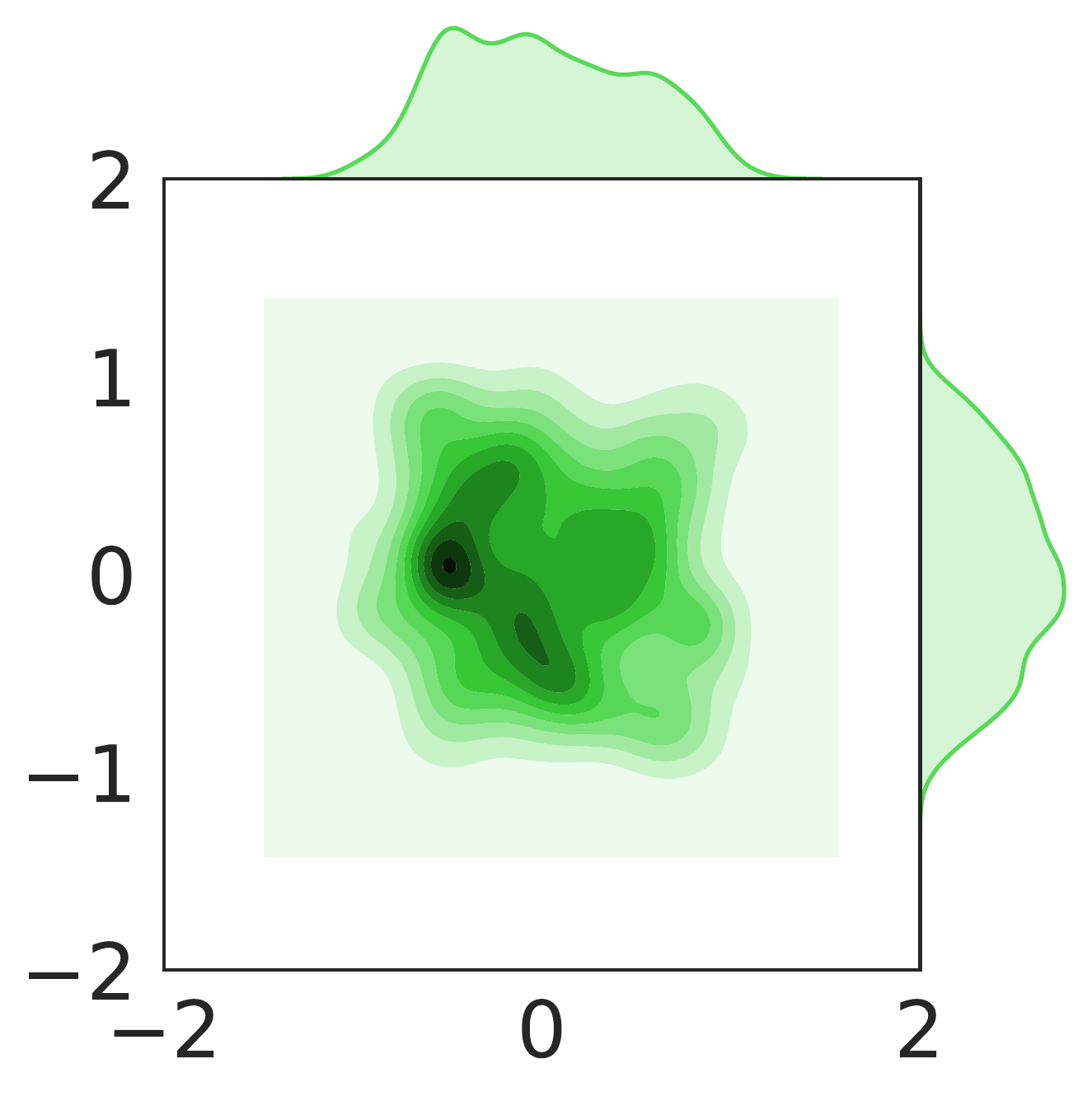}
\includegraphics[width=0.45\linewidth]{./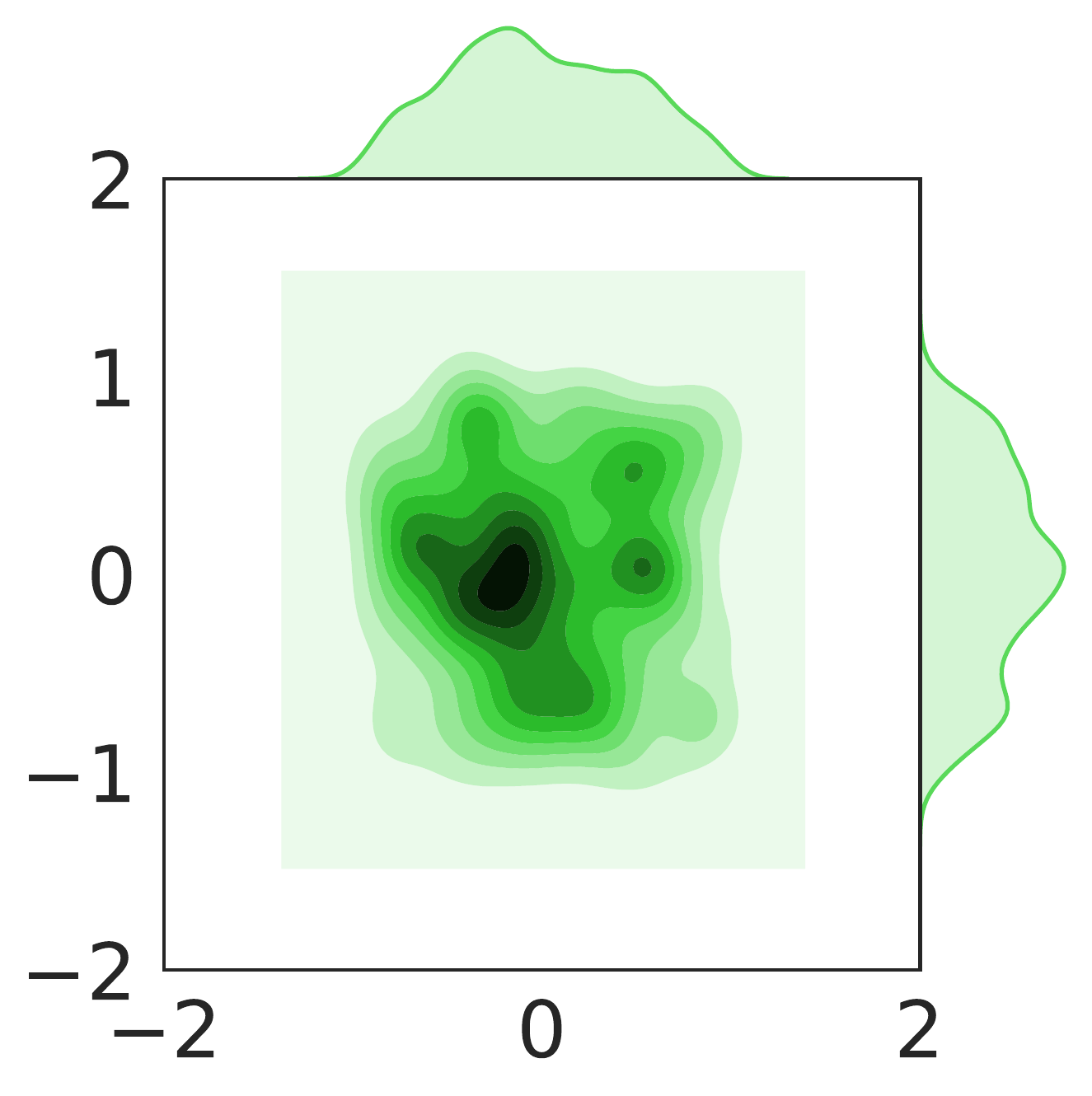}\\
\includegraphics[width=0.45\linewidth]{./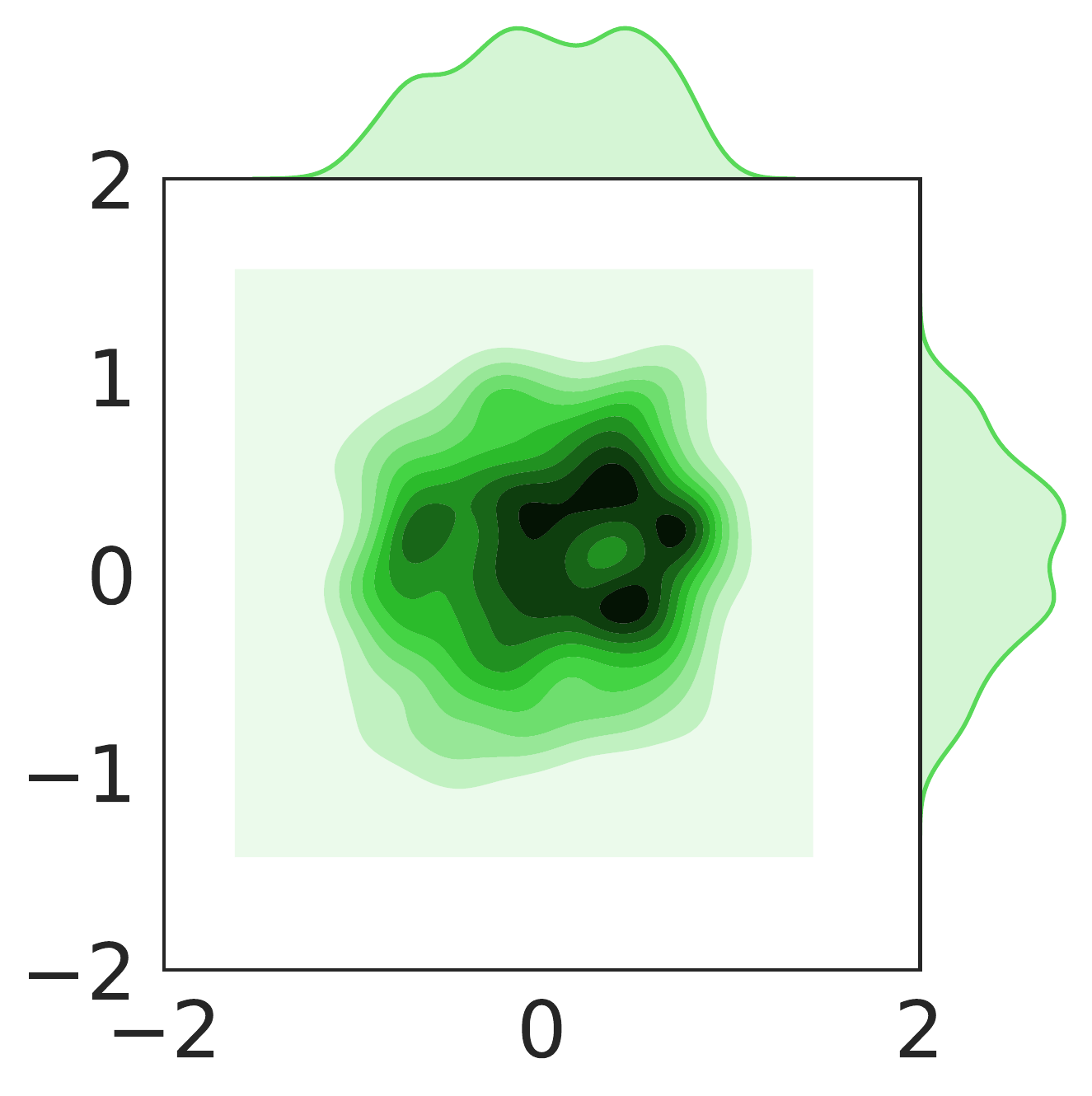}
\end{minipage}}
\subfigure[MA-GAIL (KL = 15.034)]{
\begin{minipage}[b]{0.3\linewidth}
\includegraphics[width=1\linewidth]{./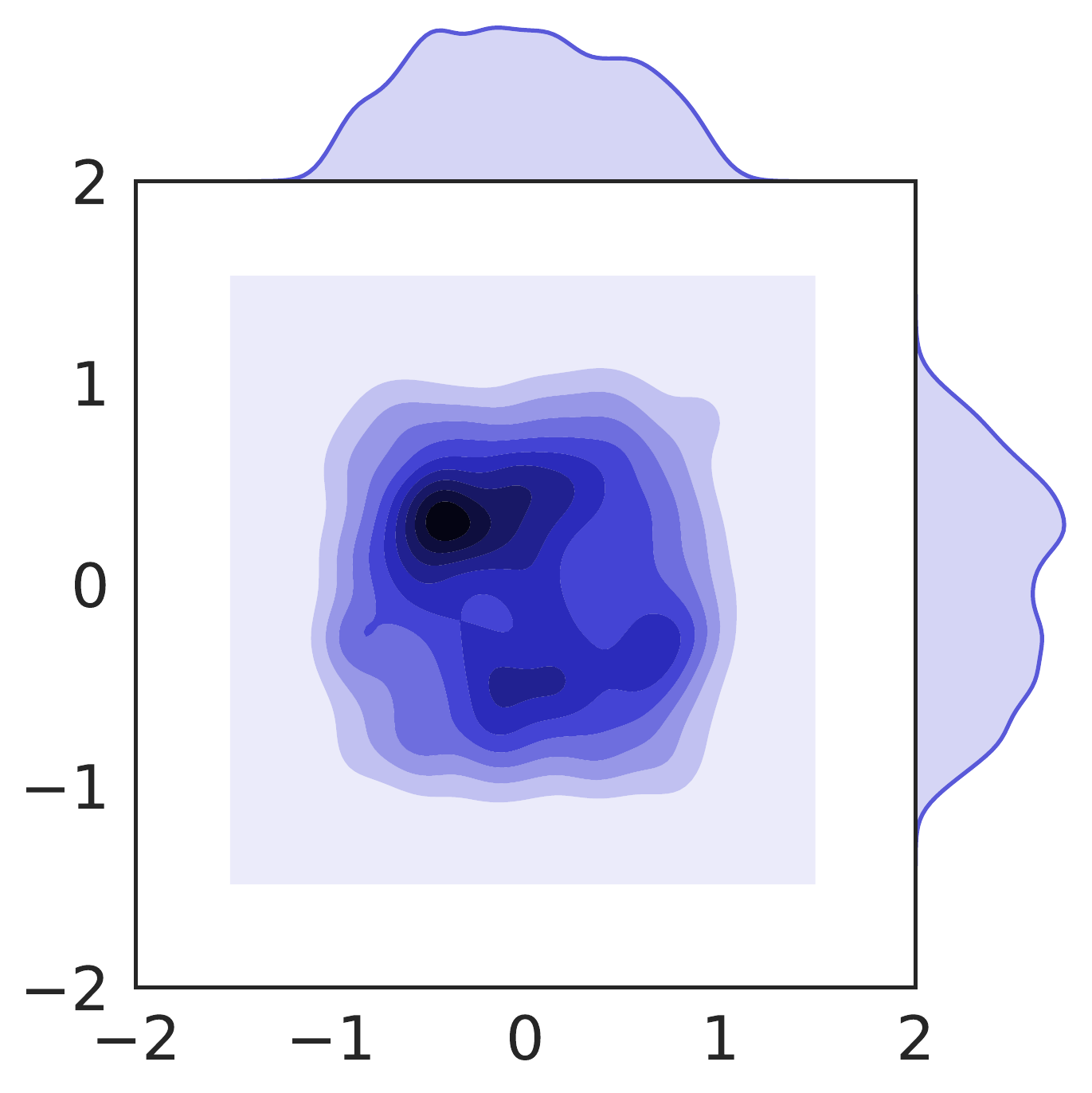}
\includegraphics[width=0.45\linewidth]{./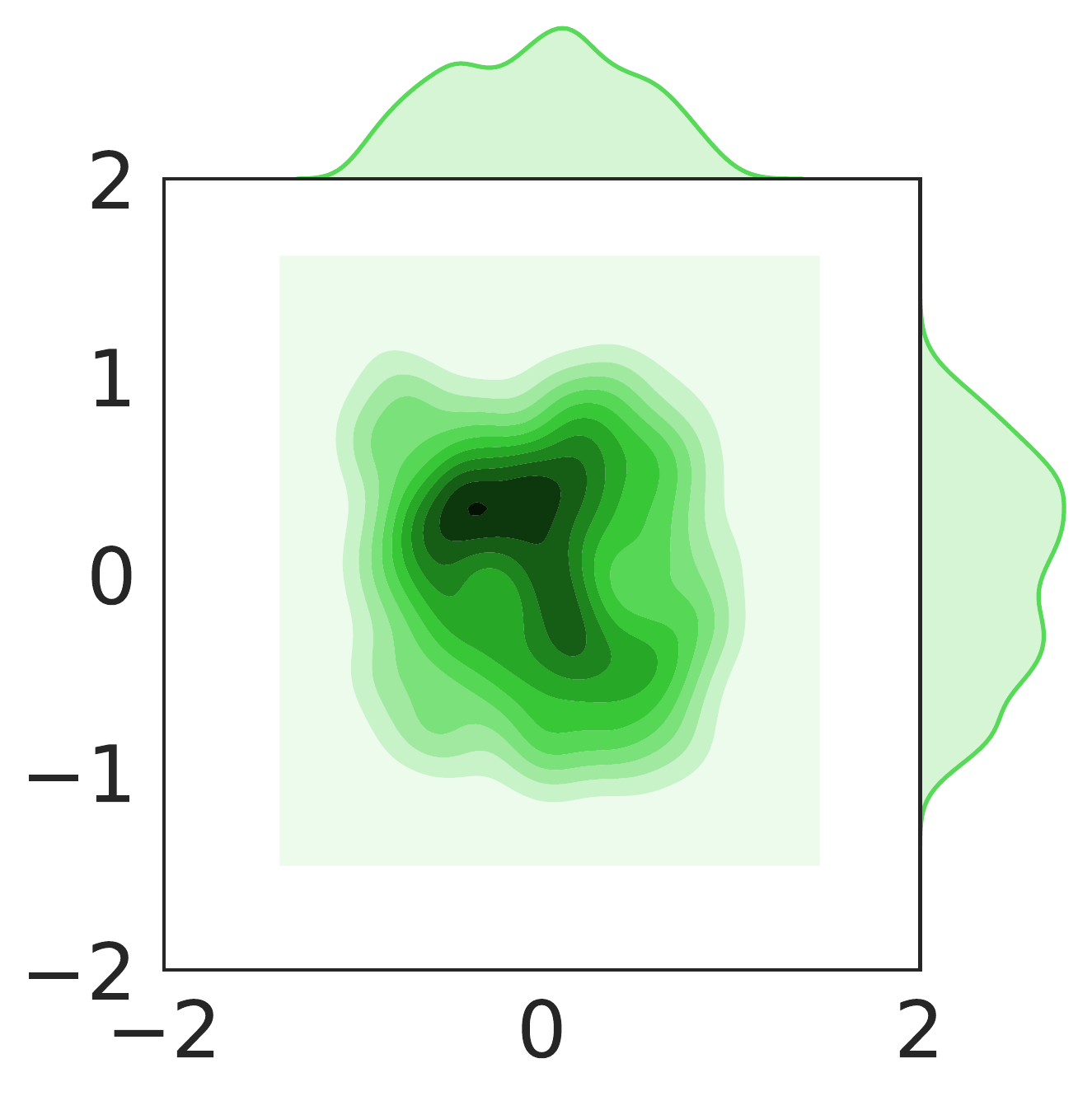}
\includegraphics[width=0.45\linewidth]{./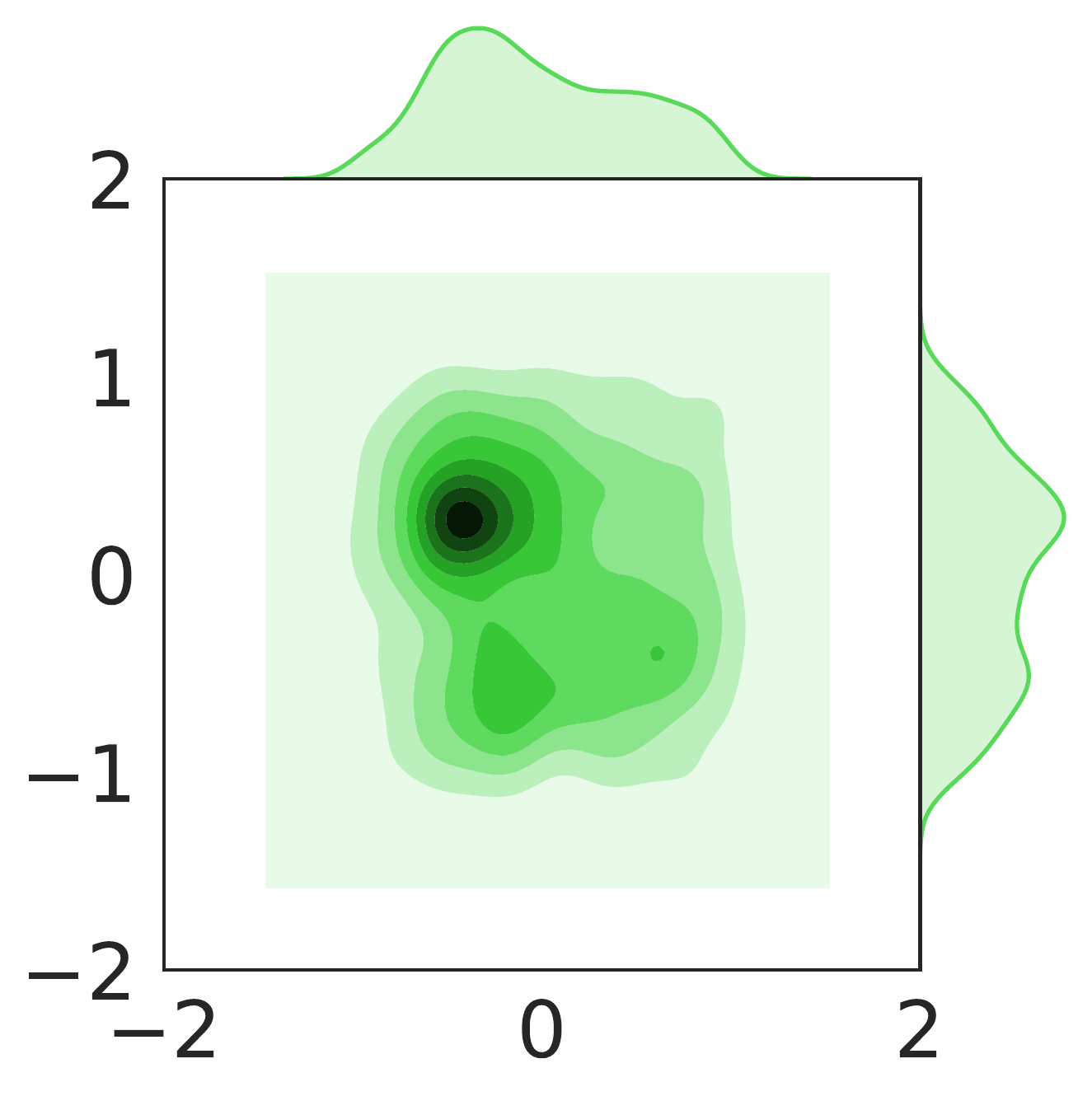}\\
\includegraphics[width=0.45\linewidth]{./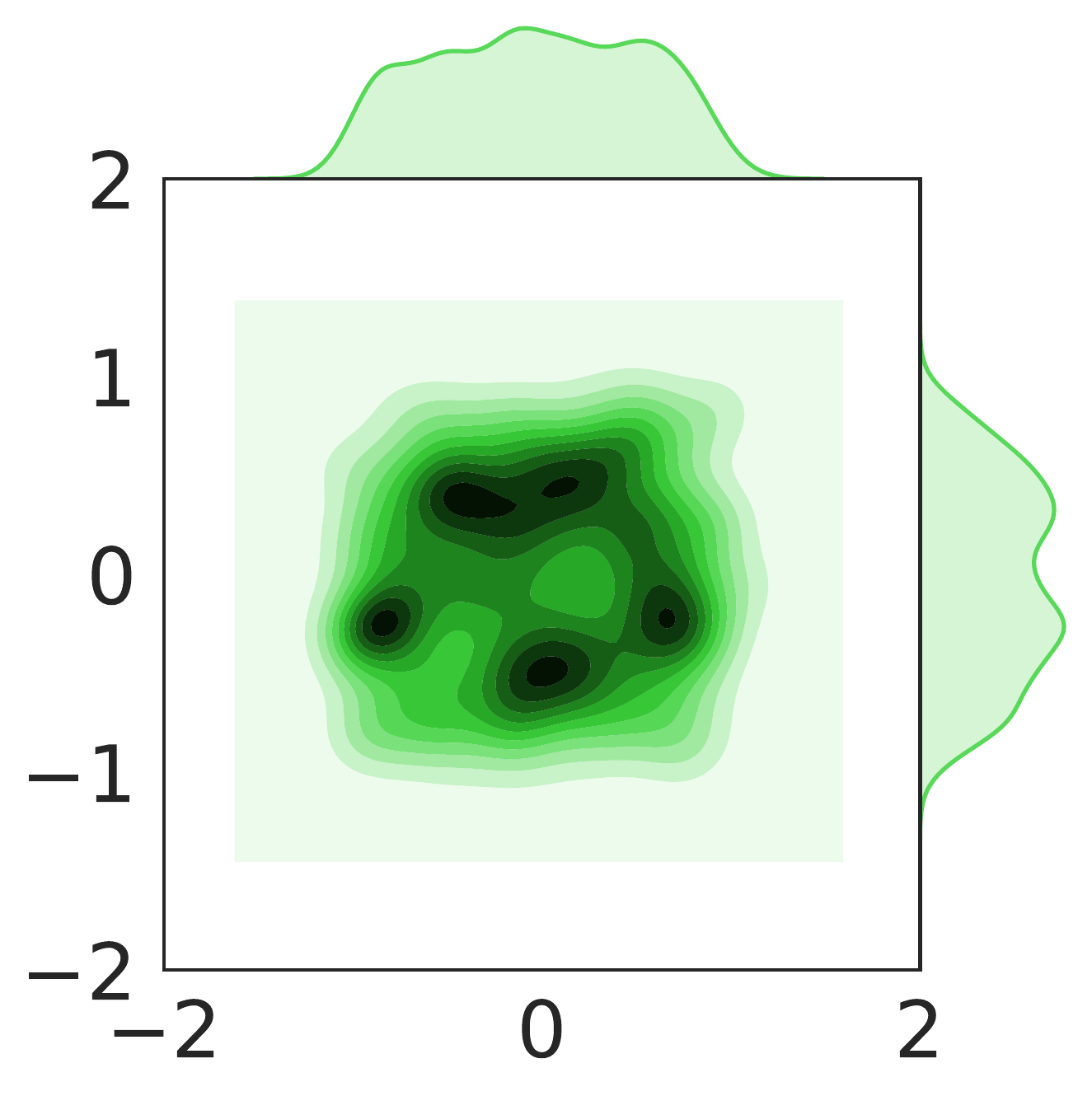}
\end{minipage}}
\subfigure[MA-AIRL (KL = 18.071)]{
\begin{minipage}[b]{0.3\linewidth}
\includegraphics[width=1\linewidth]{./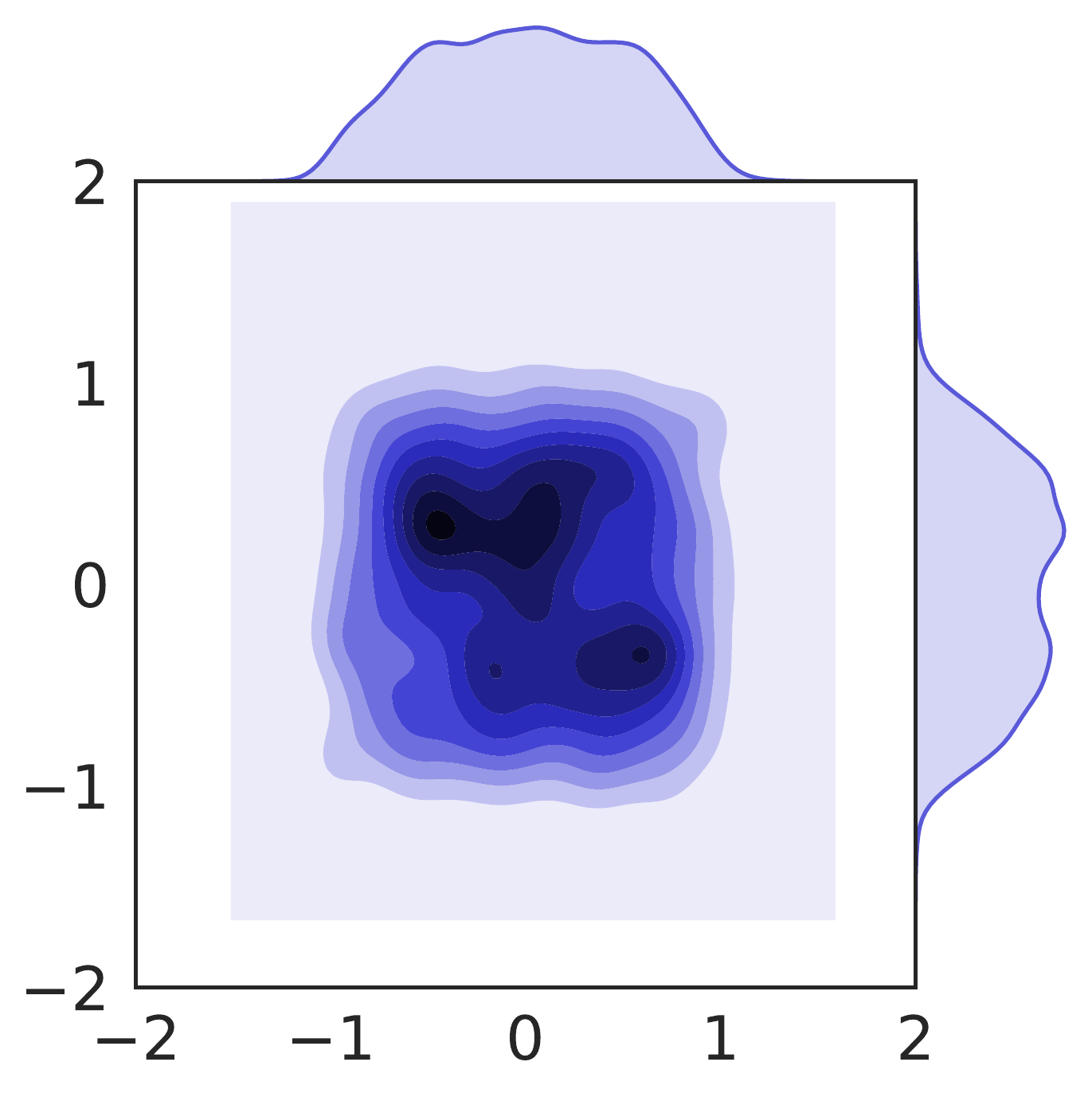}
\includegraphics[width=0.45\linewidth]{./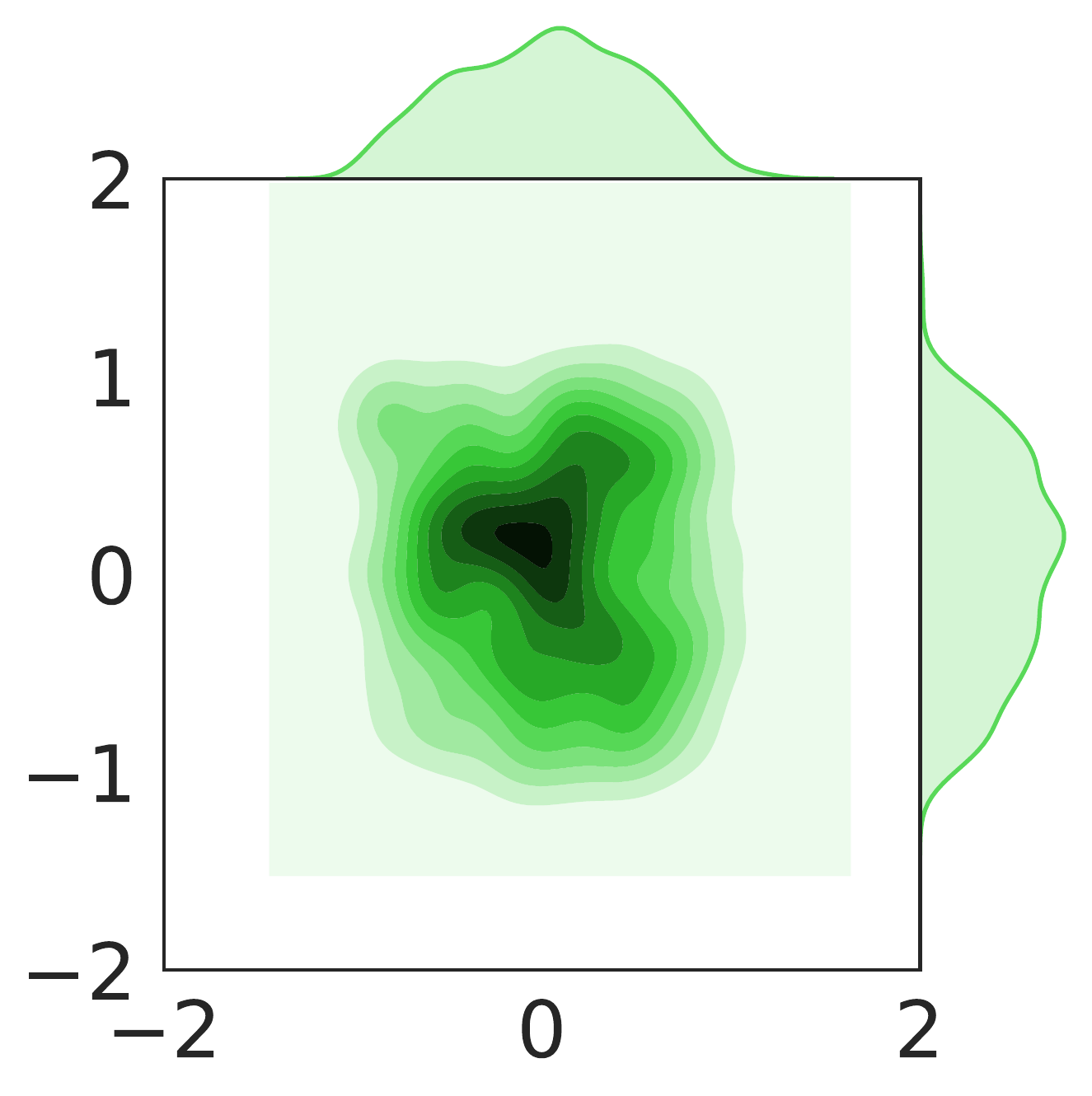}
\includegraphics[width=0.45\linewidth]{./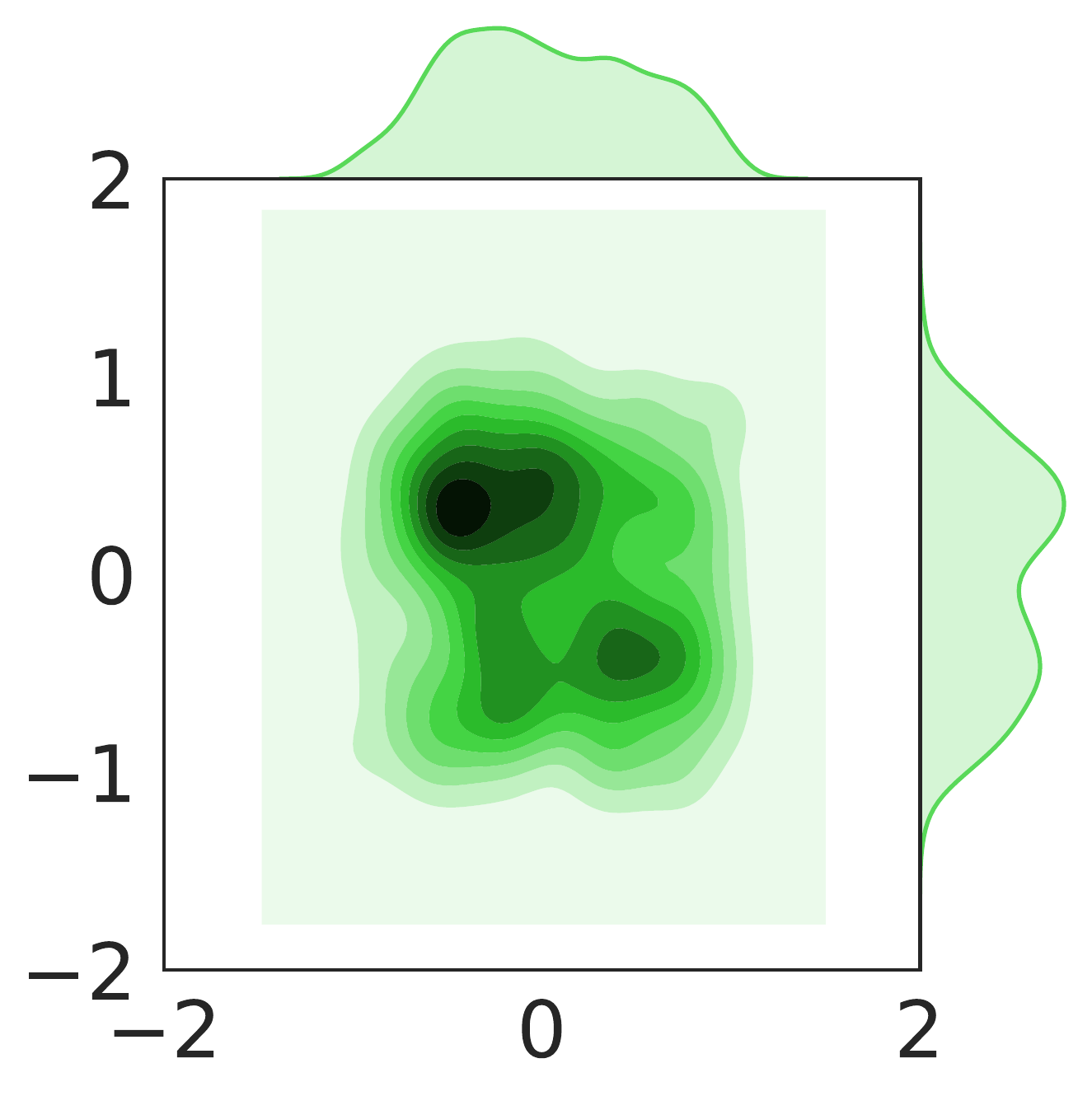}\\
\includegraphics[width=0.45\linewidth]{./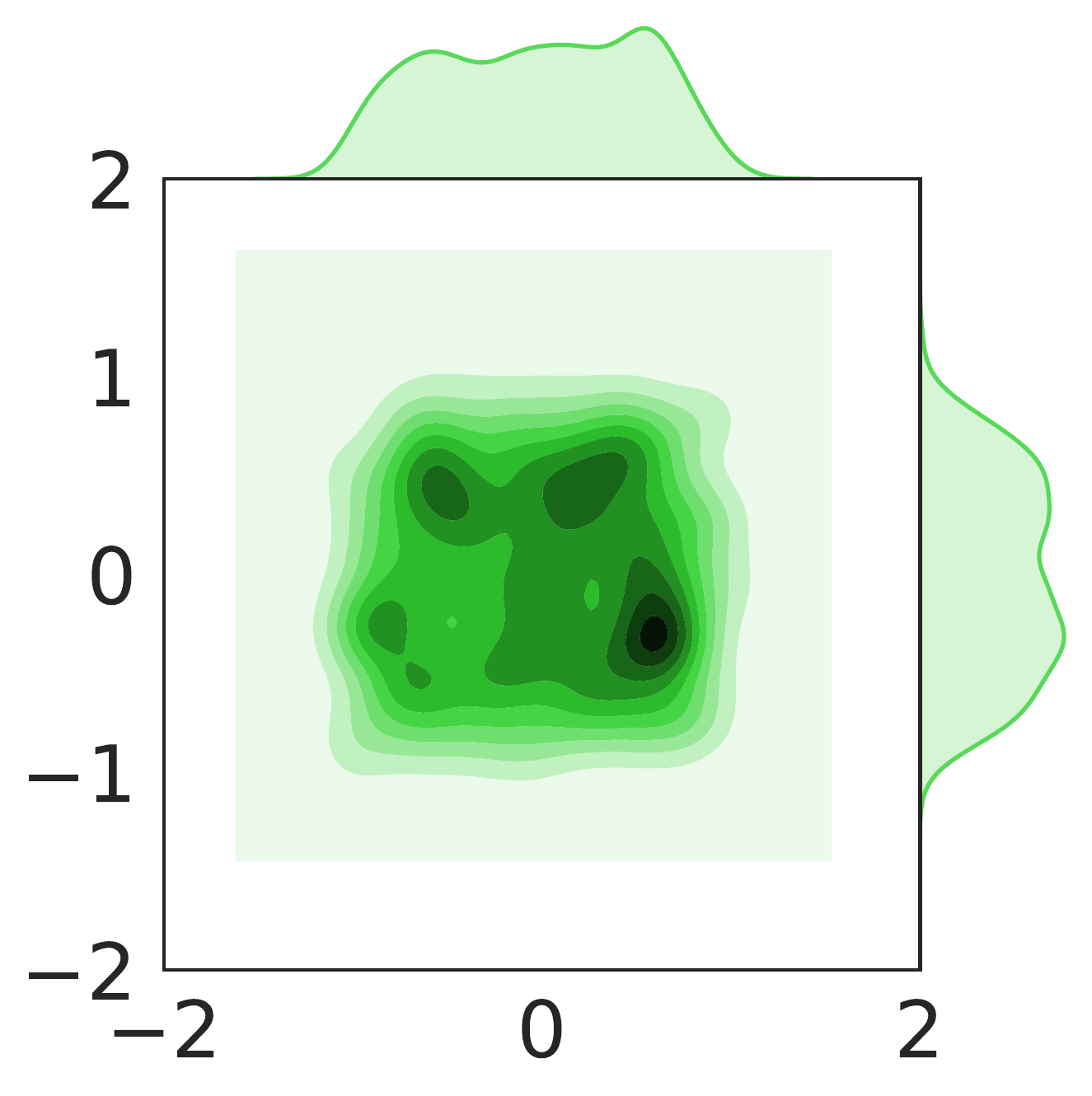}
\end{minipage}}
\subfigure[CoDAIL (KL = 9.033)]{
\begin{minipage}[b]{0.3\linewidth}
\includegraphics[width=1\linewidth]{./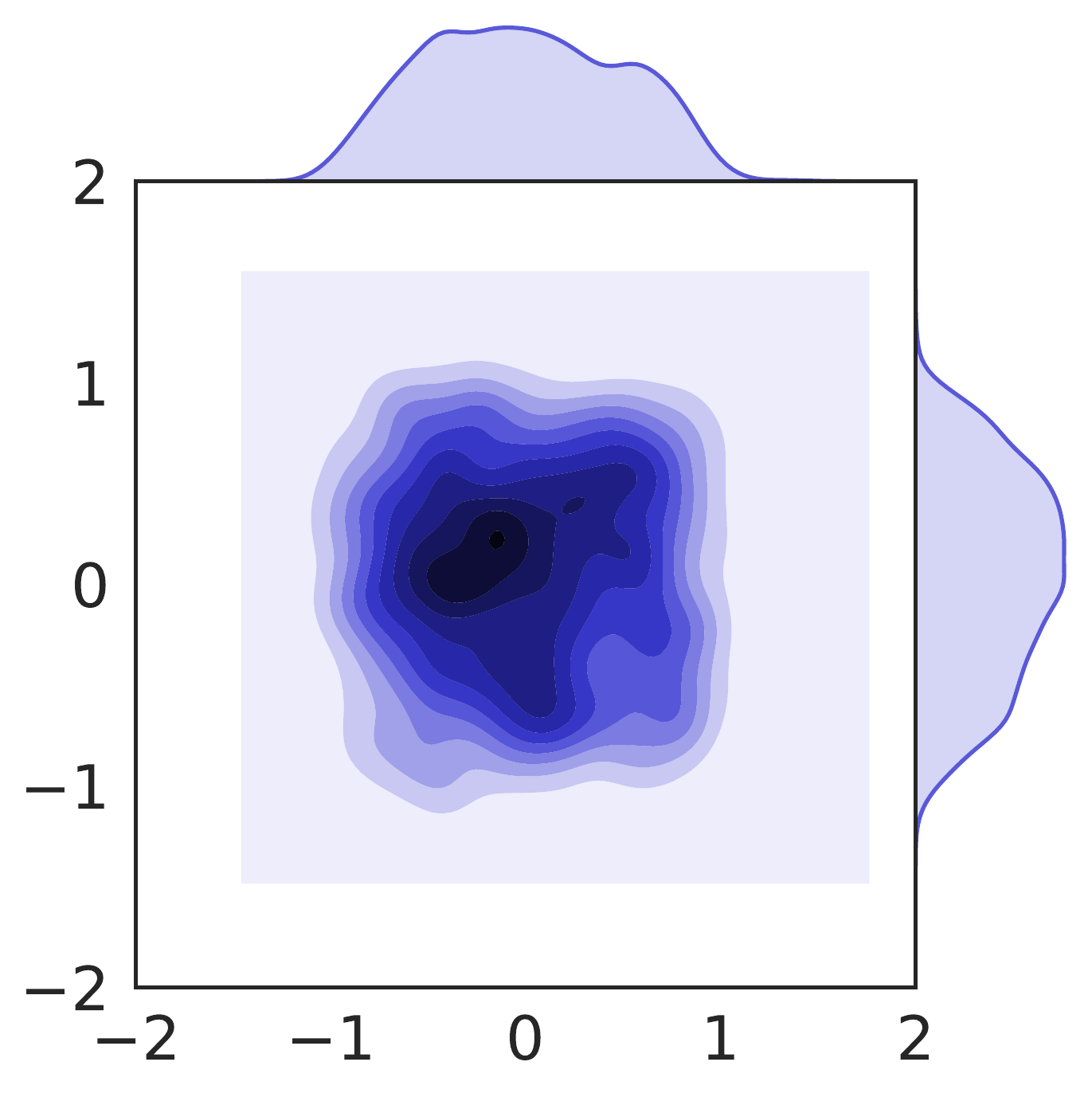}
\includegraphics[width=0.45\linewidth]{./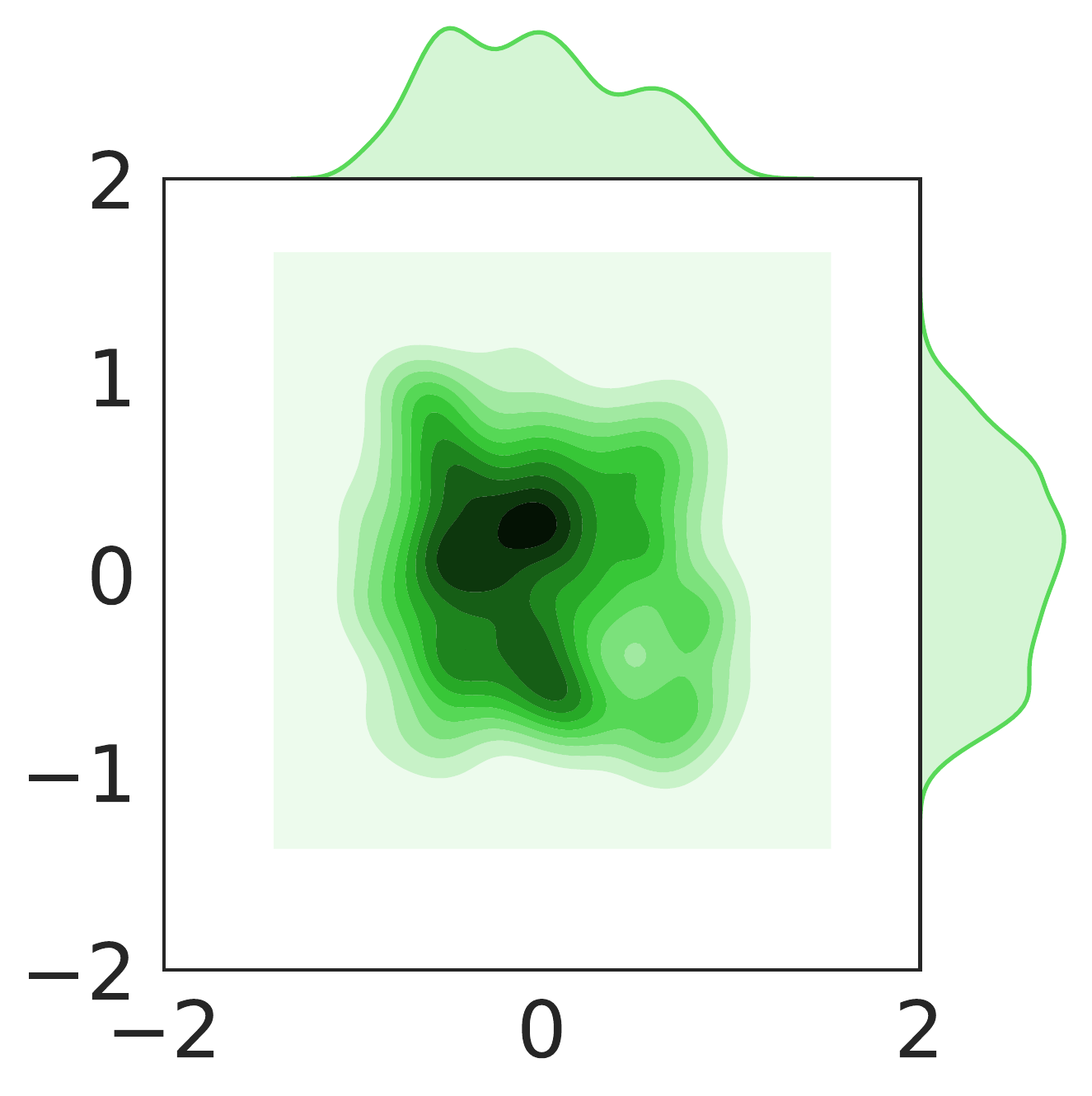}
\includegraphics[width=0.45\linewidth]{./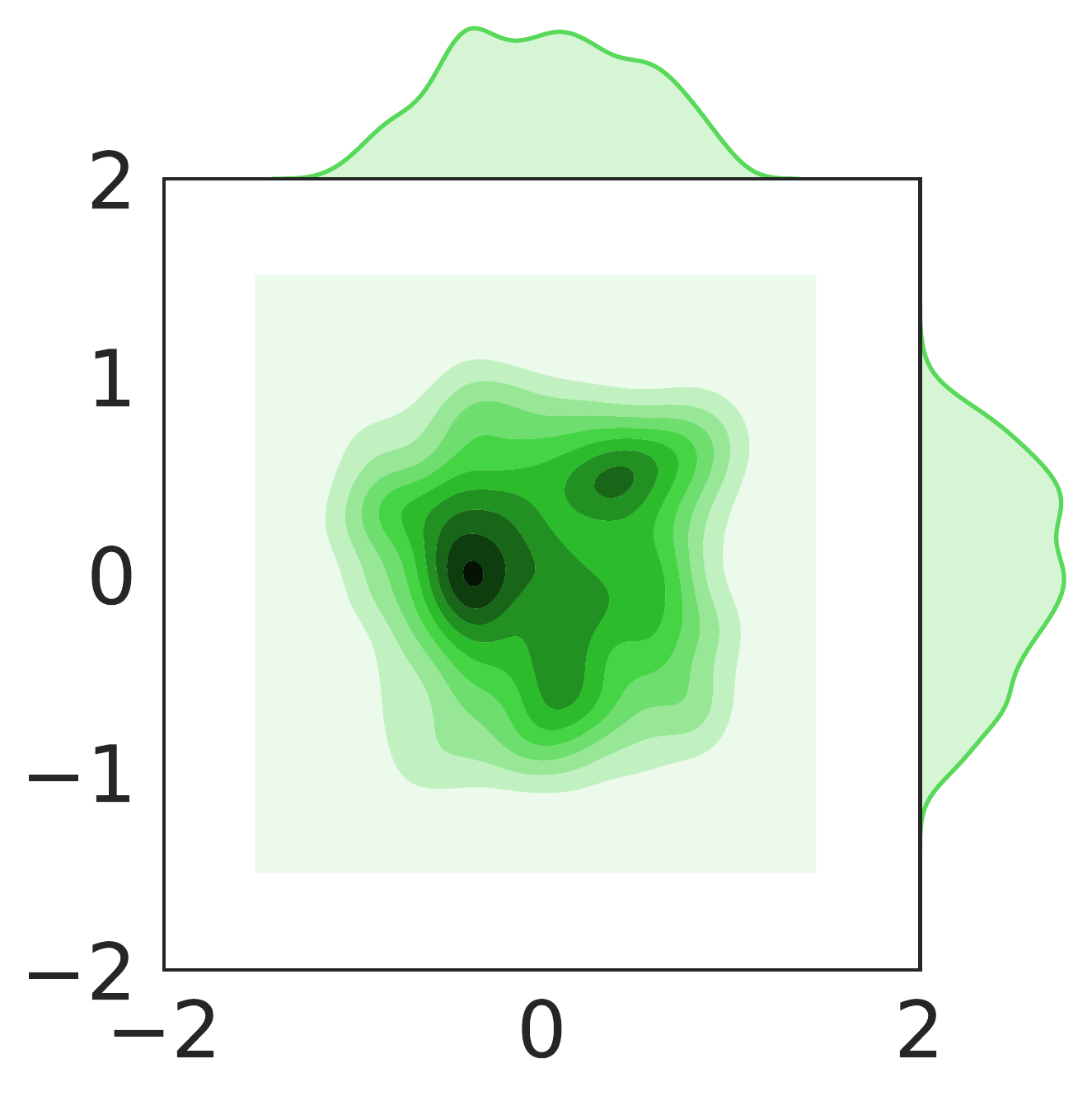}\\
\includegraphics[width=0.45\linewidth]{./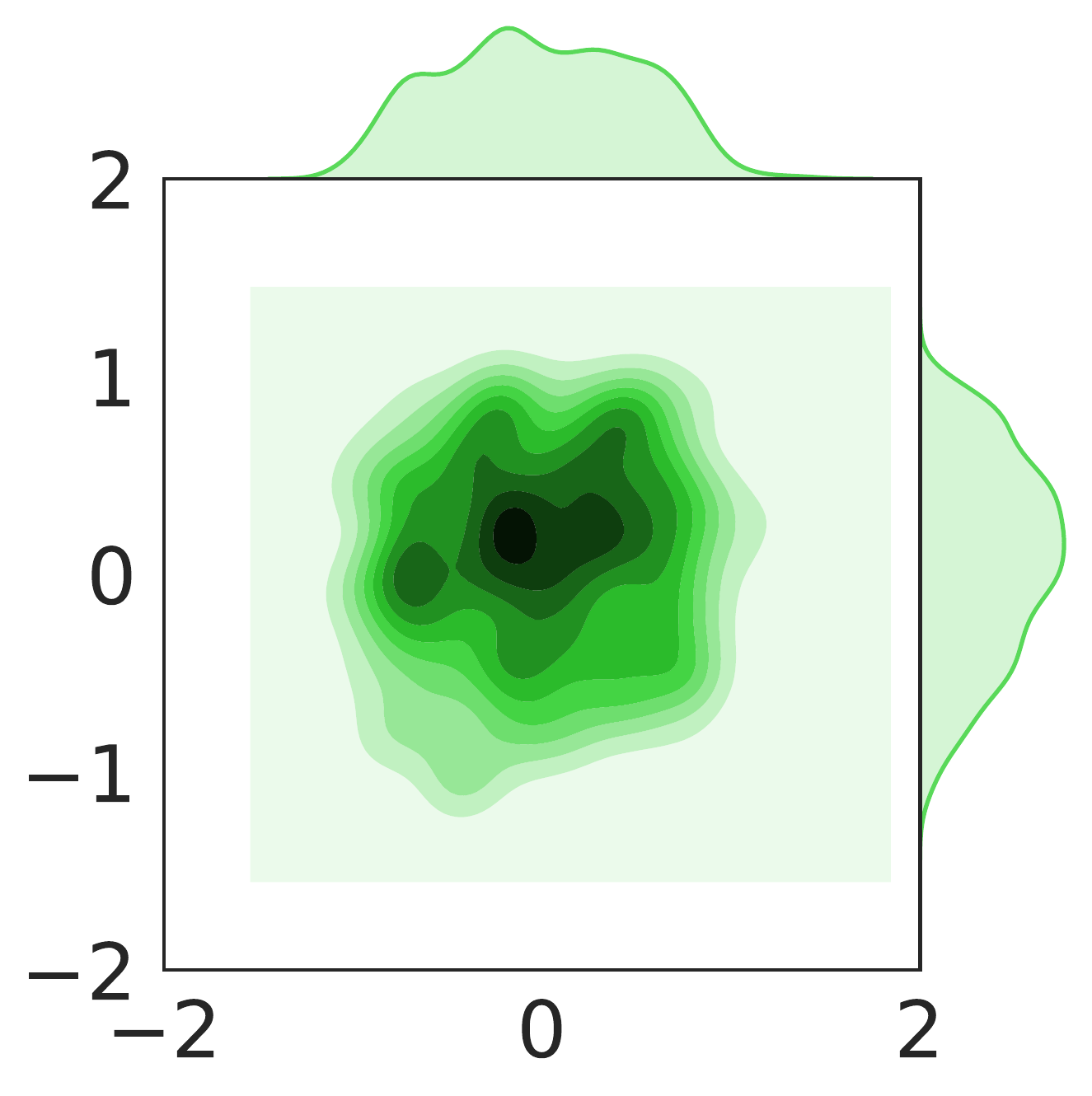}
\end{minipage}}
\subfigure[NC-DAIL (KL = 16.202)]{
\begin{minipage}[b]{0.3\linewidth}
\includegraphics[width=1\linewidth]{./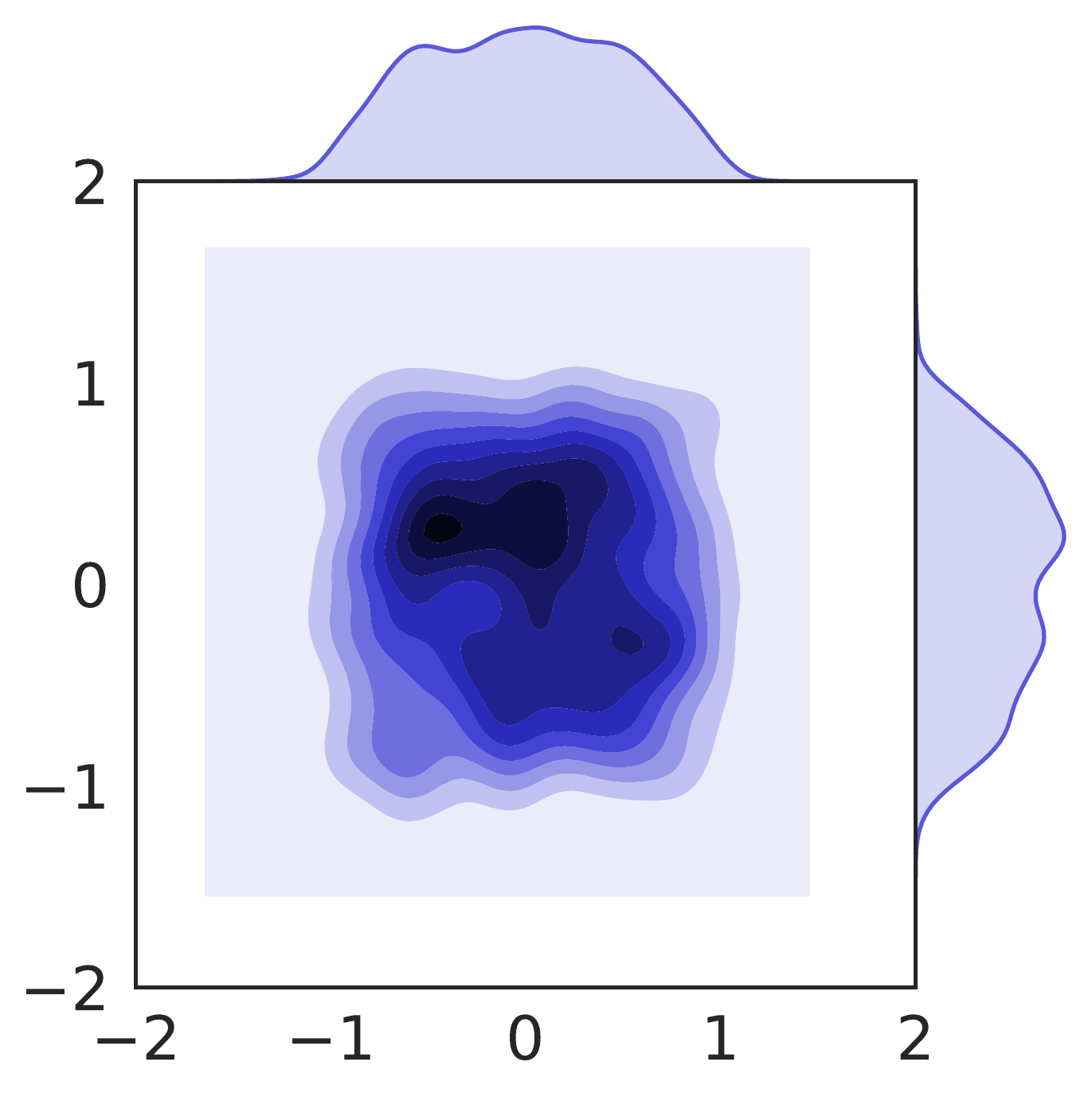}
\includegraphics[width=0.45\linewidth]{./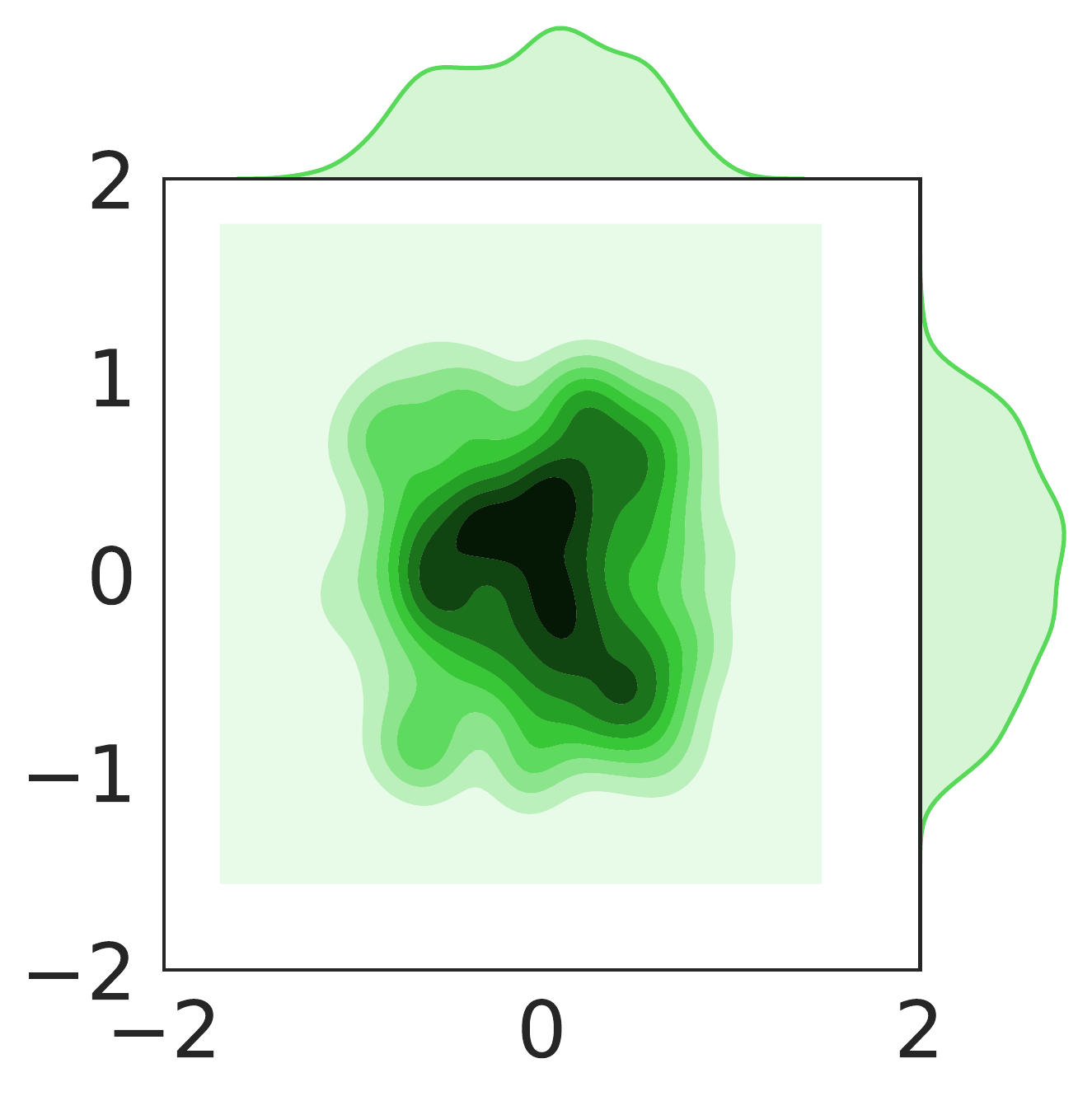}
\includegraphics[width=0.45\linewidth]{./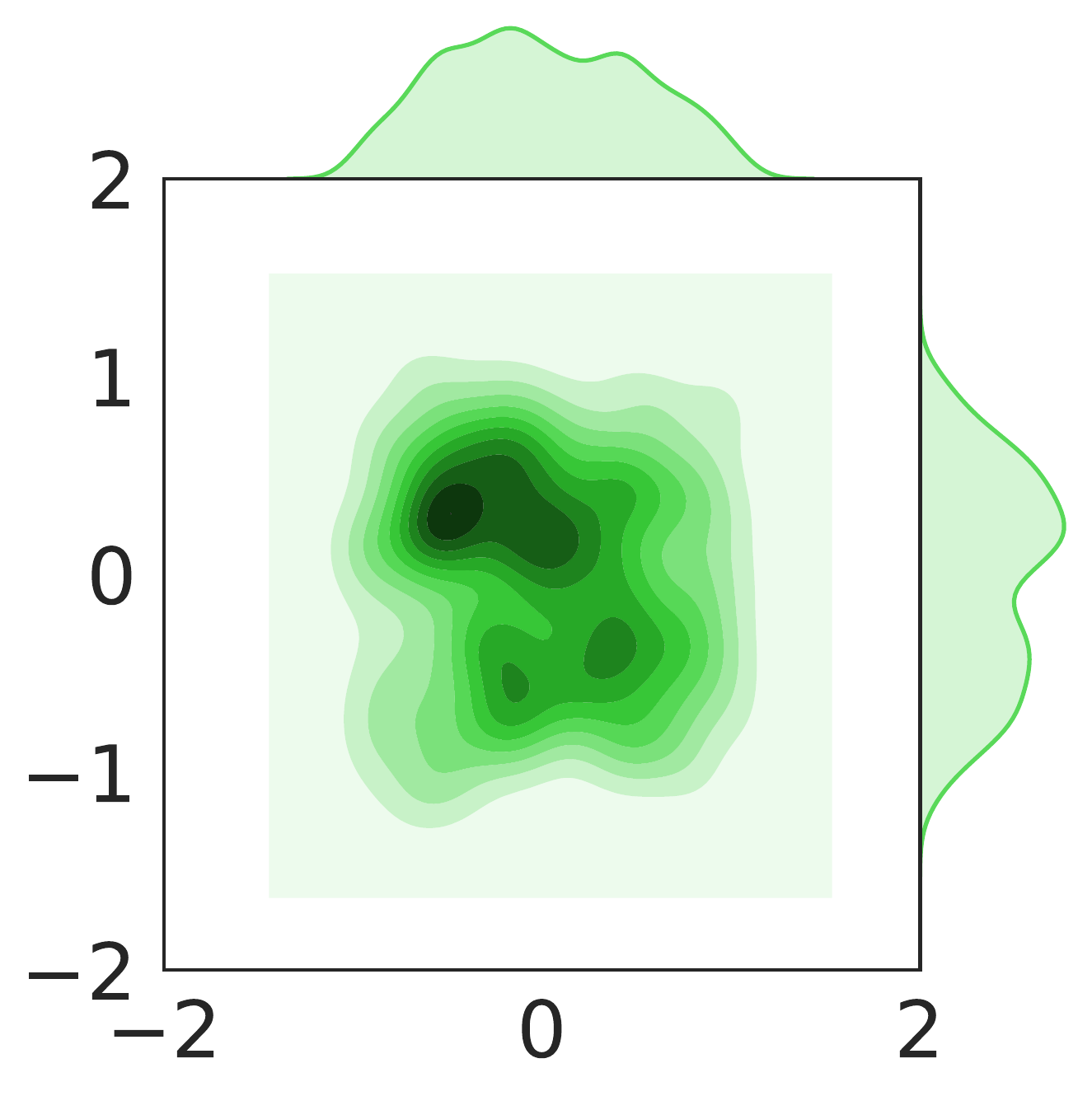}\\
\includegraphics[width=0.45\linewidth]{./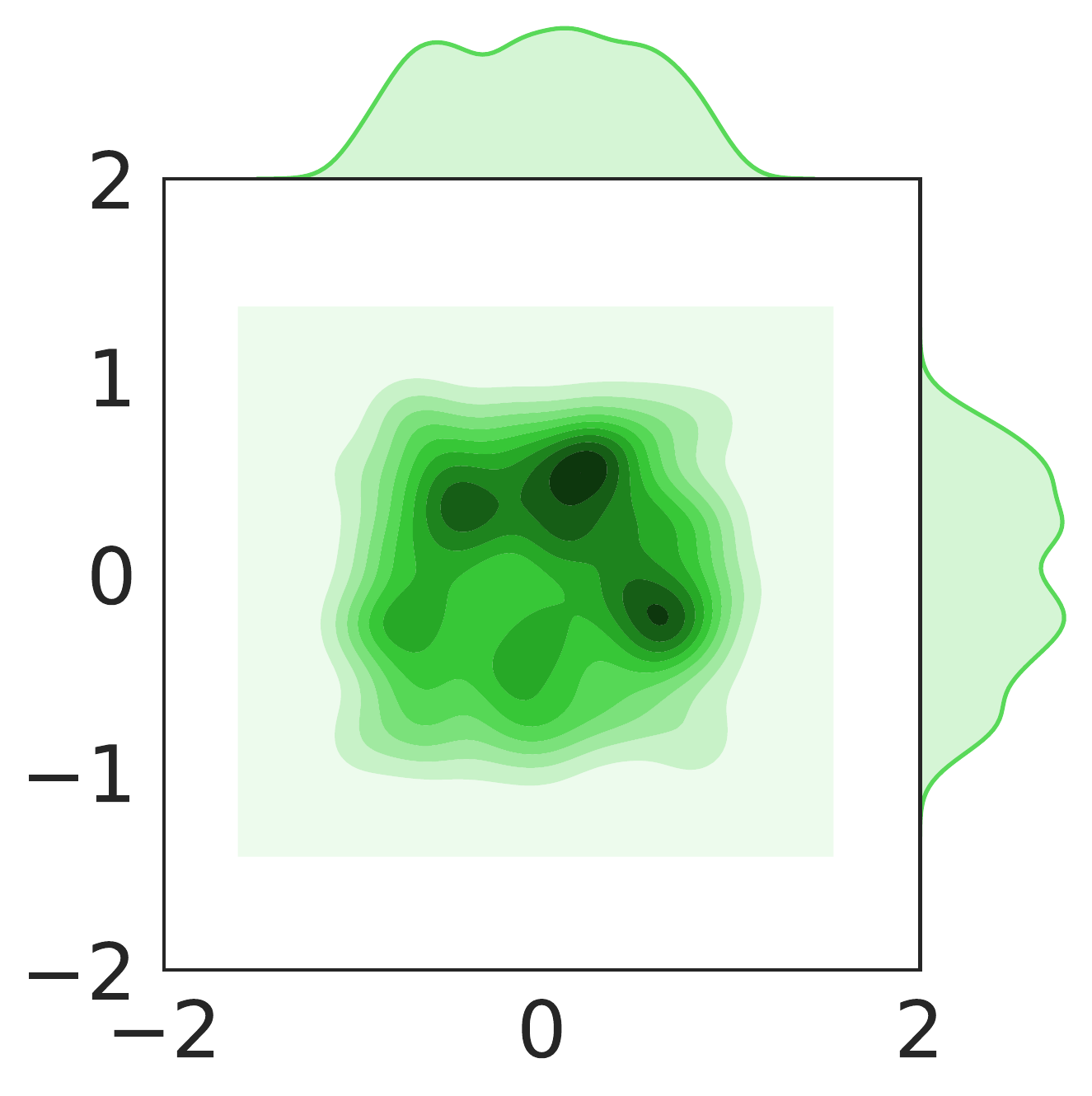}
\end{minipage}}
\subfigure[Random (KL = 174.892)]{
\begin{minipage}[b]{0.3\linewidth}
\includegraphics[width=1\linewidth]{./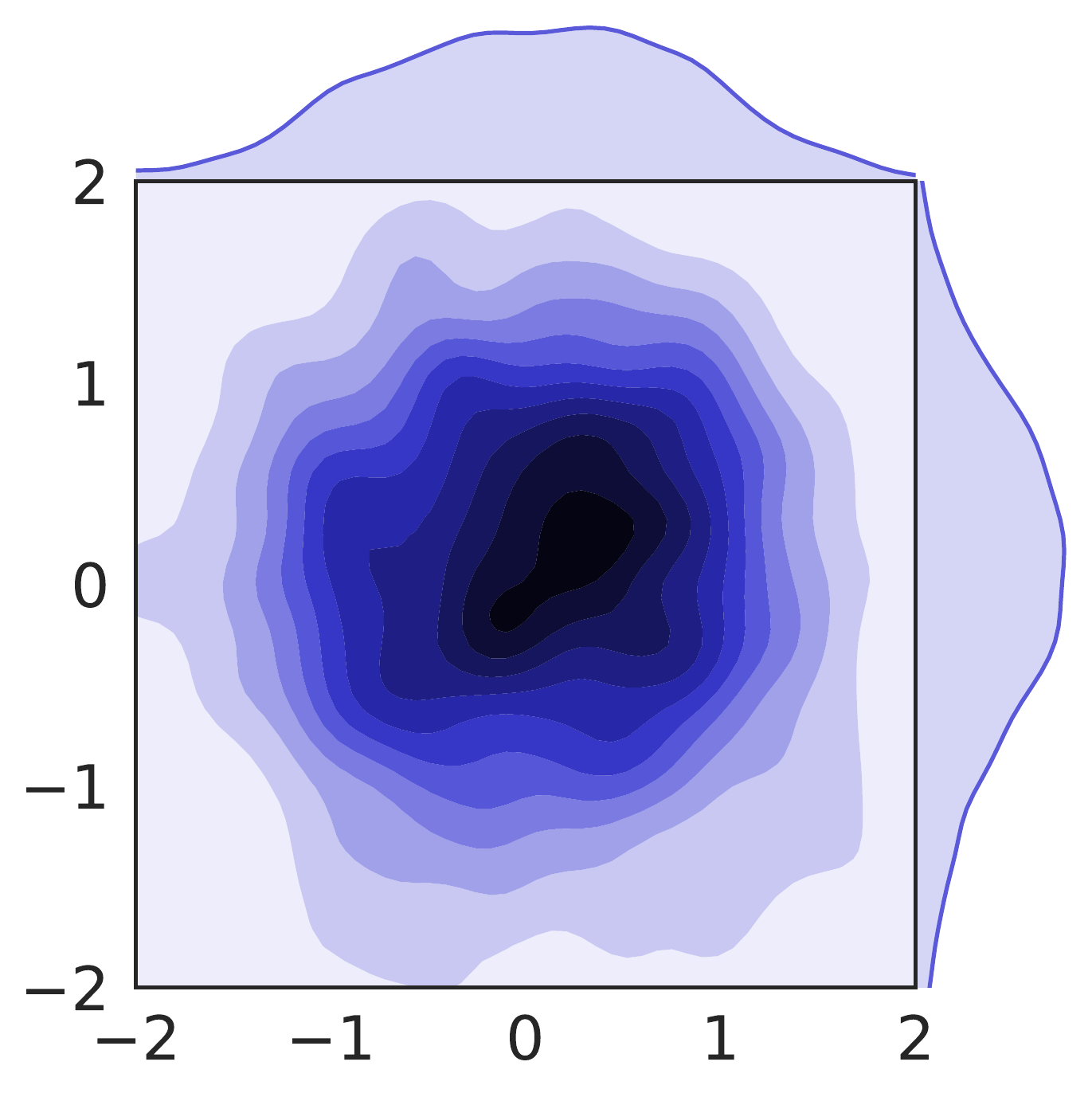}
\includegraphics[width=0.45\linewidth]{./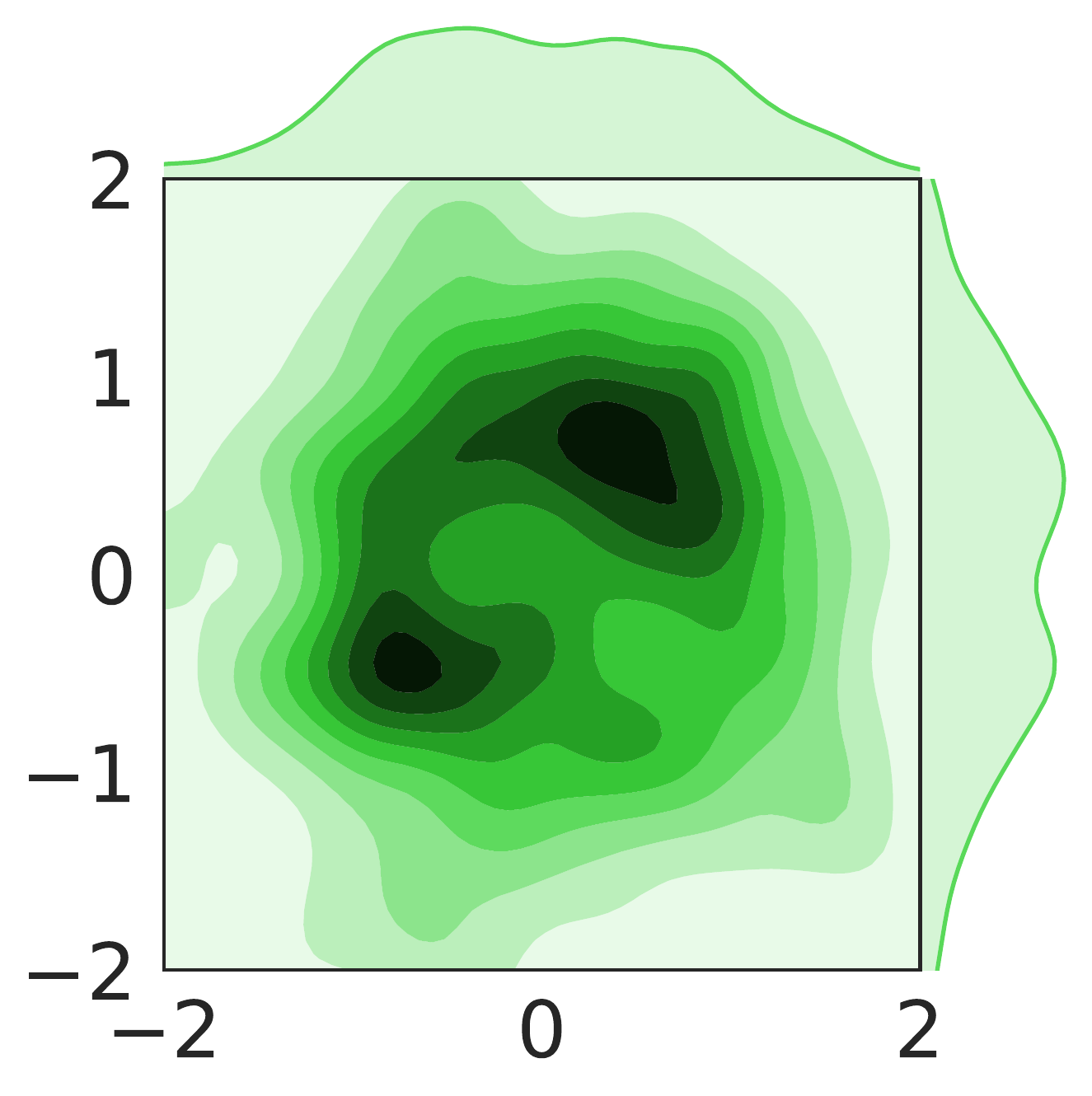}
\includegraphics[width=0.45\linewidth]{./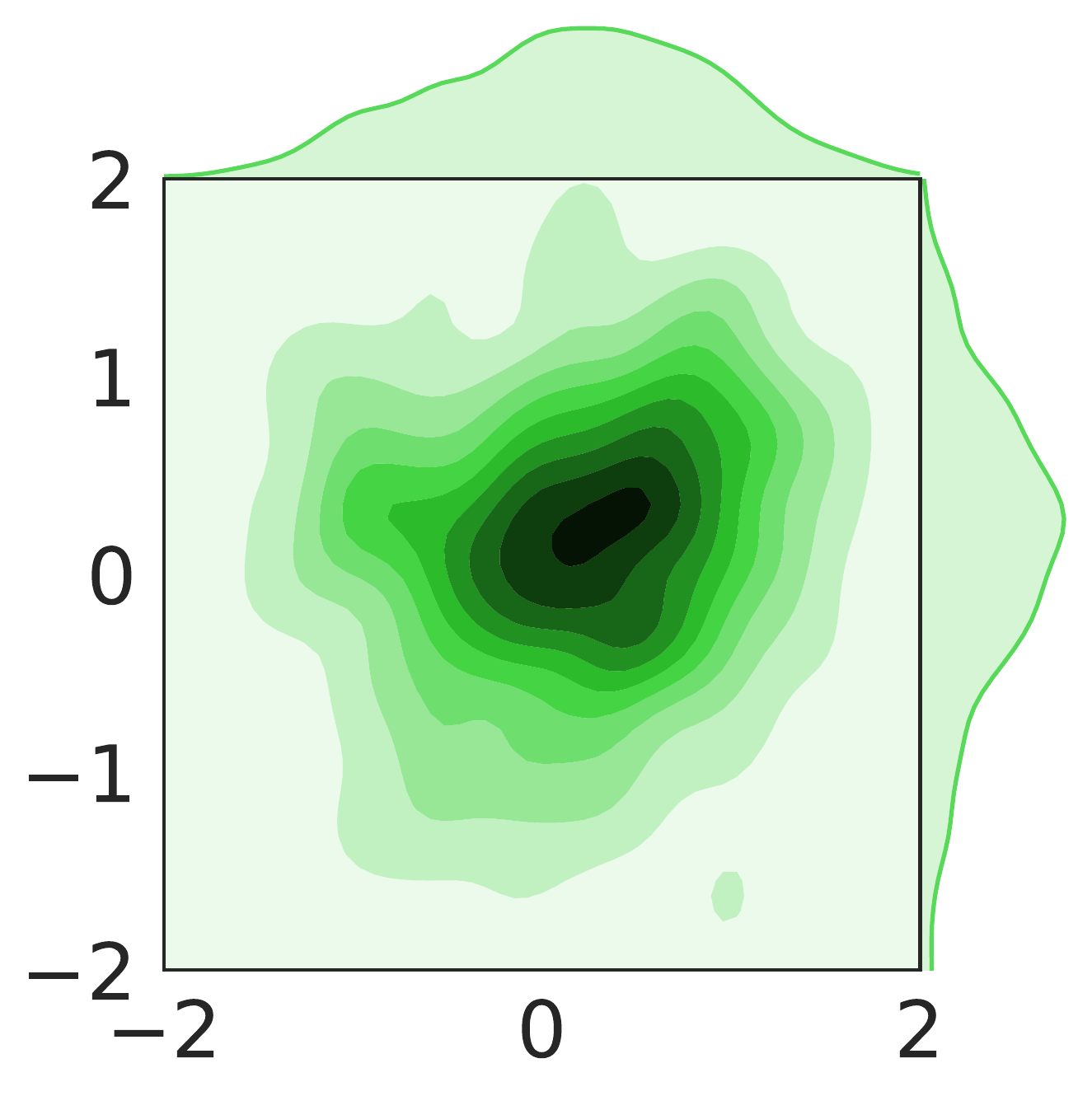}\\
\includegraphics[width=0.45\linewidth]{./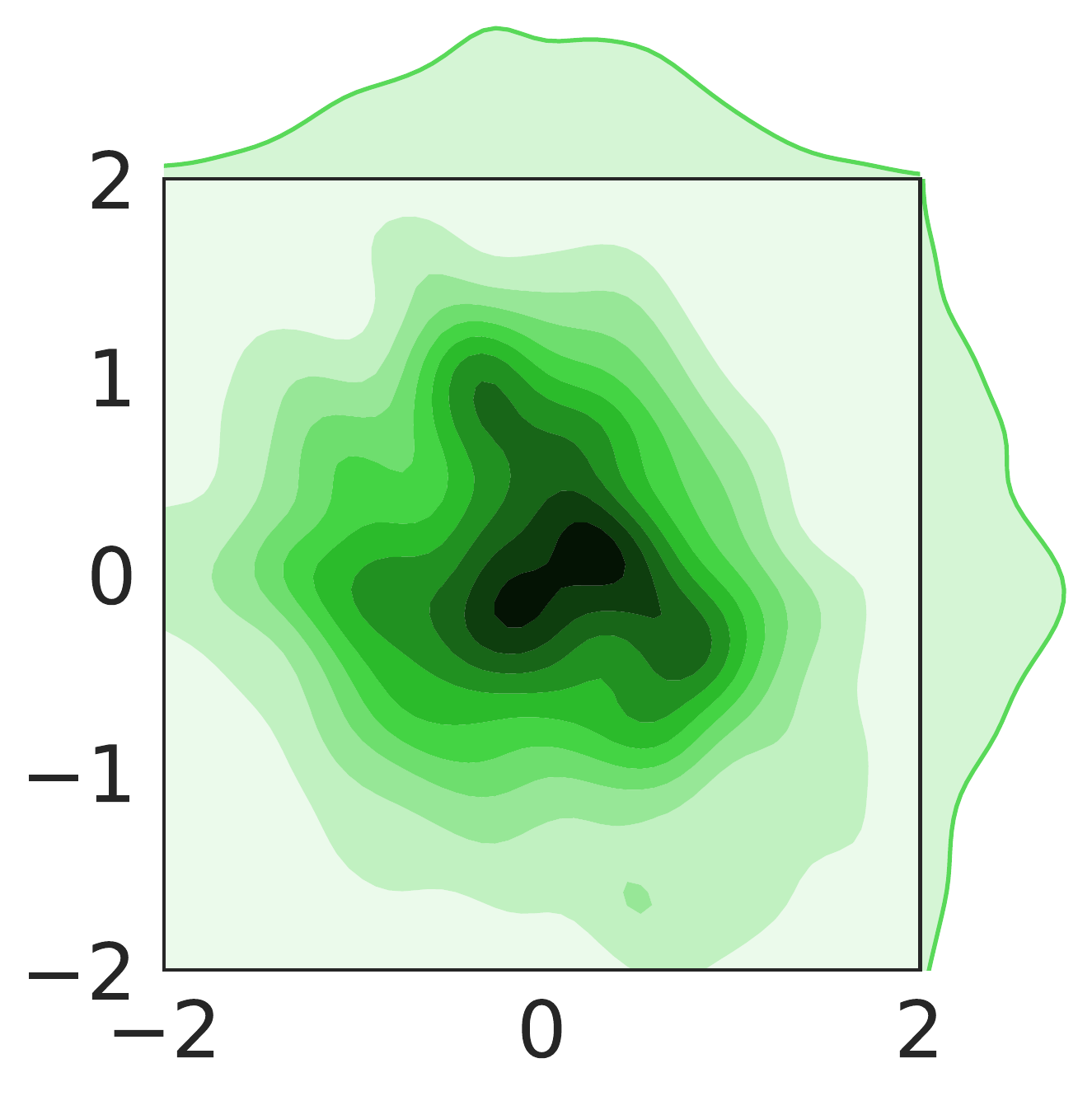}
\end{minipage}}
\caption{The density and marginal distribution of agents’ positions, (x, y), in 100 repeated episodes with different initialized states, generated from different learned policies upon Cooperative-navigation. Experiments are done under the same random seed. The top of each sub-figure is drawn from state-action pairs of all agents while the below explain for each one. KL is the KL divergence between generated interactions (top figure) with the demonstrators.}
\label{fig:vis-dist-cn}
\end{figure}

\clearpage
We show the density of interactions for different methods along with demonstrator policies conducted upon Predator-prey in \fig{fig:vis-dist-pp}.
\begin{figure}[H]
\centering
\subfigure[Demonstrators (KL = 0)]{
\begin{minipage}[b]{0.3\linewidth}
\includegraphics[width=1\linewidth]{./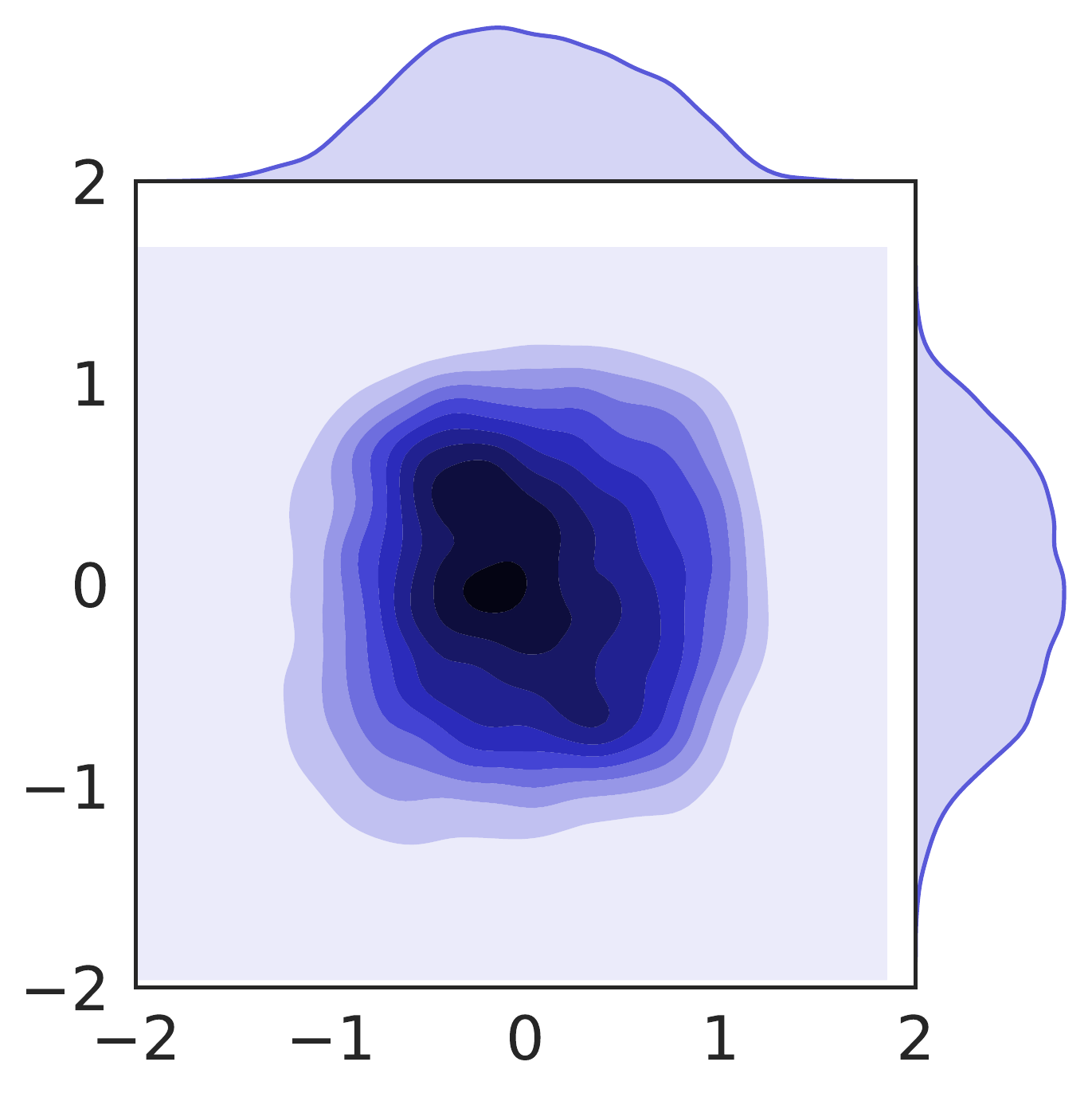}
\includegraphics[width=0.45\linewidth]{./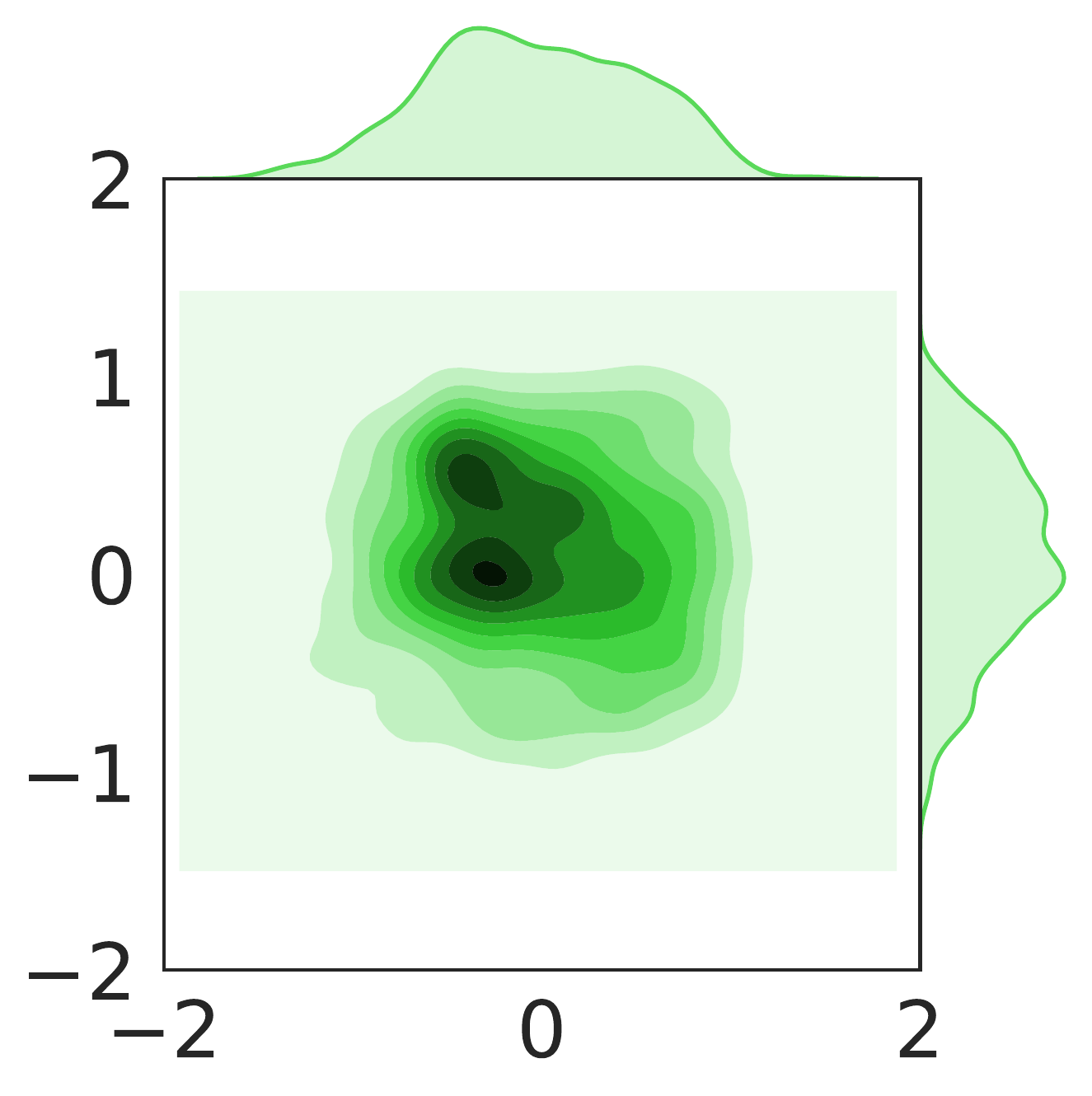}
\includegraphics[width=0.45\linewidth]{./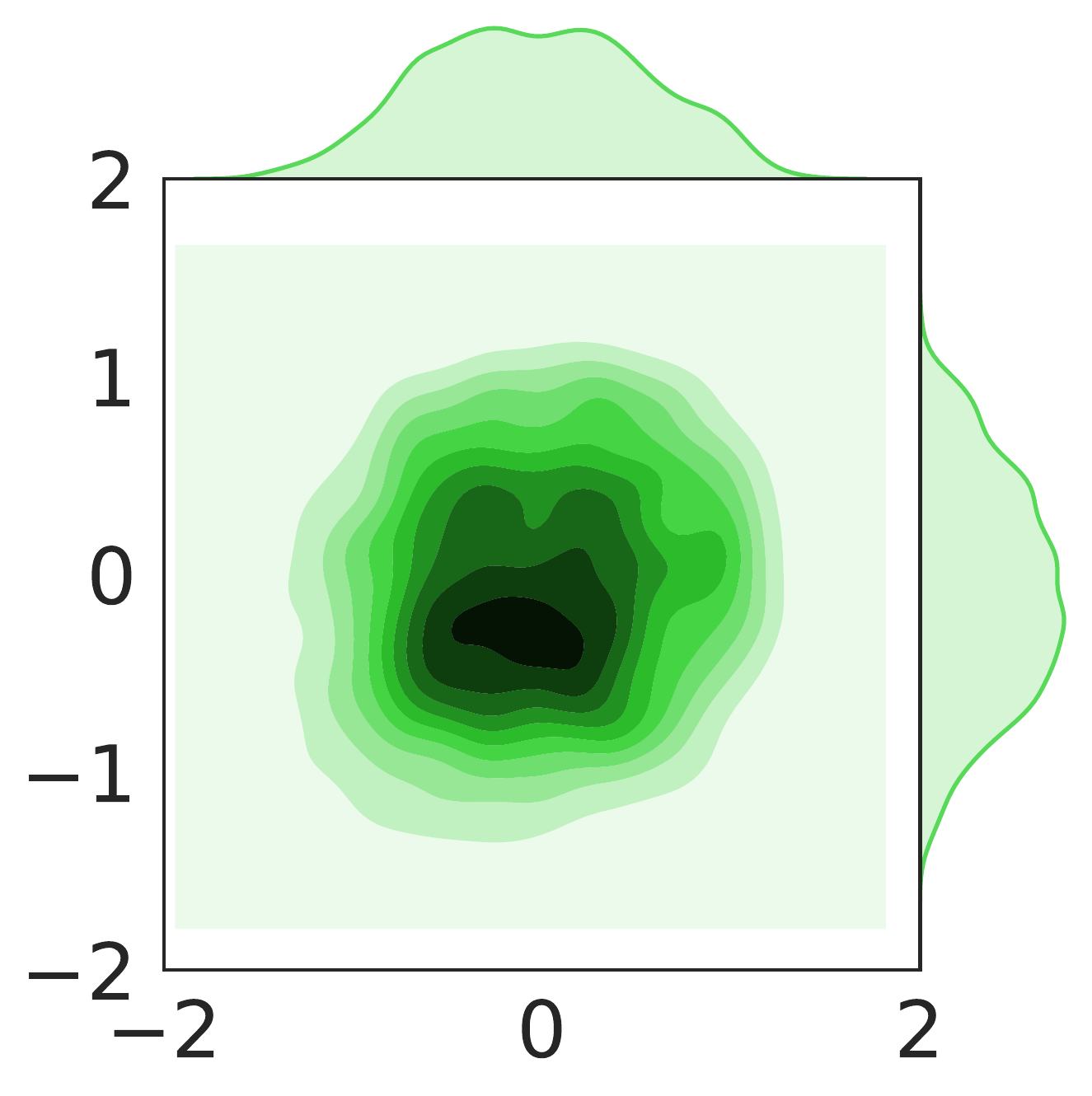}\\
\includegraphics[width=0.45\linewidth]{./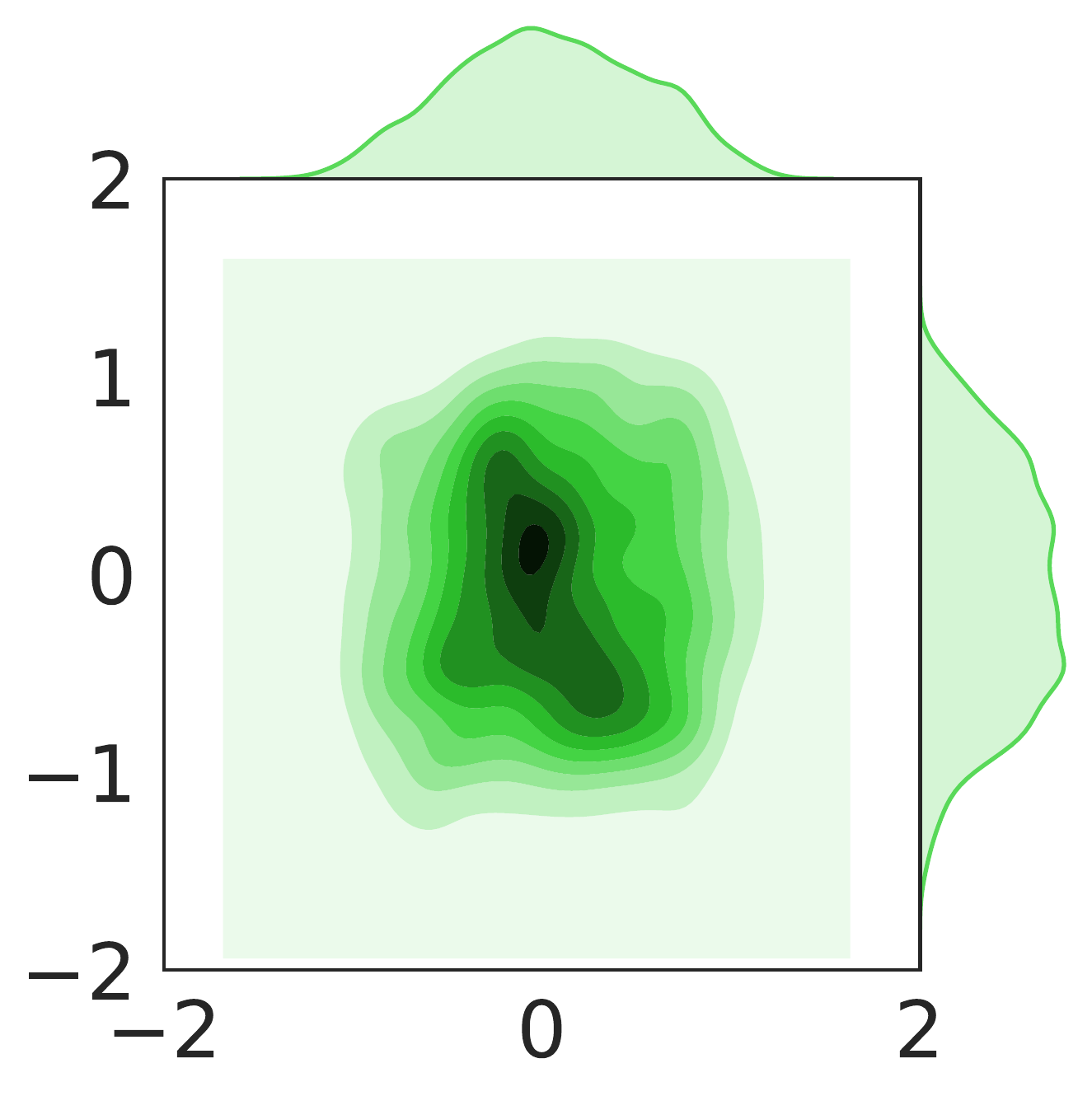}
\includegraphics[width=0.45\linewidth]{./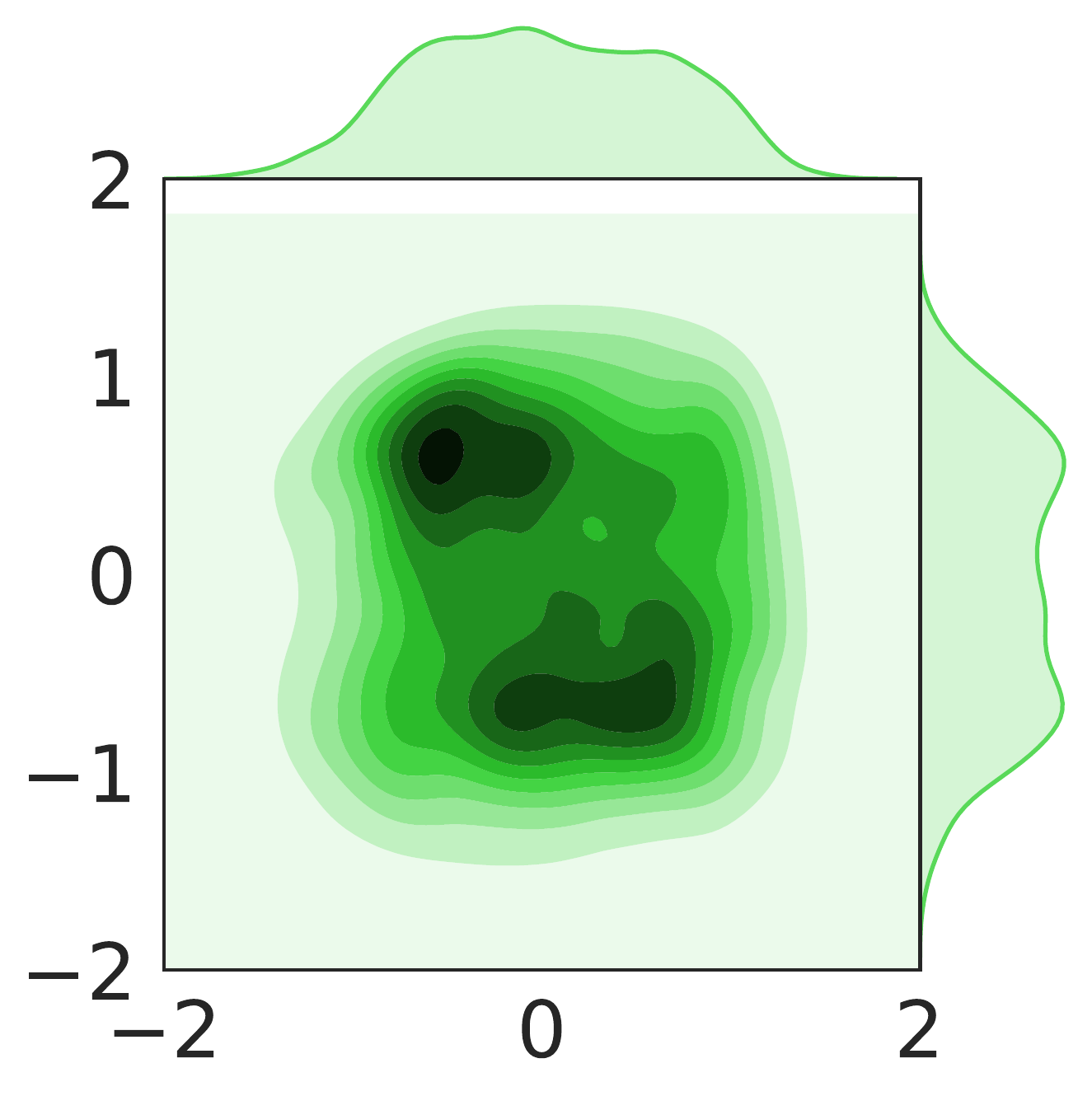}
\end{minipage}}
\subfigure[MA-GAIL (KL = 9.998)]{
\begin{minipage}[b]{0.3\linewidth}
\includegraphics[width=1\linewidth]{./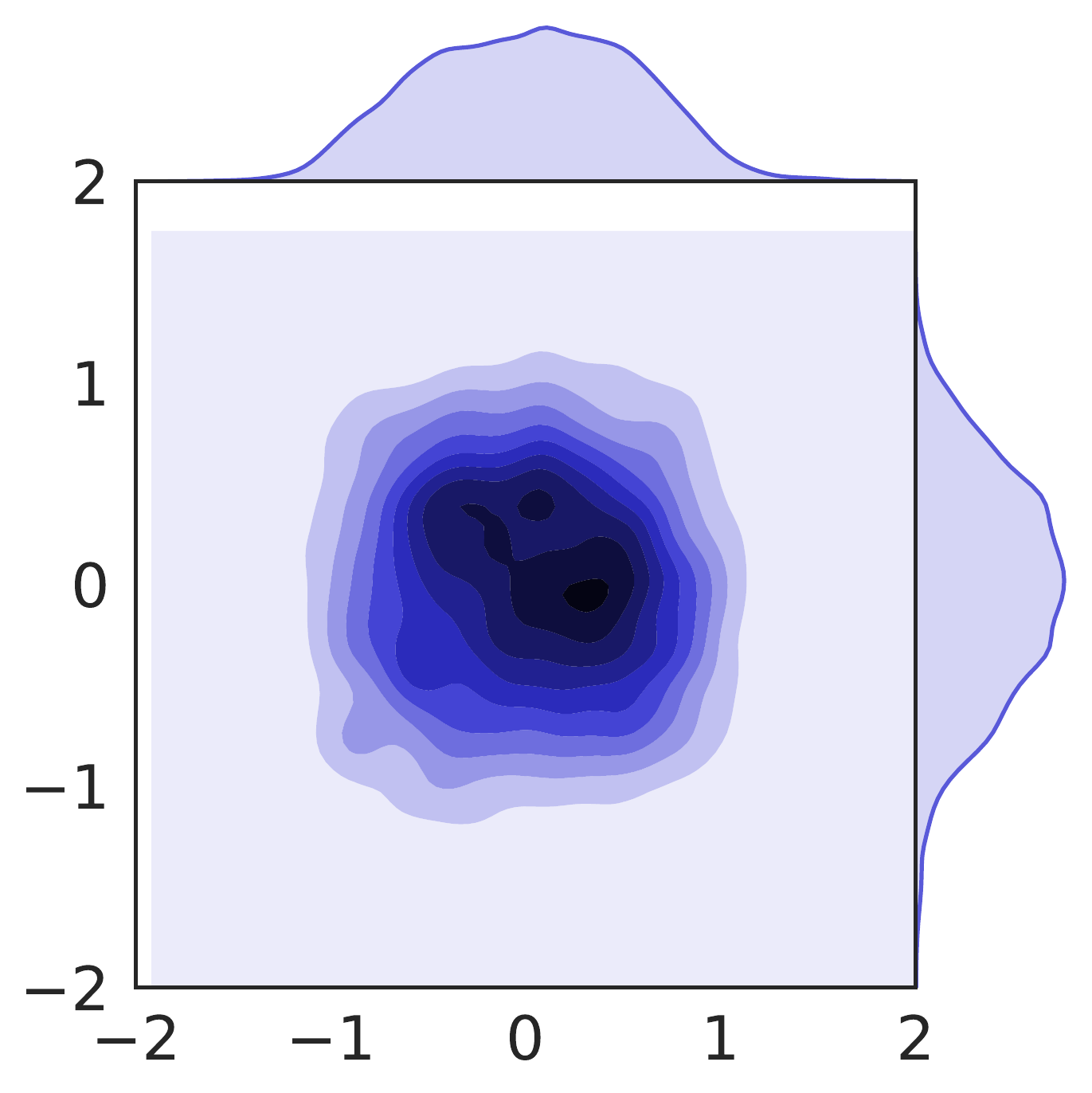}
\includegraphics[width=0.45\linewidth]{./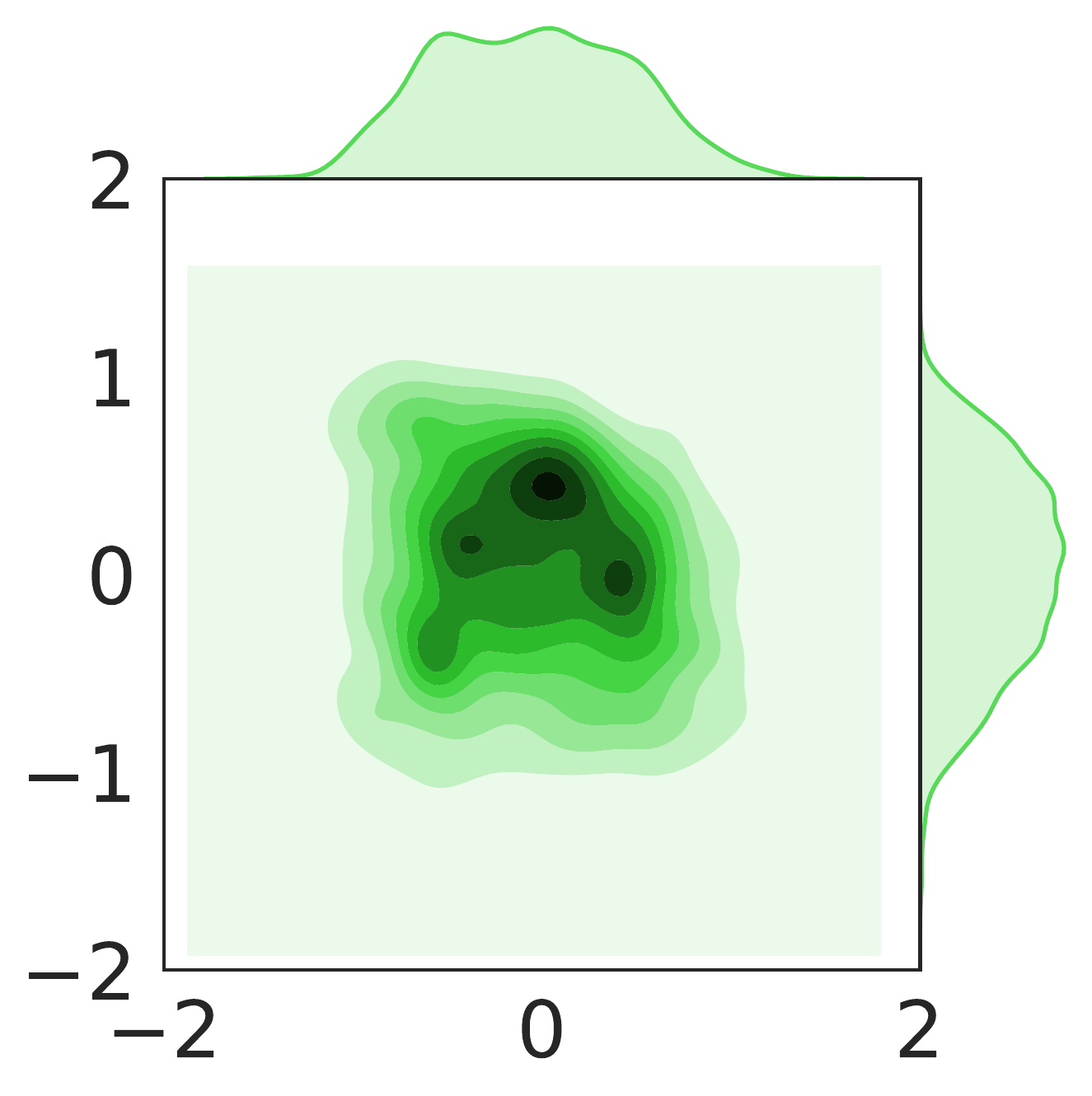}
\includegraphics[width=0.45\linewidth]{./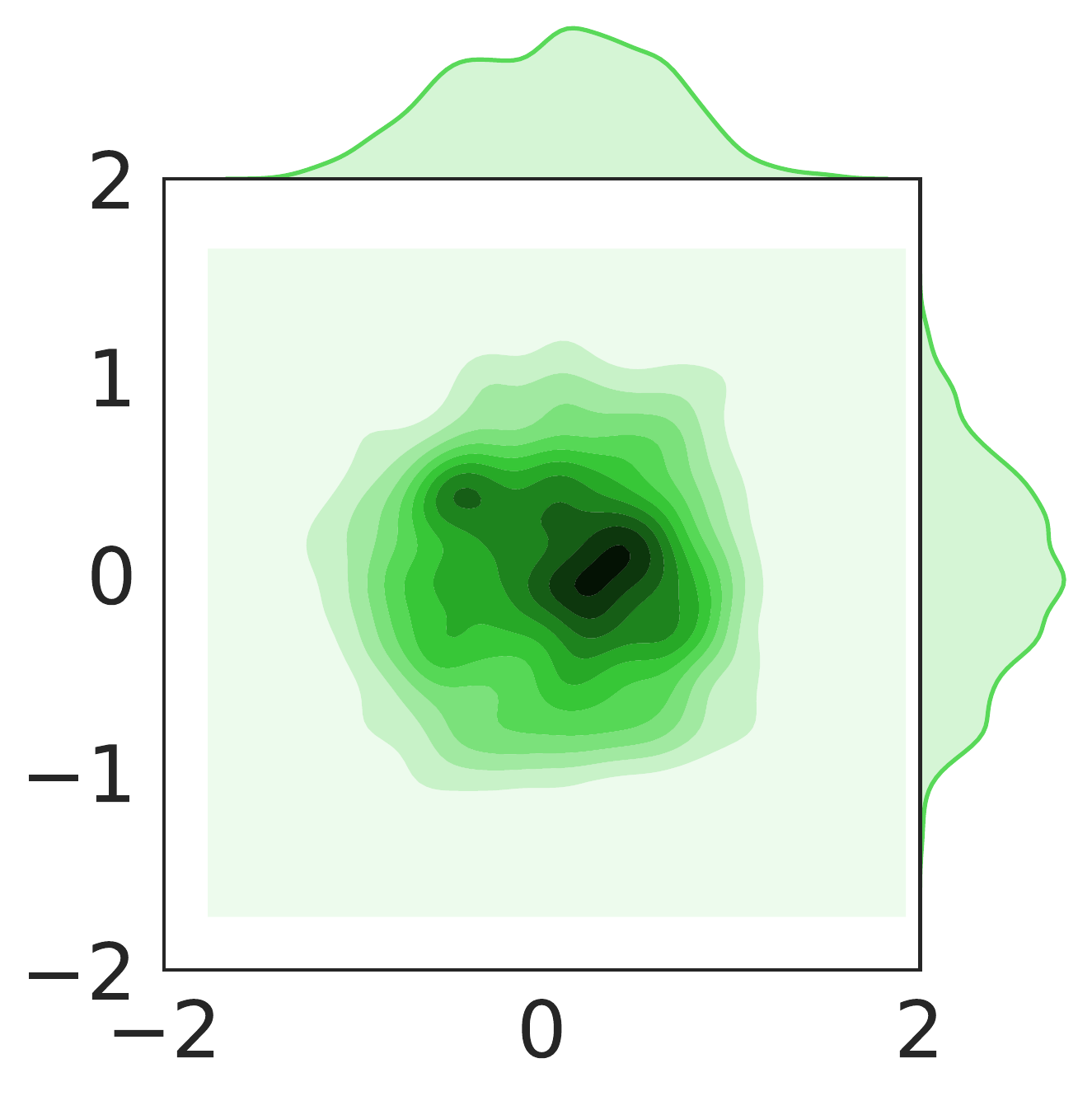}\\
\includegraphics[width=0.45\linewidth]{./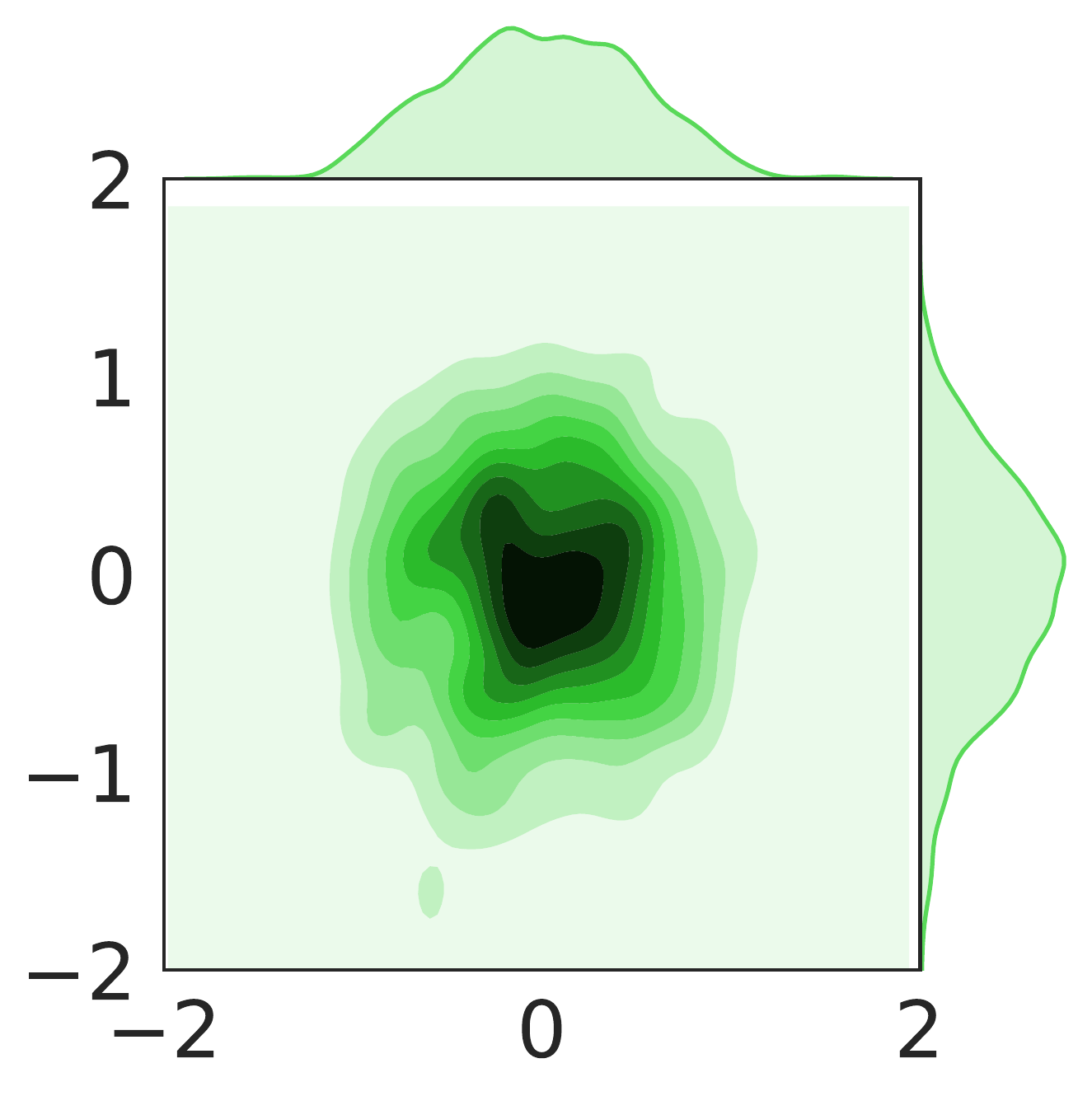}
\includegraphics[width=0.45\linewidth]{./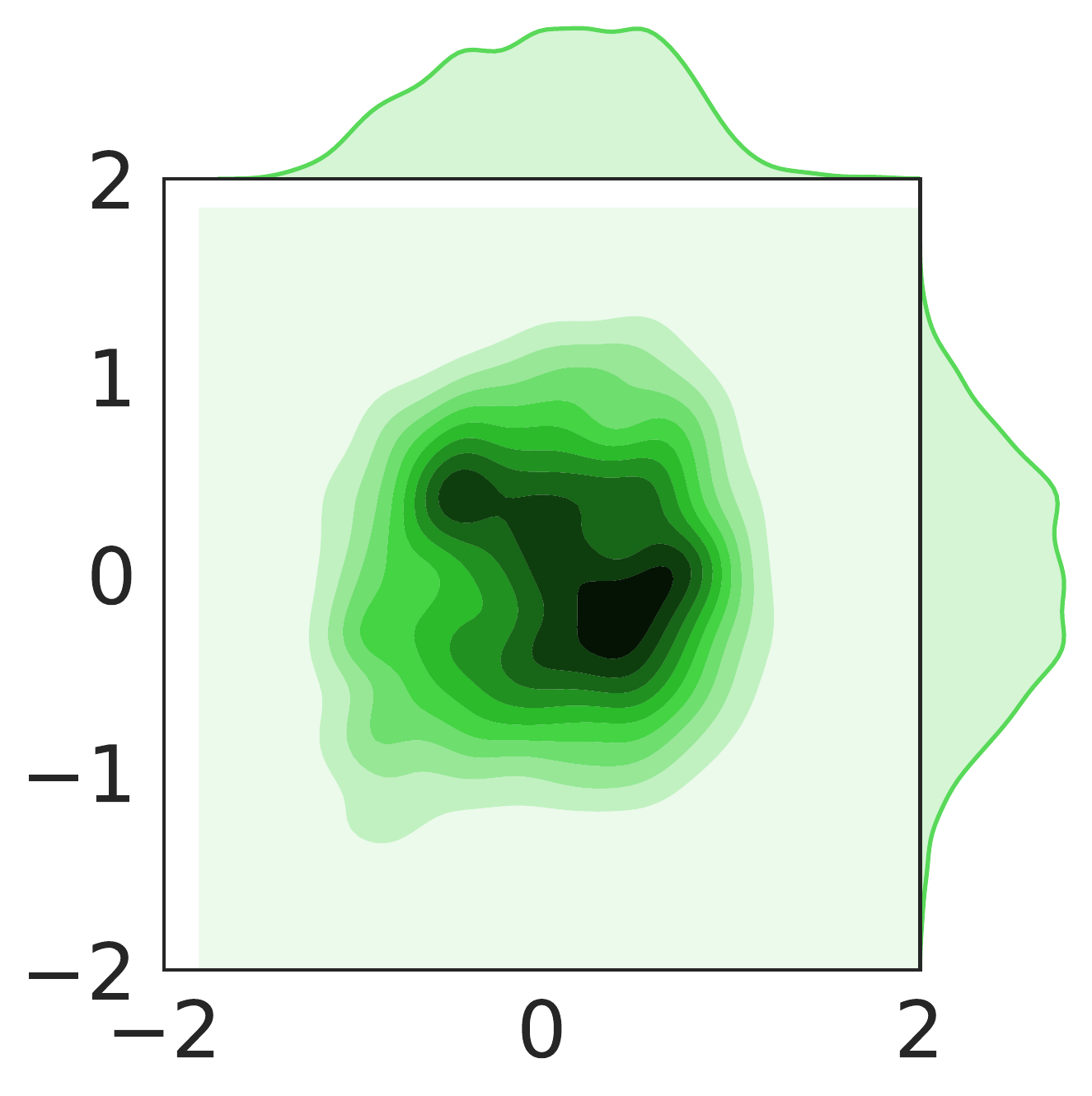}
\end{minipage}}
\subfigure[MA-AIRL (KL = 71.568)]{
\begin{minipage}[b]{0.3\linewidth}
\includegraphics[width=1\linewidth]{./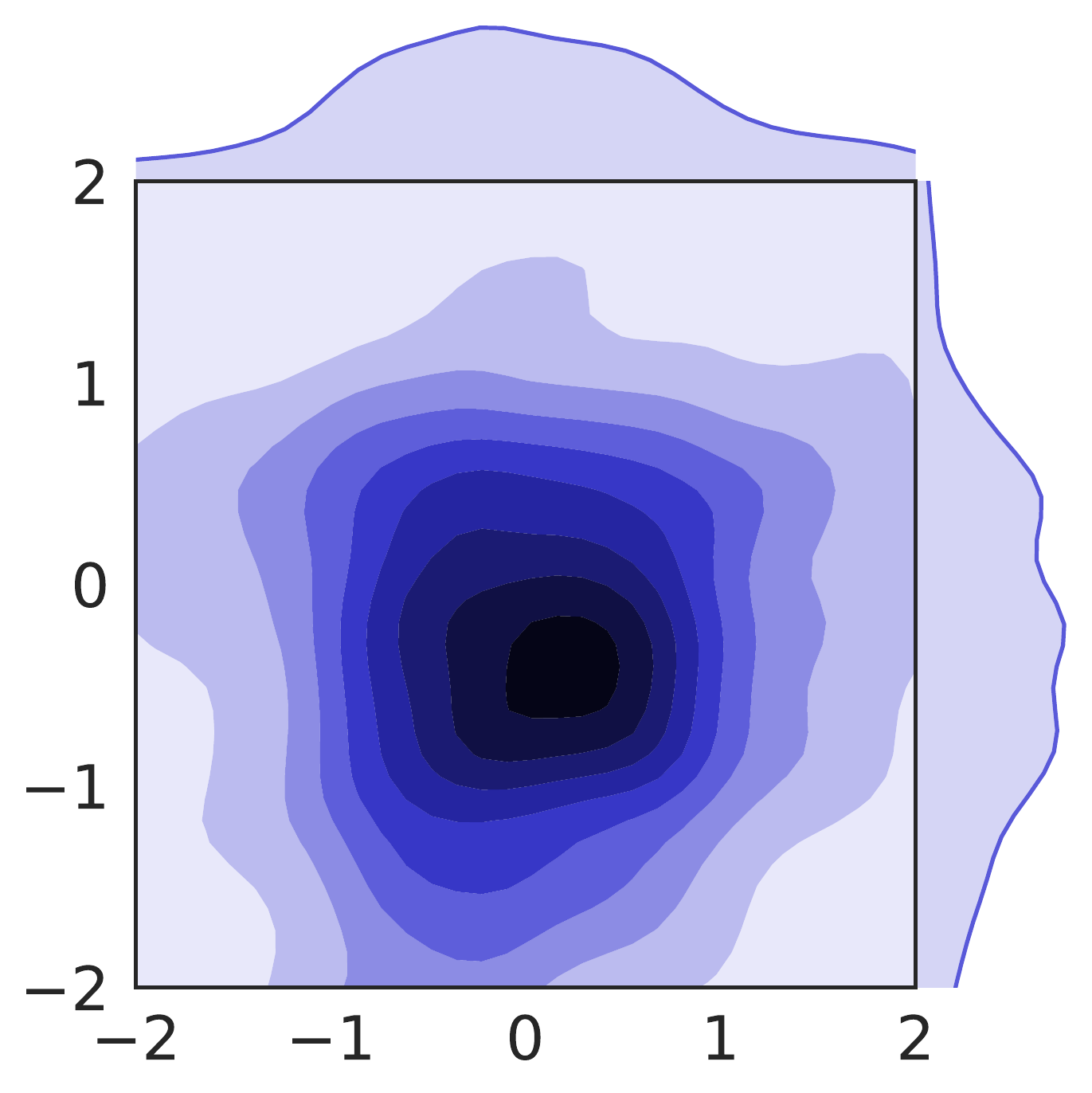}
\includegraphics[width=0.45\linewidth]{./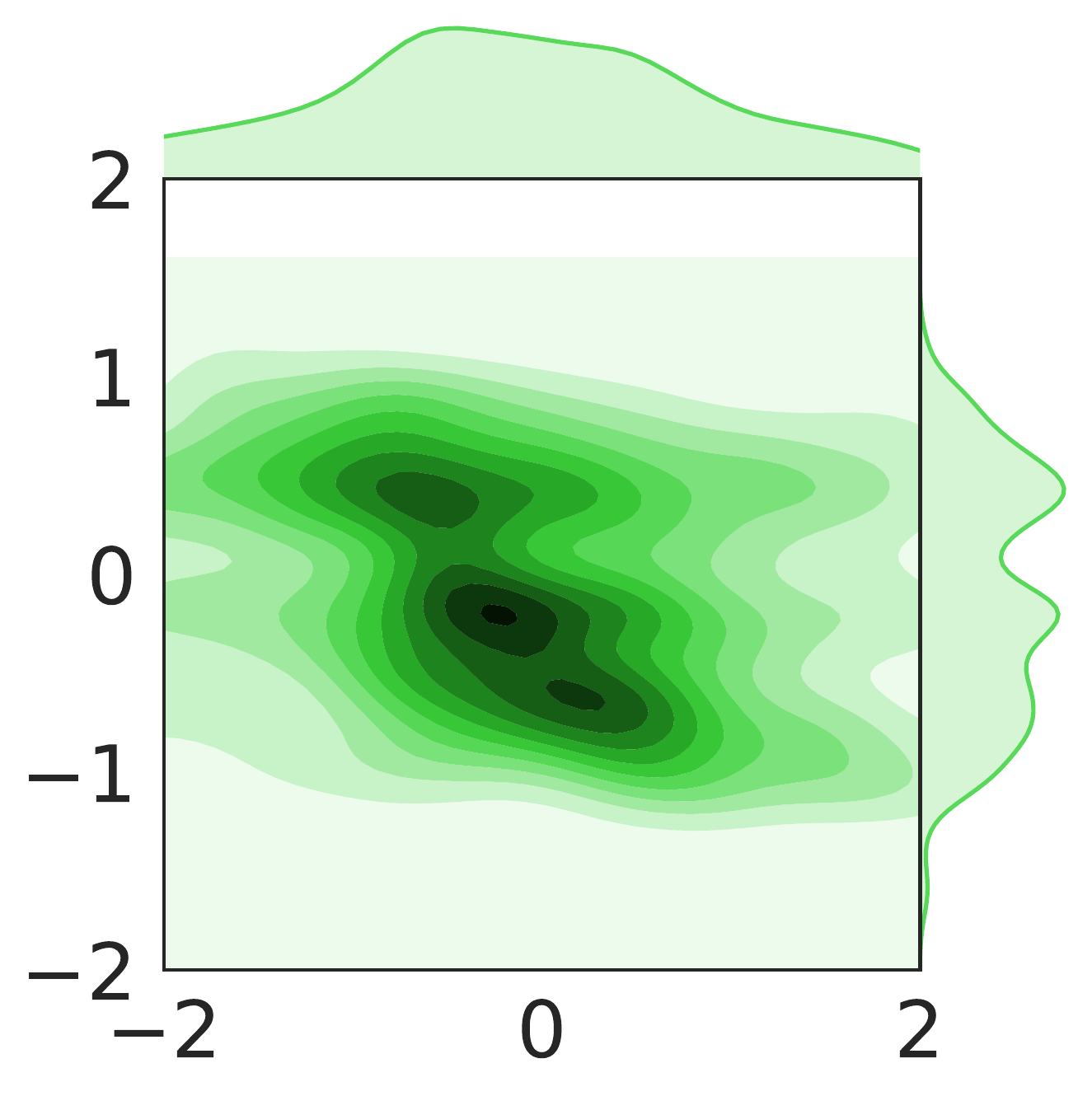}
\includegraphics[width=0.45\linewidth]{./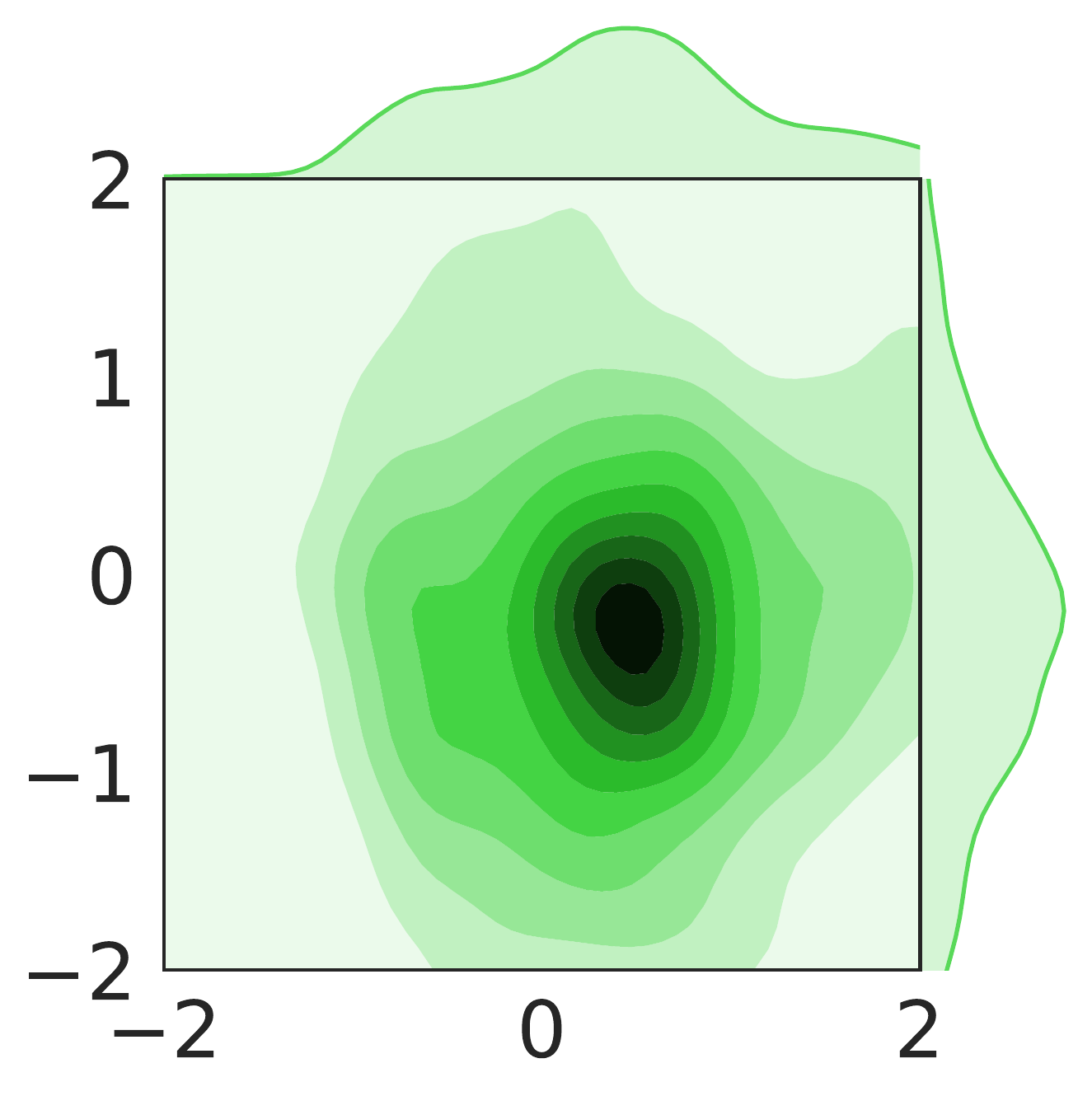}\\
\includegraphics[width=0.45\linewidth]{./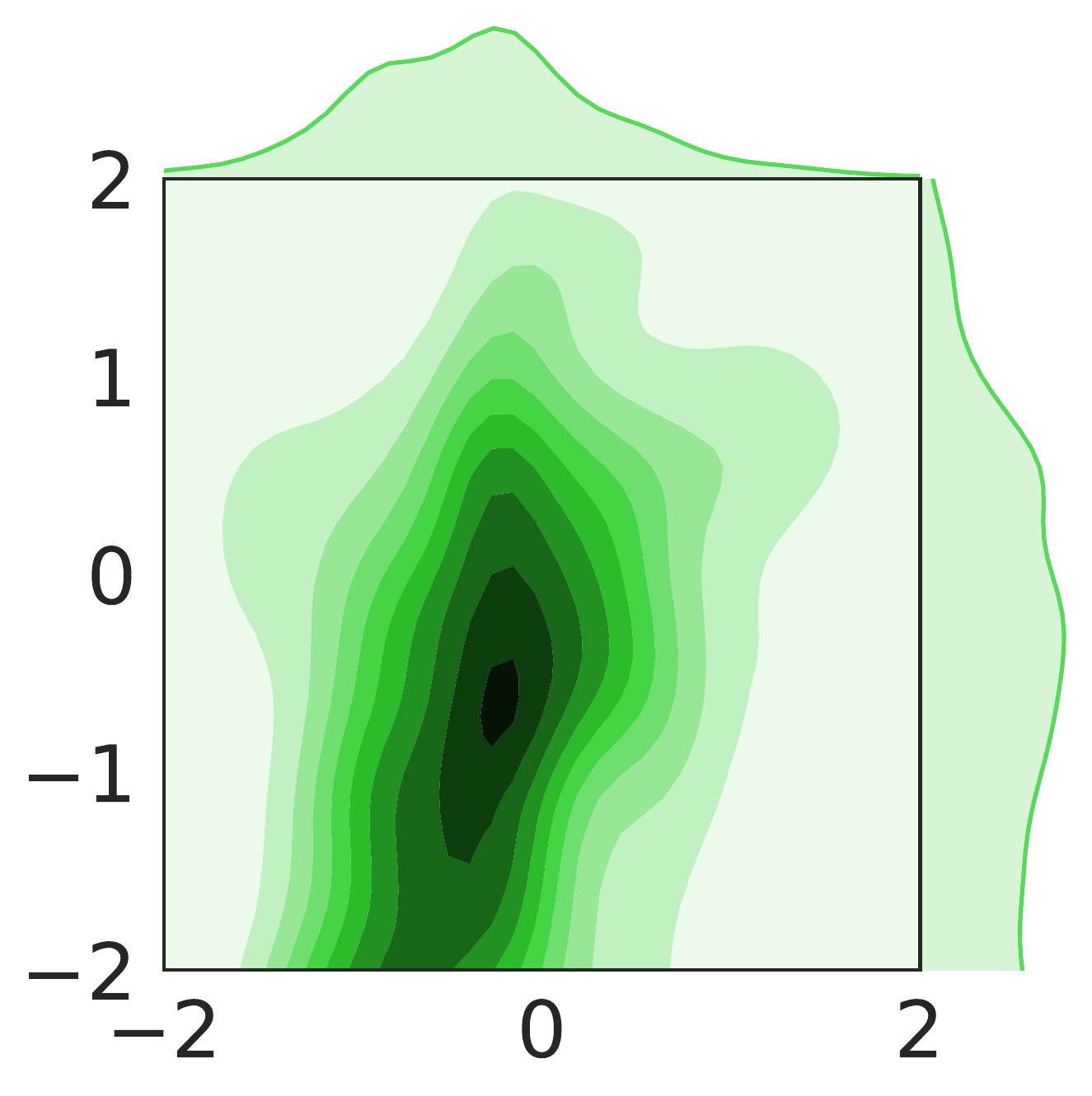}
\includegraphics[width=0.45\linewidth]{./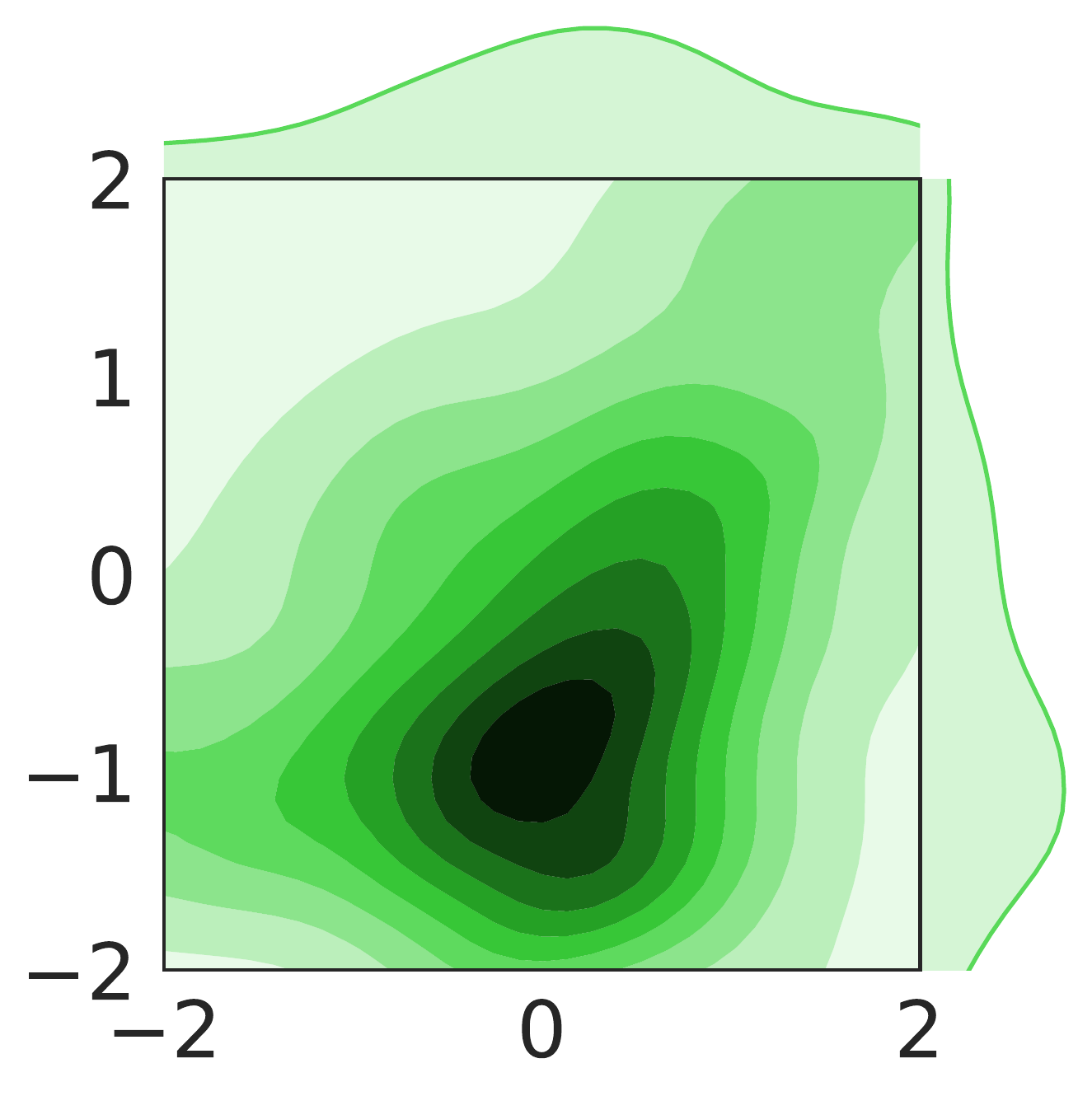}
\end{minipage}}
\subfigure[CoDAIL (KL = 8.621)]{
\begin{minipage}[b]{0.3\linewidth}
\includegraphics[width=1\linewidth]{./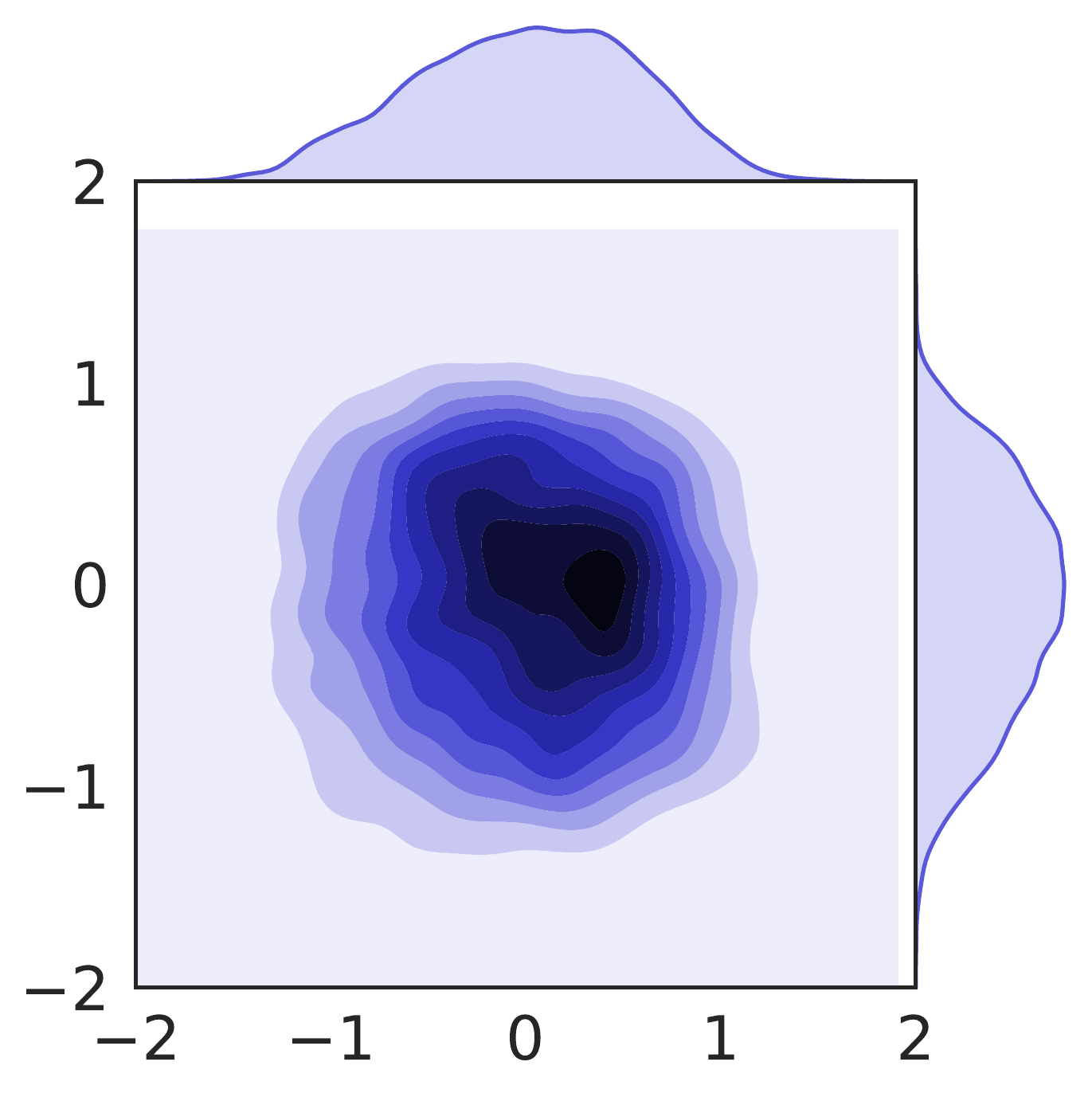}
\includegraphics[width=0.45\linewidth]{./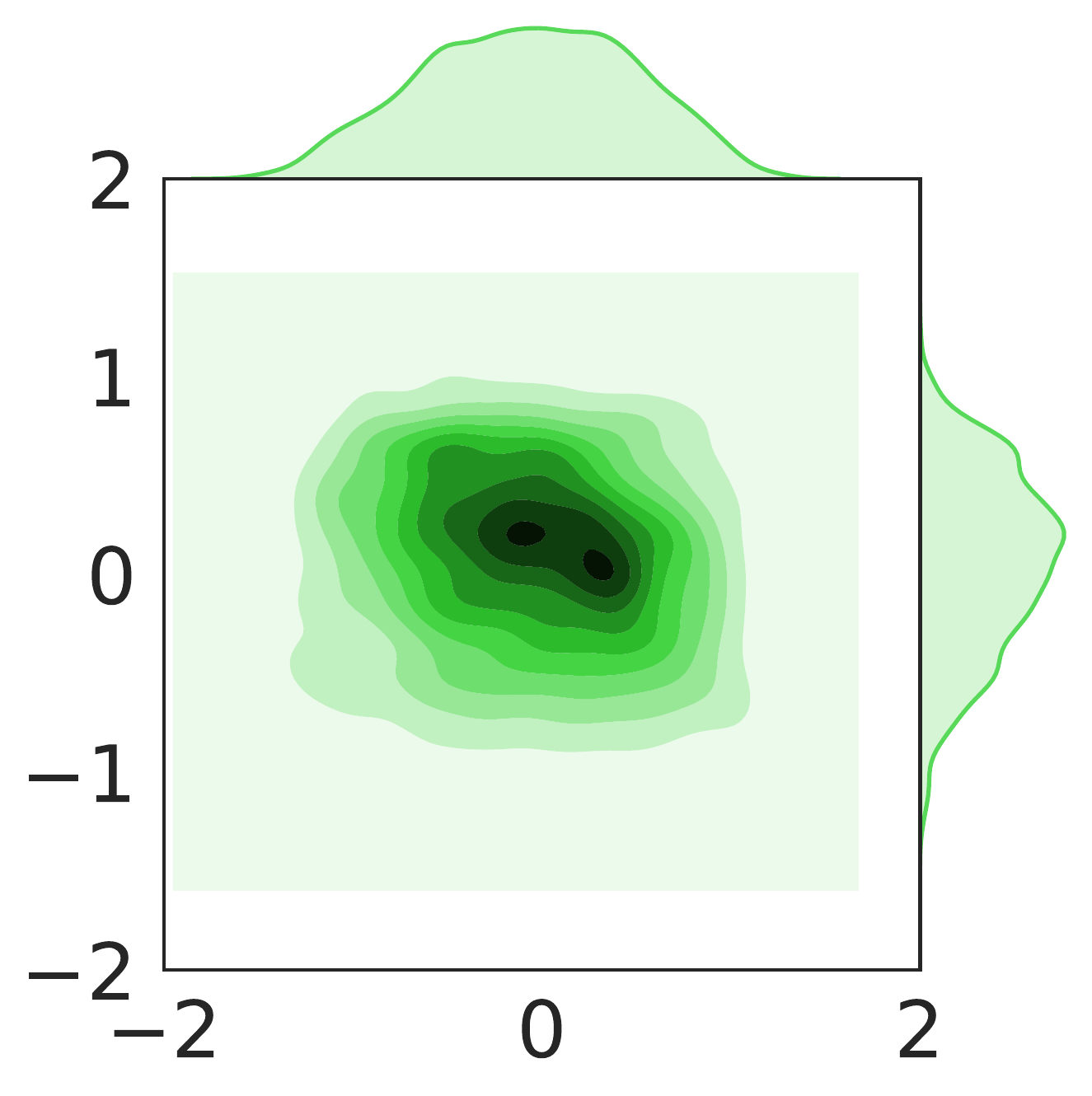}
\includegraphics[width=0.45\linewidth]{./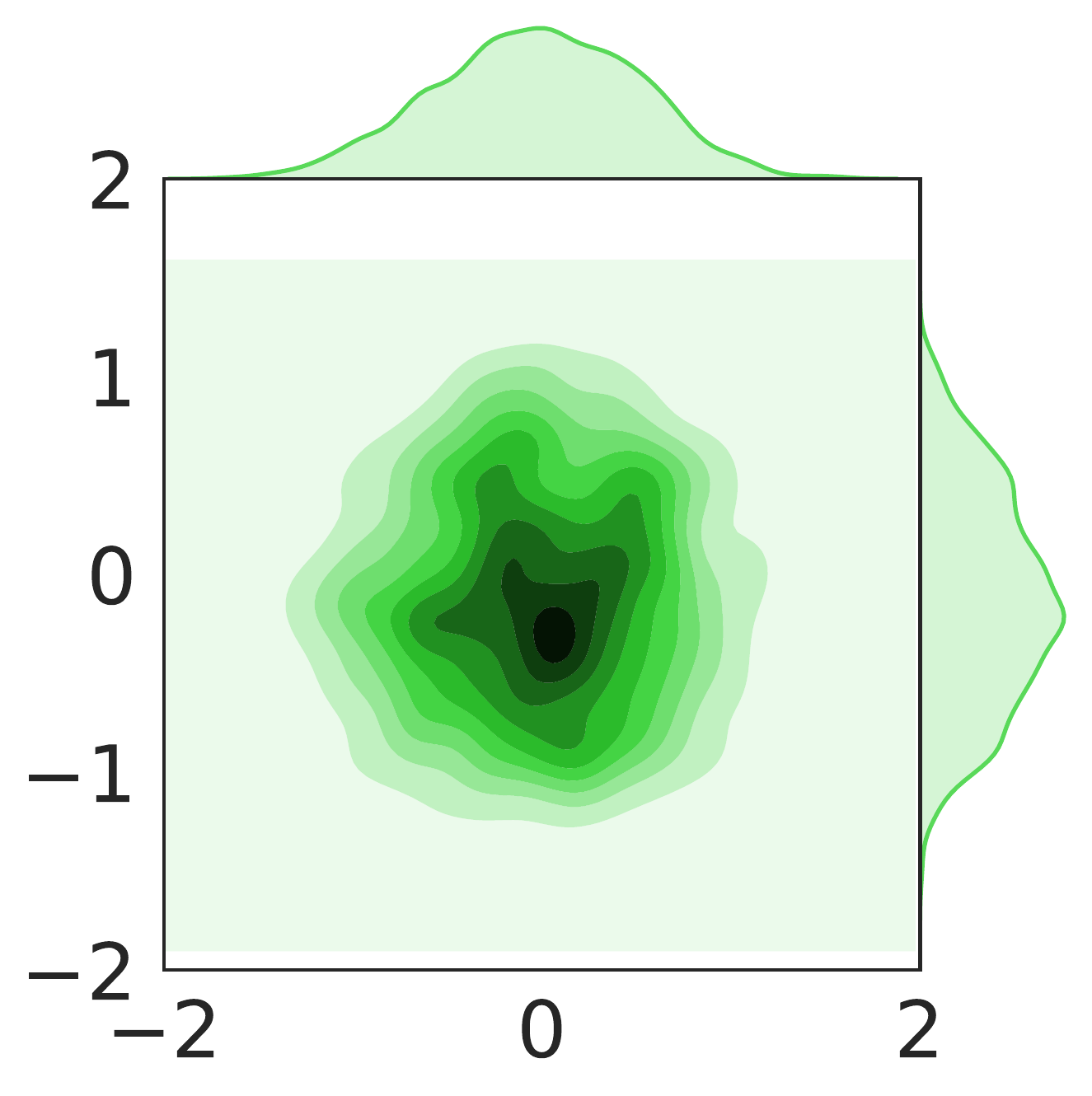}\\
\includegraphics[width=0.45\linewidth]{./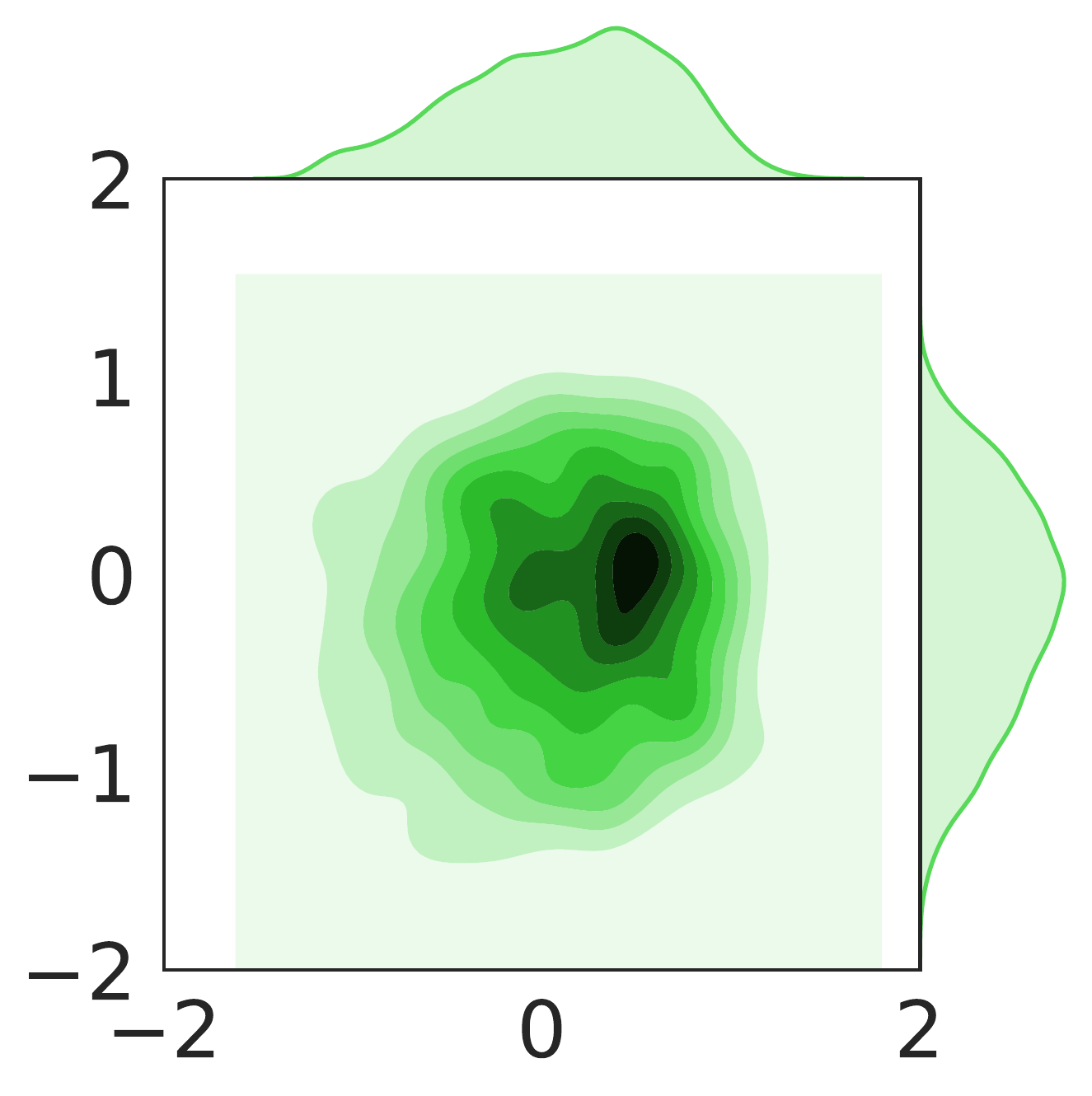}
\includegraphics[width=0.45\linewidth]{./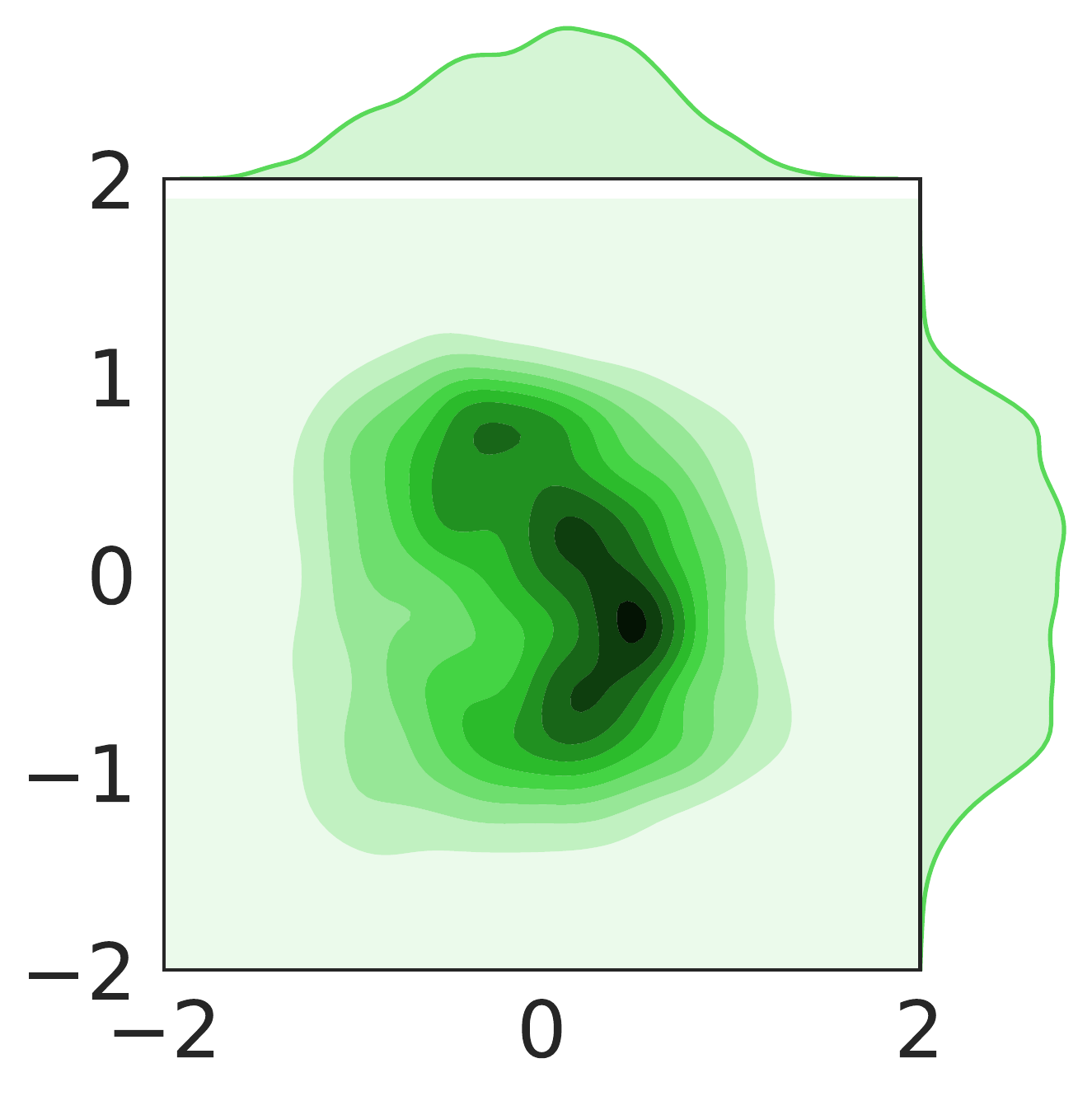}
\end{minipage}}
\subfigure[NC-DAIL (KL = 16.698)]{
\begin{minipage}[b]{0.3\linewidth}
\includegraphics[width=1\linewidth]{./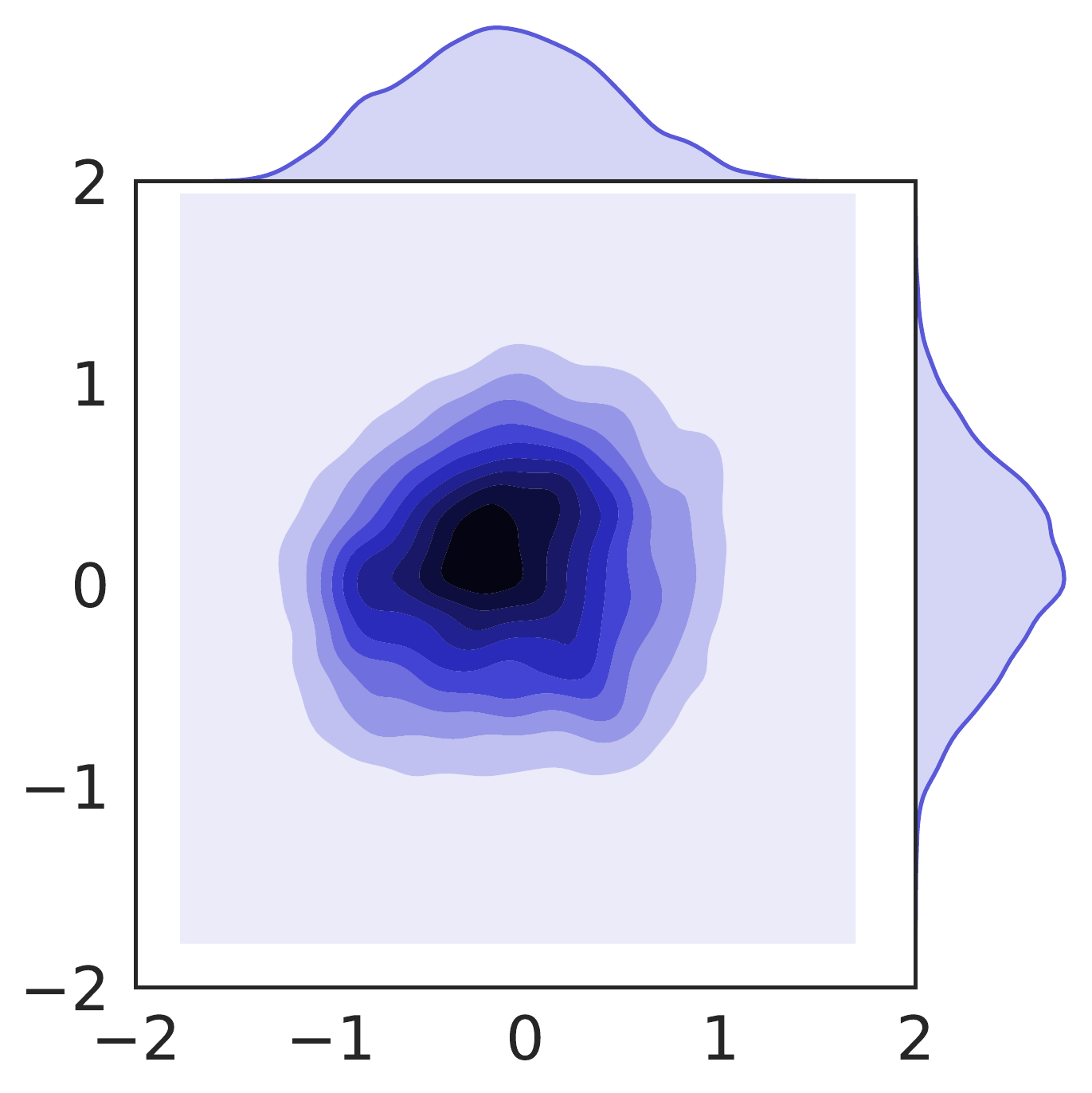}
\includegraphics[width=0.45\linewidth]{./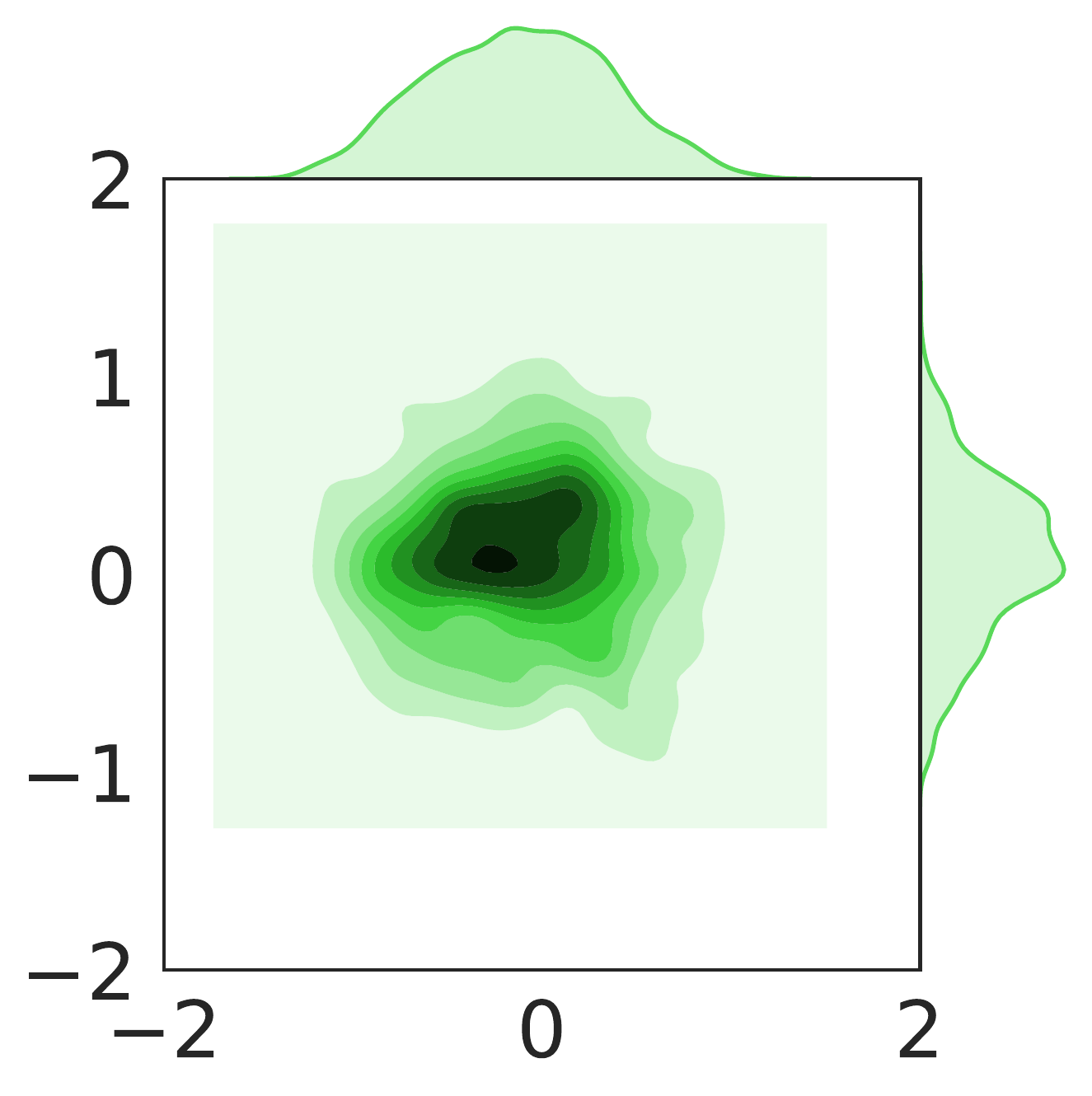}
\includegraphics[width=0.45\linewidth]{./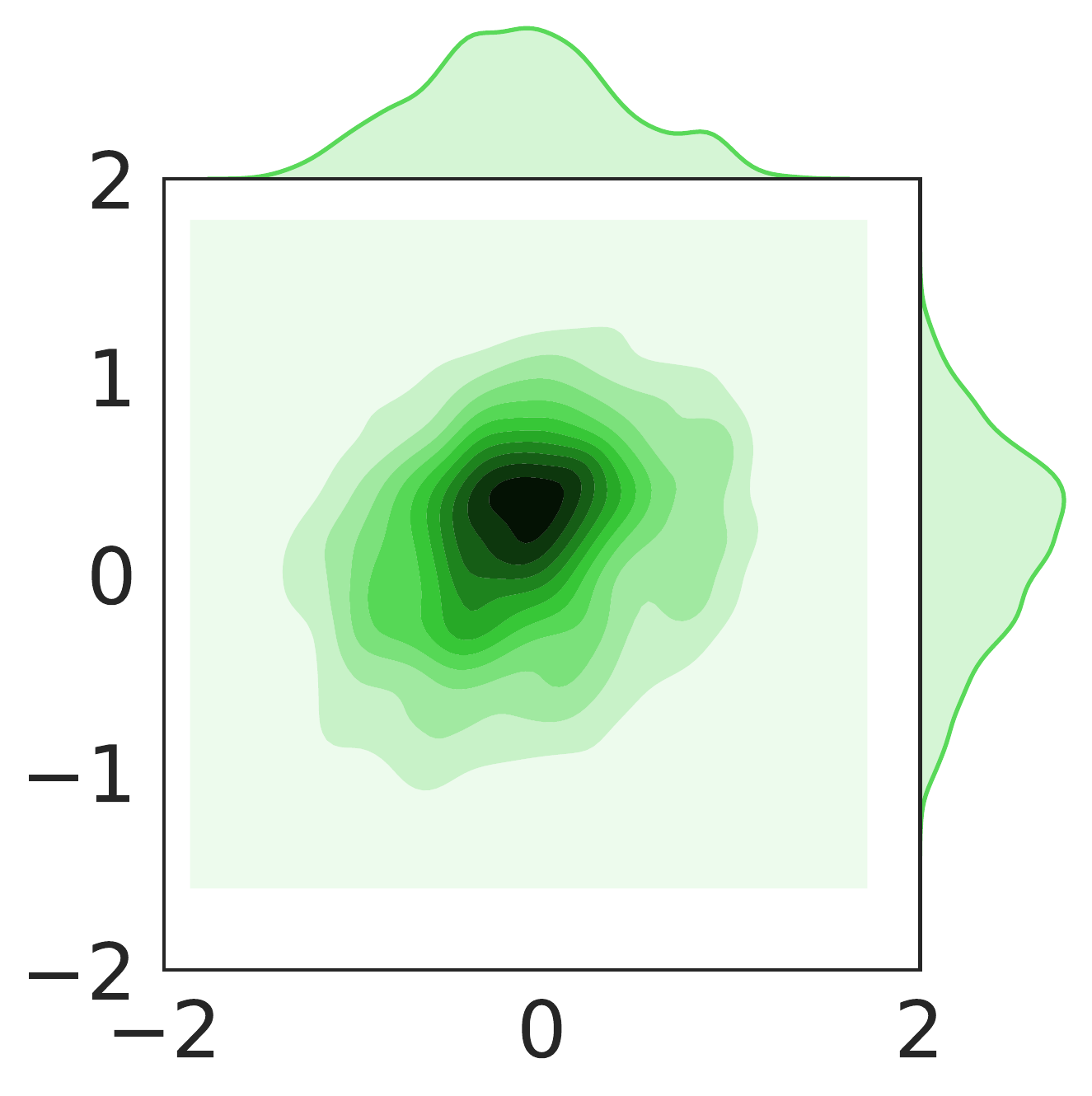}\\
\includegraphics[width=0.45\linewidth]{./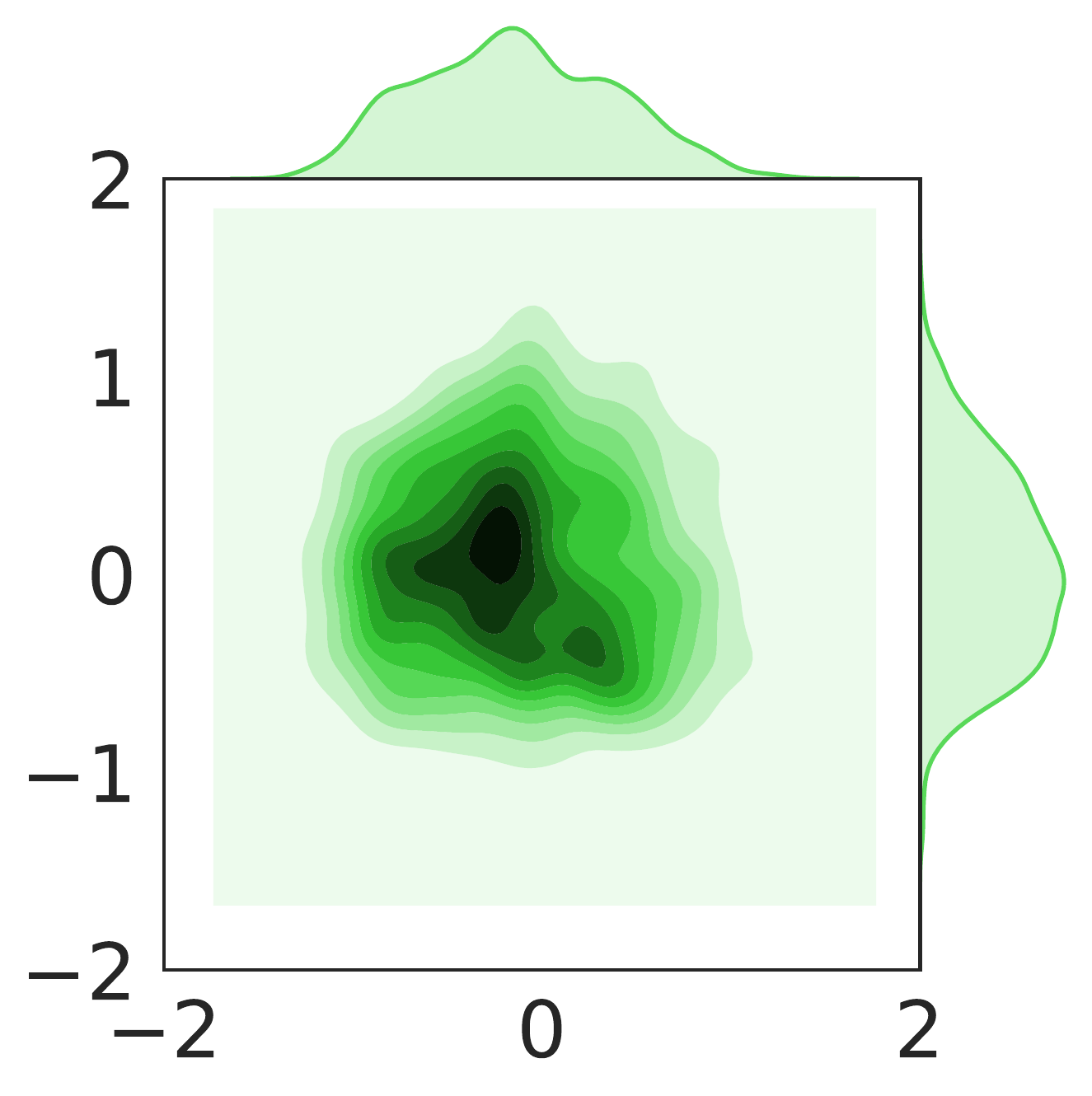}
\includegraphics[width=0.45\linewidth]{./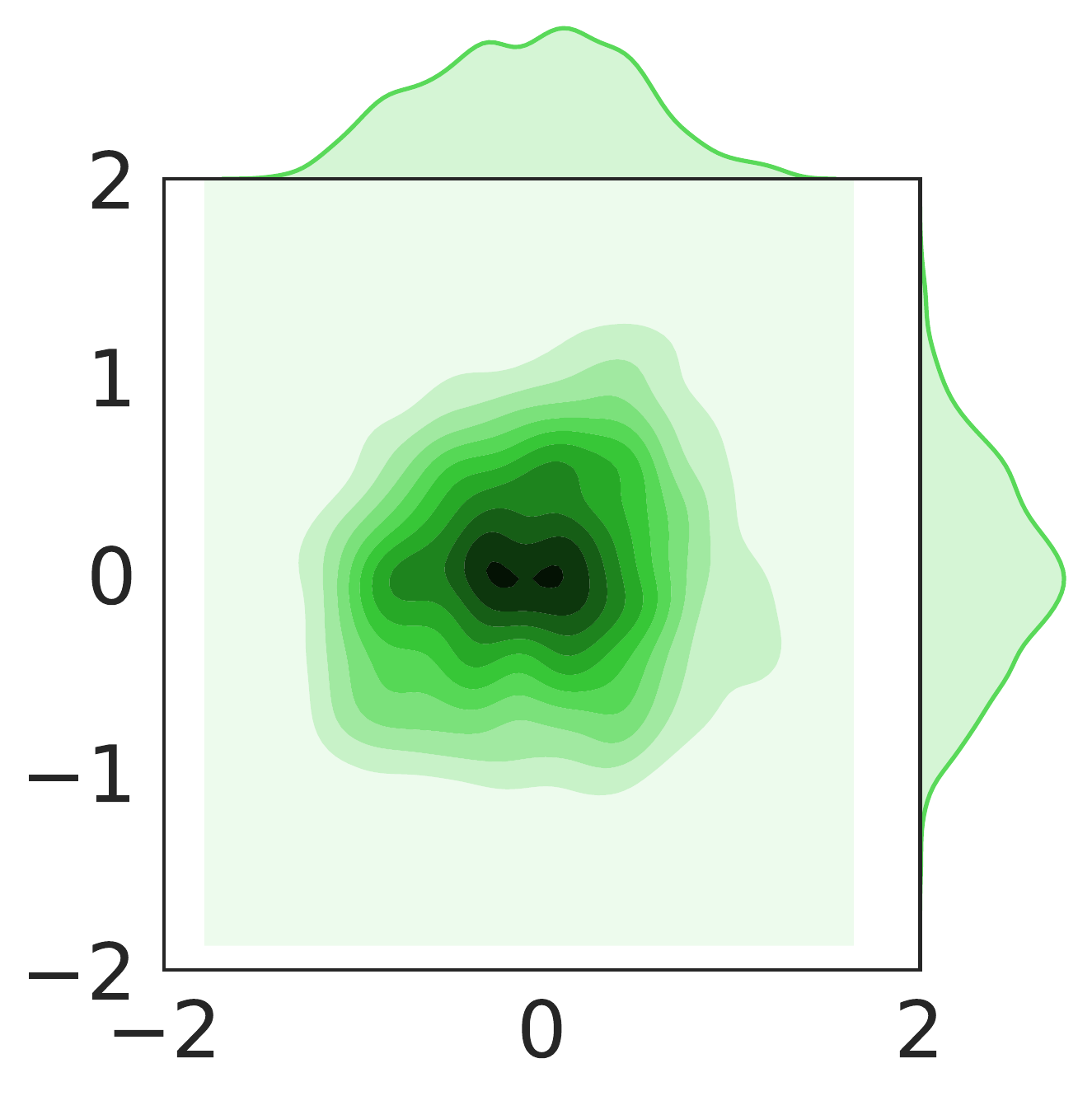}
\end{minipage}}
\subfigure[Random (KL = 37.555)]{
\begin{minipage}[b]{0.3\linewidth}
\includegraphics[width=1\linewidth]{./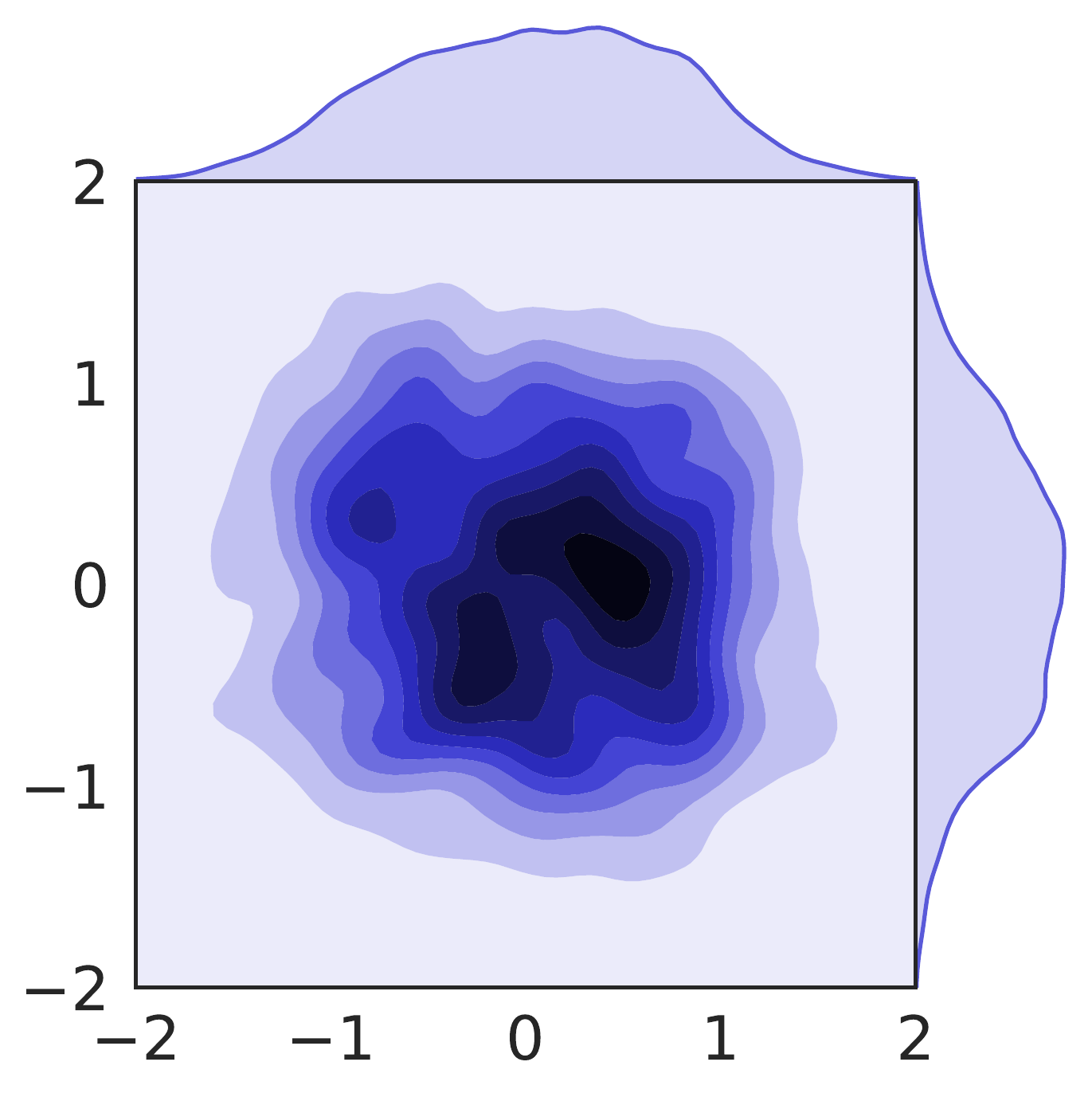}
\includegraphics[width=0.45\linewidth]{./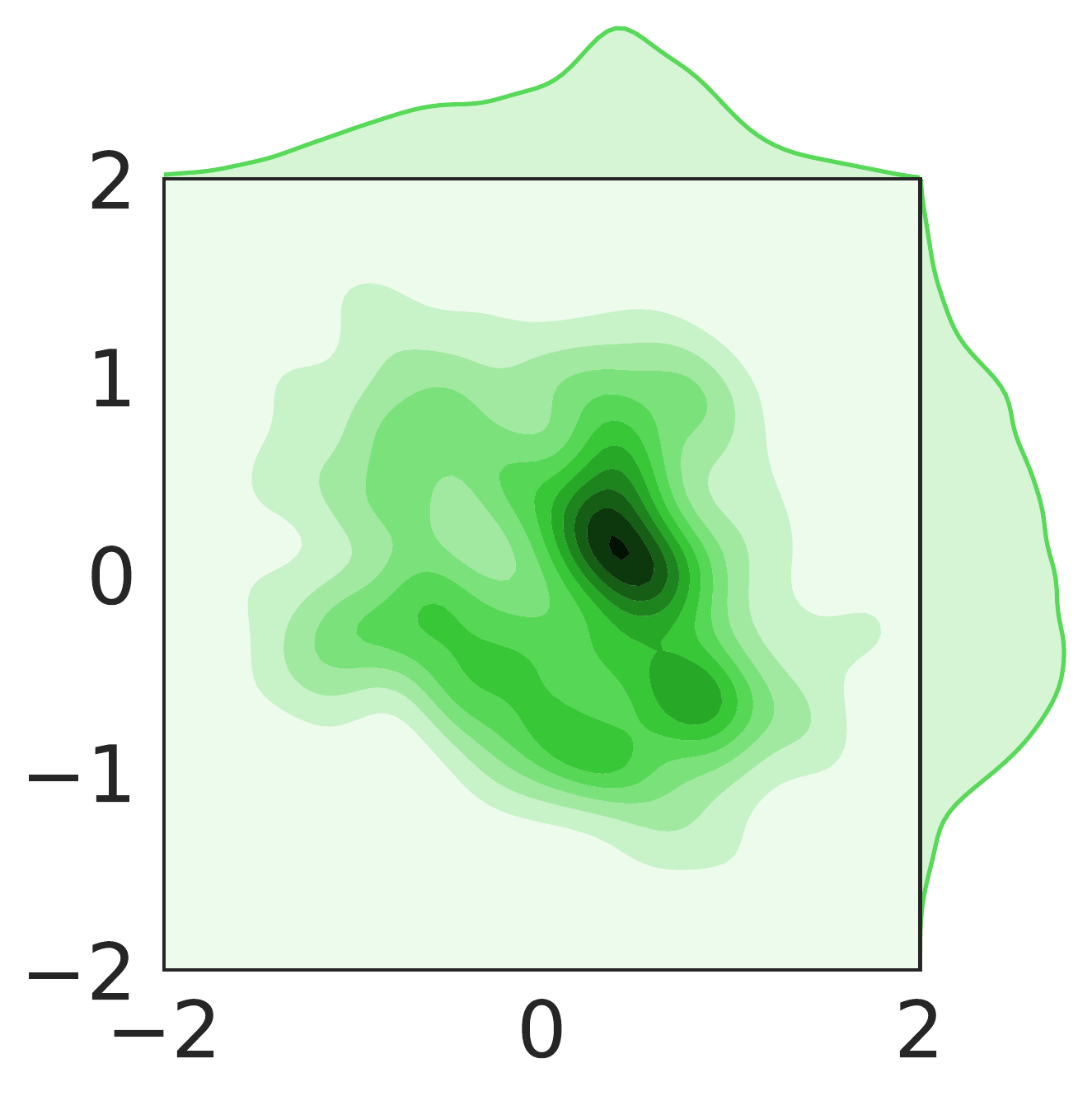}
\includegraphics[width=0.45\linewidth]{./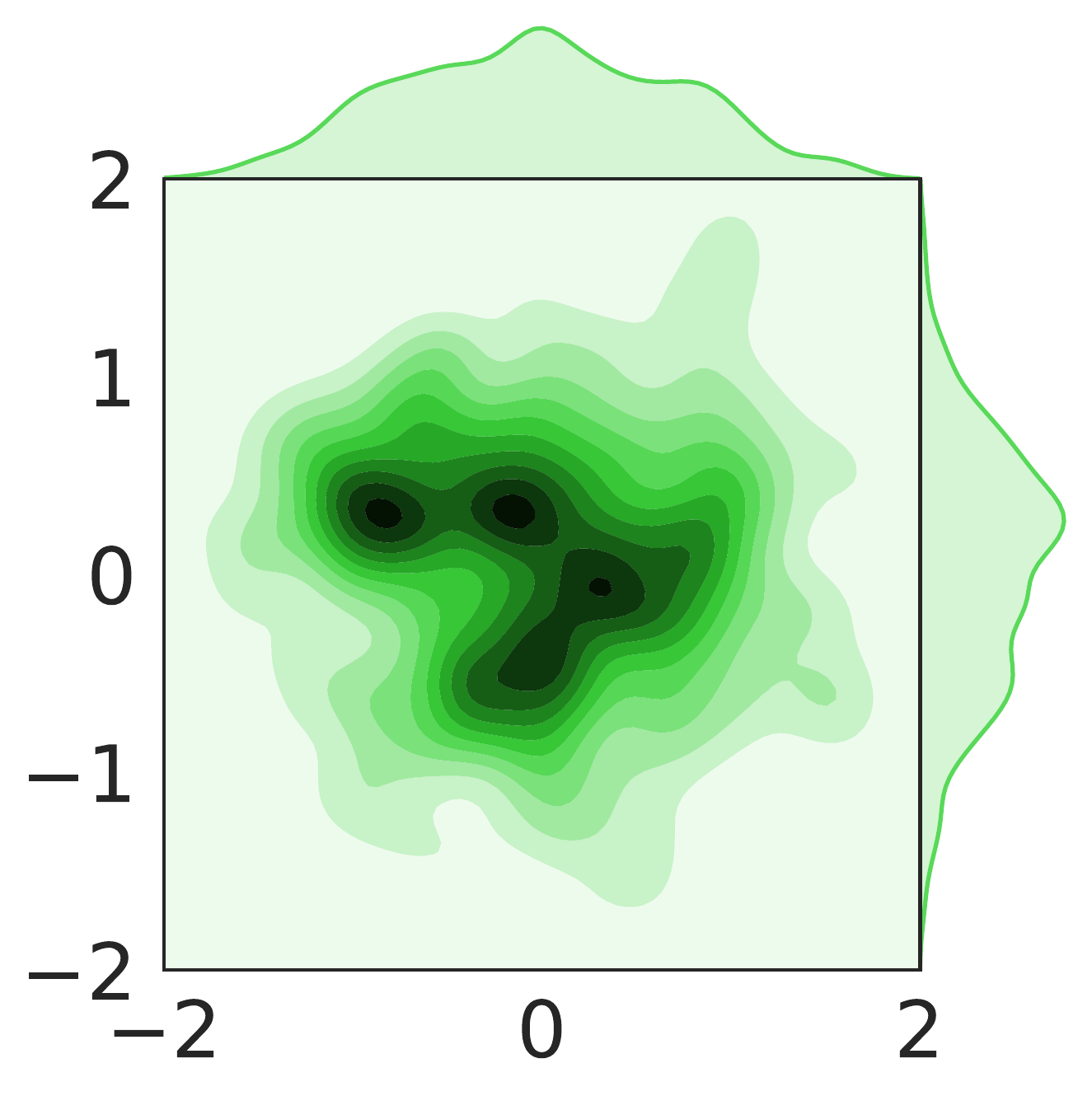}\\
\includegraphics[width=0.45\linewidth]{./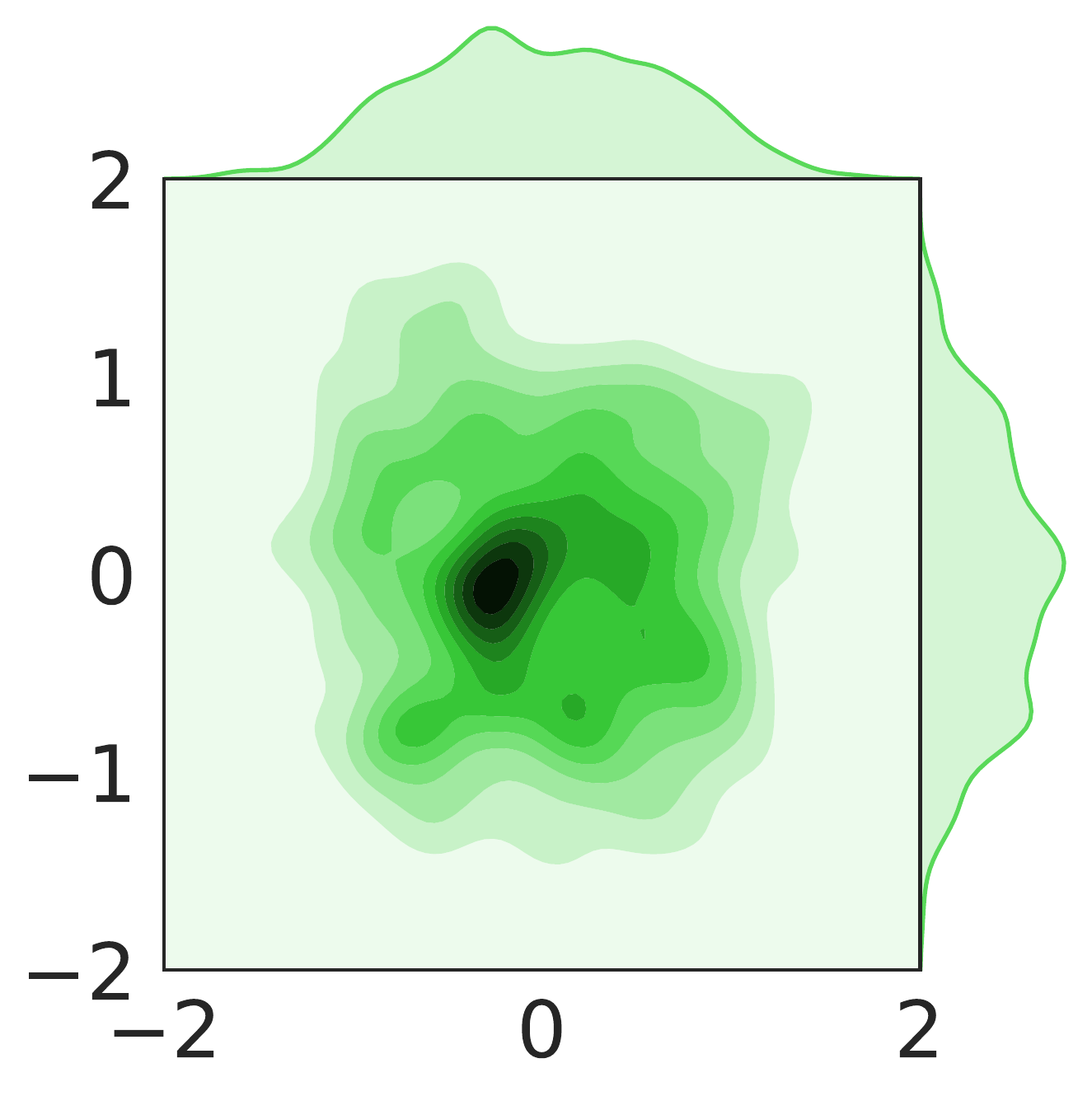}
\includegraphics[width=0.45\linewidth]{./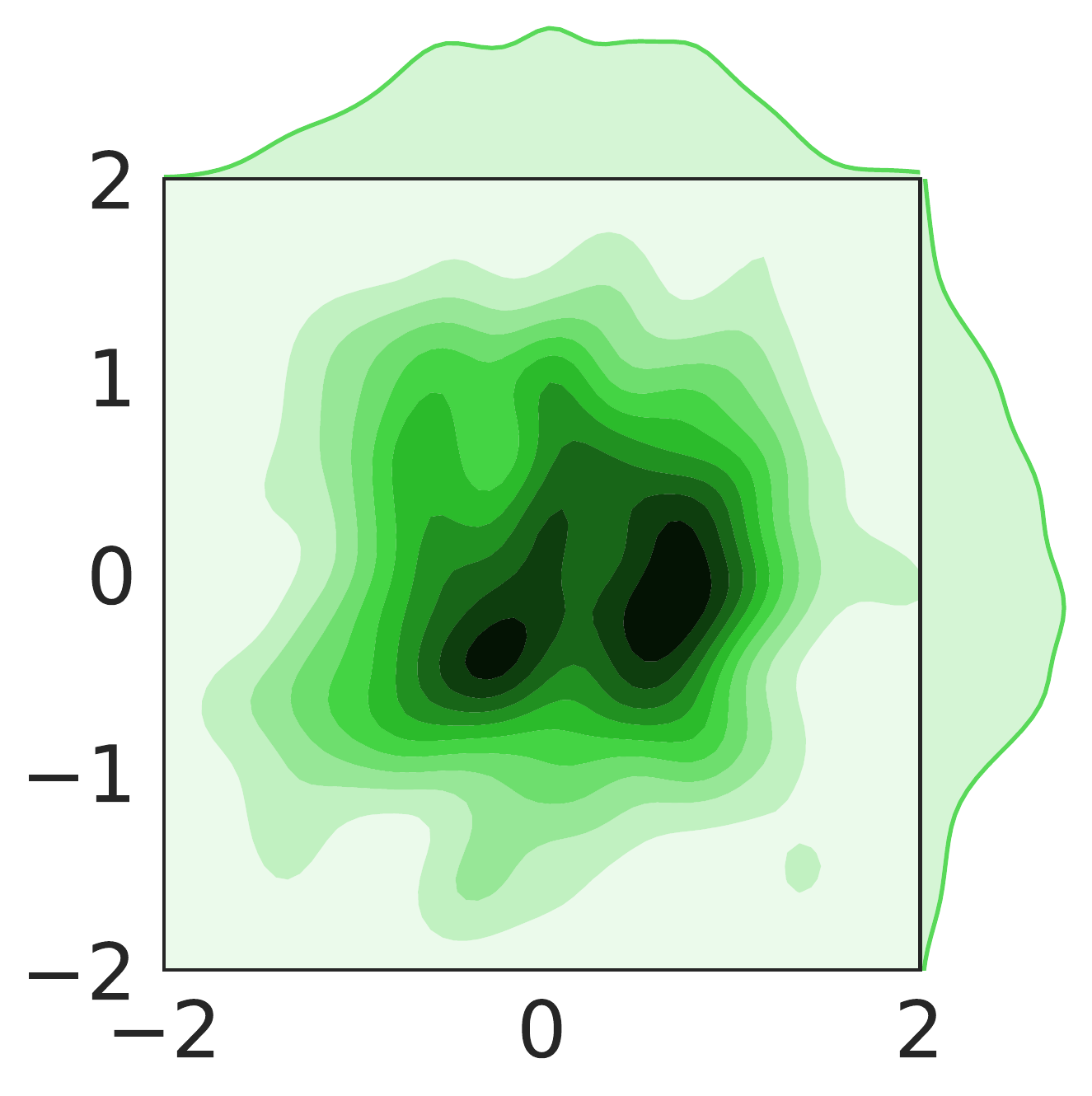}
\end{minipage}}
\caption{The density and marginal distributions of agents’ positions, (x, y), in 100 repeated episodes with different initialized states, generated from different learned policies upon Predator-prey. Experiments are conducted under the same random seed. The top of each sub-figure is drawn from state-action pairs of all agents while the below explains for each one. The KL term means the KL divergence between generated interactions (top figure) with the demonstrators.}
\label{fig:vis-dist-pp}
\end{figure}

\end{document}